\def\arxiv{}

\ifx\arxiv\undefied 

\documentclass[jmp]{revtex4-1}

\else 

\documentclass[12pt]{article}

\usepackage[verbose,tmargin=1in,bmargin=1in,lmargin=1in,rmargin=1in]{geometry}

\fi

\usepackage[utf8]{inputenc}

\usepackage[unicode=true,pdfusetitle,
 bookmarks=true,bookmarksnumbered=false,bookmarksopen=false,
 breaklinks=true,pdfborder={0 0 0},backref=false,colorlinks=true]{hyperref}
\hypersetup{linkcolor=blue,citecolor=blue,urlcolor=blue}

\usepackage{amsmath,amsthm,amsfonts}
\usepackage{xrf}
\usepackage{enumerate}
\usepackage{slashed}
\usepackage{xcolor}

\usepackage{mathabx}

\usepackage[normalem]{ulem}


\newcommand{\db}{}
\newcommand{\ed}{}
\newcommand{\old}[1]{}
\newcommand{\cross}[1]{}

\def\R{{\mathbb R}}  
\def\N{{\mathbb N}}  
\def\C{{\mathbb C}}  
\def\F{{\mathcal{F}}}

\renewcommand{\cal}[1]{{\mathcal #1}}
\renewcommand{\vec}[1]{{\mathbf #1}}

\newcommand{\im}{\operatorname{Im}}
\newcommand{\re}{\operatorname{Re}}

\newcommand{\sk}[1]{\left\langle #1\right\rangle}

\newcommand{\causal}{\operatorname{Causal}}

\newtheorem{fact}{Fact}[section]

\newtheorem{theorem}[fact]{Theorem}

\newtheorem{corollary}[fact]{Corollary}

\newtheorem{definition}[fact]{Definition}

\newtheorem{lemma}[fact]{Lemma}

\newtheorem{remark}[fact]{Remark}

\renewcommand{\vec}[1]{\mathbf{#1}}

\newcommand{\supp}{\operatorname{supp}}
\newcommand{\selfmaps}{\righttoleftarrow}

\newcommand{\sol}{0}
\newcommand{\M}{\mathcal{M}}
\newcommand{\cH}{\mathcal{H}}
\newcommand{\cC}{\mathcal{C}}

\newcommand{\CThree}{{\cC_{3}}}
\newcommand{\CM}{\cC_{\M}}
\newcommand{\CSigma}{{\cC_\Sigma}}
\newcommand{\Cs}{{\cC_\sol}}
\newcommand{\CA}{{\cC_A}}

\newcommand{\HThree}{{\cH_3}}
\newcommand{\HM}{\cH_\M}
\newcommand{\HSigma}{\cH_\Sigma}
\newcommand{\Hs}{\cH_\sol}
\newcommand{\HA}{\cH_A}


\begin{document}

\title{\textsc{Dirac Equation with External Potential and Initial Data on Cauchy
Surfaces}}

\ifx\arxiv\undefined 

\author{D.-A. Deckert}
\email{deckert@math.ucdavis.edu}
\affiliation{%
  Mathematical Department of the University of California Davis,
  One Shield Ave, CA 95616 Davis,
  USA
}

\author{F. Merkl}
\email{merkl@mathematik.uni-muenchen.de}
\affiliation{%
  Mathematisches Institut der Ludwig-Maximilians-Universit\"at M\"unchen,
  Theresienstr. 39, 80333 M\"unchen, Germany
}

\else 

\author{D.-A. Deckert\\
  \small Mathematical Department of the University of California Davis\\
  \small One Shield Ave, CA 95616 Davis, USA\\
  \small deckert@math.ucdavis.edu\\
  \\
  F. Merkl\\
  \small Mathematisches Institut der Ludwig-Maximilians-Universit\"at M\"unchen\\
  \small Theresienstr. 39, 80333 M\"unchen, Germany\\
  \small merkl@mathematik.uni-muenchen.de
}

\maketitle

\fi

\begin{abstract}
    With this paper we provide a mathematical review on the initial-value problem of the one-particle Dirac equation on
    space-like Cauchy hypersurfaces for compactly supported external potentials.
    We, first, discuss the physically relevant spaces of solutions and initial values
    in position and mass shell representation; second, review the action of the
    Poincaré group as well as gauge transformations on those spaces; third, introduce
    generalized Fourier transforms between those spaces and prove
    convenient Paley-Wiener- and Sobolev-type estimates. These generalized Fourier
    transforms immediately allow the construction of a unitary evolution
    operator for the free Dirac equation
    between the Hilbert spaces of square-integrable
    wave functions of two respective Cauchy surfaces. With a
    Picard-Lindel\"of argument this evolution map is generalized to the Dirac
    evolution including the external potential. For the latter we introduce a
    convenient interaction picture on Cauchy surfaces. 
    These tools immediately provide another proof of the well-known existence
    and uniqueness of classical solutions and their causal structure. 
\end{abstract}

\ifx\arxiv\undefined 

\maketitle

\fi

\tableofcontents

\section{Introduction and Motivation}

The one-particle Dirac equation plays a fundamental role in relativistic quantum
theory. It was \old{invented}\db introduced \ed by Dirac to describe the
dynamics of spin 1/2 fermions such as electrons. Although as a one-particle
equation alone its physical interpretation is difficult due to the occurrence of
negative energy states, its second-quantized form leads to the so-called
\emph{external field} or \emph{no-photon quantum electrodynamics}. \db In this
respect, \ed our interest in the solution theory for initial data on space-like
hypersurfaces is \db three-fold\ed:

\begin{enumerate}[(1)]
  
    \item  While \old{the}\db a \ed mathematical rigorous construction of the
        second-quantized time evolution in a fixed fermionic Fock space has only
        been carried out successfully in the case of zero space-like components
        of the external four-vector potential, \old{physicist nevertheless
        developed powerful techniques to extract predictions from it by
        means of formal manipulations of perturbation series}\db physicists have
        developed powerful recipes to extract predictions in terms of formal
        perturbation series from external field quantum electrodynamics despite
        the ill-defined nature of its equations of motion\db; e.g., see
        \cite{Dyson:06,Schweber:1961zz} for an overview\ed.  \old{These
        techniques work best in the static case, i.e., in the computation of
        the reaction of the quantum vacuum to static external fields, or in
        the scattering case, i.e., in the computation of transition
        probabilities between asymptotically free states when scattered at a
        prescribed potential.}Whether the resulting \old{formal}series do
        converge or in which regimes the corresponding corrections are small
        seems to be unknown.  This fact becomes particular \old{problematic}\db
        unsatisfactory \ed in light of next generation laser experiments such as
        planned to be conducted, e.g., at the
        Extreme Light Infrastructure \cite{ELI}. These
        will allow to probe quantum electrodynamics in
        strong-field regimes in which, first, the conventional perturbative
        techniques become questionable, and second, a mere scattering theoretic
        description \db of physical processes is not sufficient \ed and a
        dynamical description is needed; cf.\ \cite{2009EPJD...55..327D}.

        The obstacle in the construction of \old{such}a second-quantized time
        evolution was observed in the works
        \cite{shale:65,Ruijsenaars:77-1,Ruijsenaars:77-2}.  There it is shown
        that the one-particle time evolution can \old{only}be lifted to the Fock
        space if and only if the space-like components of the external
        four-vector potential are zero. One way out of this dilemma, as
        sketched in \cite{fierz-scharf-79}, is to implement the time evolution
        on time-varying Fock spaces.  Two different such constructions have been
        carried out in \cite{langmann:96,mickelsson:98} and
        \cite{Deckert:2010ii}. Both \old{constructions}involve additional
        degrees of freedom \db(such as the charge-renormalization) \ed which can
        be encoded in \old{a}\db the \ed choice of a phase depending on the
        external field\db; see \cite{Deckert:2010ii} for the identification of
        the dependence of these degrees of freedom on the external field\ed.
        The resulting second-quantized time evolution transports initial data
        from one equal-time hyperplane to another and gives rise to unique
        transition probabilities.  However, quantities \db such \ed as the
        charge-current density depend manifestly on \old{such a}\db this
        unidentified \ed phase.  \db In particular, this concerns the so-called
        phenomenon of \emph{vacuum polarization} but also the dynamical
        description of pair creation processes for which so far only a few
        rigorous treatments are available; see \cite{Gravejat:2013kr} for vacuum
        polarization in the Hartree-Fock approximation for static external
        sources and \cite{PicklDuerr:08-1} for adiabatic pair creation\ed.  The
        \old{remaining}\db involved \ed degrees of freedom can be reduced
        further by imposing the Bogolyubov causality condition
        \cite[(17.30)]{andSHIRKOVDmitryVasilevich:1959wy} or more or less
        equivalently by implementing second-quantized evolution maps between
        Fock spaces associated to space-like Cauchy hypersurfaces\db, which is
        the content of a follow-up work\ed.  During our study of the latter
        approach a detailed knowledge of the one-particle Dirac equation for
        initial data on space-like hypersurfaces proved to be essential.
        Collecting this knowledge is our main motivation for writing this paper.
        \old{The construction of second-quantized evolution maps between Fock
        associated to space-like Cauchy hypersurfaces will be the content of a
        follow-up paper.}\db 
        
        Another possible way out of the mentioned dilemma that deserves mentioning lies
        in a reformulation of quantum electrodynamics in terms of the so-called
        \emph{fermionic projector} for which we refer the reader to
        \cite{Finster:2006ue}.\ed

\end{enumerate}

\noindent Apart from this we have \old{also a more general
aim}\db two more general interests\ed:

\begin{enumerate}[(1)]
    \setcounter{enumi}{1}

  \item As \db well as \ed classical electrodynamics also quantum
      electrodynamics is a manifestly Lorentz covariant theory. However, its
      Lorentz covariance is often obscured in the presentation of the theory
      when its fundamental equations of motion are formulated exclusively on
      equal-time hyperplanes.  Non-trivial Lorentz boosts, however, tilt any
      equal-time hyperplane in space-time and, therefore, Lorentz covariance
      \db of quantum electrodynamics \ed is only apparent in the momentum
      representation of the free theory or \db in \ed the asymptotic description
      of \db the corresponding \ed scattering theory\old{ only}.  In order to
      make \old{the}Lorentz and gauge covariance explicit in a space-time
      representation \old{we follow the spirit of Tomonaga and Schwinger who
      worked exclusively with initial data on space-like Cauchy
      hypersurfaces}\db we provide necessary mathematical results for the Dirac
      equation that allow to work exclusively with initial data on space-like Cauchy
      hypersurfaces in the spirit of Tomonaga and Schwinger 
      \cite{Tomonaga:1946zz,Schwinger:1948wr}. In this regard our \ed efforts are
      intended to contribute towards a mathematically rigorous understanding of
      their works.  
  
  \item \old{In particular}Furthermore, by deforming a Cauchy surface in a small
      neighborhood of a point, the dynamics can be studied locally\db; compare
      the differential formulation of the equations of motion (25) in
      \cite{Tomonaga:1946zz}\ed.  Providing mathematical tools for such a
      study is \old{our second main}\db our third \ed motivation for this paper.
      \old{Due to the causal structure of solutions of the Dirac equation, which
      is also a topic of this paper, only local information about the
      external potential is needed.}\db One important observation is that
      \ed due to the causal structure of solutions of the Dirac equation
      only local information about the external potential is needed.  In
      consequence, the behavior of the external potential and of the Cauchy
      surfaces near ``infinity'' is irrelevant for \old{our purposes}finite
      times. \old{For technical convenience we therefore restrict}\db This
      justifies the technical convenience of restricting our study to
      compactly supported external potentials.

\end{enumerate}

There are several treatises of the Dirac equation in the classical literature
which discuss the Dirac equation on equal-time hyperplanes, most
prominently \cite{thaller:92}. 
The initial value problems for such hyperbolic systems of differential equations
was already treated in, e.g., \cite{john:82,taylor:11}.
Wave equations on Lorentzian manifolds including
the Dirac equation have been studied, e.g., in \cite{dimock:82}, \cite{baer:2007}, 
\cite{ringstroem:09}, \cite{Finster:2012}, and \cite{Derezinski:13}.
In particular, their results ensure existence and uniqueness of solutions 
and identify their causal structure. The main contribution of our work is the
introduction of general Fourier transform and corresponding
Paley-Wiener techniques 
that can be exploited in
flat space-time
to study solutions to the one-particle Dirac equation on Cauchy surfaces. 
A byproduct of these generalized Fourier transform methods yields yet another proof for
the well-posedness of the initial value problem of Dirac equation on Cauchy
surfaces. Therefore we used this opportunity to compile our results 
in the form of a self-contained review given in Section~\ref{sec:review} that
ranges from general assertions about the initial value problem to
a detailed analysis of solutions.
To increase readability the more technical proofs are provided separately in 
Section~\ref{sec:proofs}.

\paragraph{Acknowledgment.} The authors cordially thank Wojciech Dybalski and
Felix Finster for
their helpful and detailed suggestions on the classical and contemporary literature of the initial value problem of
hyperbolic systems of differential equations. 

\paragraph{Notation.} Positive constants are denoted by $C_1, C_2, C_3,\ldots$
They keep their value throughout the whole article. Any fixed quantity
a constant depends on is displayed at least once when the constant is introduced.

\section{The One-Particle Dirac Equation}\label{sec:review}

The one-particle Dirac equation for an electron of mass $m>0$ is given by
\begin{align}\label{eq:dirac-equation}
  (i\slashed\partial-\slashed A) \psi=m\psi,
\end{align}
where the external potential
\begin{align}\label{def-A}
  A=(A_{\mu})_{\mu=0,1,2,3}\in\cC^{\infty}_{c}(\mathbb R^{4},\mathbb R^{4}),
\end{align}
is assumed to be smooth and compactly supported.
In our notation the elementary charge $e$ \db(having a negative sign in
the case of an electron) \ed is already included in $A$ \db and units are chosen
such that $\hbar=1$ and $c=1$\ed.  
Moreover, the elements of $\R^4$ are represented by 
$x=(x^0,x^1,x^2,x^3)=(x^0,\vec x)=x^\mu e_\mu$,
where $e_\mu$ denotes the canonical basis vectors in $\R^4$.
We endow $\R^4$ with the metric tensor
$g=(g_{\mu\nu})_{\mu,\nu=0,1,2,3}=\operatorname{diag}(1,-1,-1,-1)$.
Raising and lowering indices is done w.r.t.\ this metric tensor. We employ
Einstein's summation convention, Feynman's slash-notation 
$\slashed{\partial}=\gamma^\mu \partial_\mu$, $\slashed A
= \gamma^\mu A_\mu$, and use the standard
representation of the Dirac matrices $\gamma^\mu\in\C^{4\times 4}$ that fulfill
$\{\gamma^\mu,\gamma^\nu\}=2g^{\mu\nu}$.

Before touching the question about existence of solutions to the Dirac equation
(\ref{eq:dirac-equation}) in Section~\ref{sec:main-results} we introduce
and study the physically relevant
classes of solutions and initial data in space-time and energy-momentum
representation.

\subsection{Relevant Spaces in Space-Time Representation}
\label{sec:spaces-cauchy-sol}

We now define the classes of possible solutions
to the Dirac equation (\ref{eq:dirac-equation}) and initial data 
in space-time representation considered in this work.

\begin{definition}[Classical Solutions in Space-Time Representation]\label{def:CA}
  Let $\CA$ denote the space of all smooth solutions
$\psi\in C^\infty(\R^4,\C^4)$ of the 
Dirac equation~(\ref{eq:dirac-equation}) 
which have a spatially compact causal support in the following sense:
There is a compact set $K\subset\R^4$ such that
\begin{equation}
\operatorname{supp}\psi\subseteq K+\causal,
\end{equation}
where $\causal:=\{x\in\R^4|\;x_\mu x^\mu\ge 0\}$
denotes the set of all causal vectors. 
\end{definition}

One way to build a solution theory for the Dirac equation
(\ref{eq:dirac-equation}) is to
generate solutions in $\CA$ from initial data prescribed 
on \emph{Cauchy surfaces} which, for our 
purposes, are defined as follows:

\begin{definition}[Cauchy Surfaces]\label{def:cauchy-surface}
We define a Cauchy surface $\Sigma$ in $\R^4$ to be a smooth, 3-dimensional
submanifold of $\R^4$ that fulfills the following three conditions:
\begin{enumerate}
 \item Every \db inextensible, \ed two-sided, time- or light-like, continuous path
in $\R^4$ intersects $\Sigma$ in a unique point.
 \item For every $x\in\Sigma$, the tangential space $T_x\Sigma$ is
space-like.
 \item The tangential spaces to $\Sigma$ are bounded away from light-like
   directions in the following sense: The only light-like accumulation point
   of $\bigcup_{x\in\Sigma}T_{x}\Sigma$ is zero.
\end{enumerate} 
\end{definition}
The \old{main}differences compared to\db, e.g., \ed the definition given in 
\cite[Section 8.3]{Wald:1984un}, are the smoothness condition 
as well as (b) and (c). Condition (c) is not essential, 
but convenient to use as we are mainly interested in the local and causal 
properties
of the Dirac evolution.
\begin{remark}
The simplest Cauchy surface is the time-zero hyperplane
\begin{align}
  \Sigma^0:=\left\{x\in\mathbb R^{4}\;|\;x^{0}=0\right\}.
\end{align}
Note that by definition, light-like vectors are not allowed as 
tangential vectors to $\Sigma$. For example
\begin{equation}
\{(\arctan x^1,\vec x)|\;\vec x=(x^1,x^2,x^3)\in\R^3\}
\end{equation}
is not a Cauchy surface according to our definition, as condition (b) is
violated. Moreover,
\begin{equation}
\{(\sqrt{1+\vec x^{\,2}},\vec x)|\;\vec x=(x^1,x^2,x^3)\in\R^3\}
\end{equation}
is not a Cauchy surface either, as condition (a) is violated.
In coordinates, every Cauchy surface $\Sigma$ can be parametrized as 
\begin{equation}
\label{eq:parametrize Sigma}
\Sigma=\{(t_\Sigma(\vec x),\vec x):\;\vec x\in\R^3\}
\end{equation}
with a smooth function $t_\Sigma:\R^3\to\R$ that fulfills
$|\nabla t_\Sigma(\vec x)|<1$ for every $\vec x\in \R^3$.
Note that this condition is necessary but not sufficient for $\Sigma$ 
satisfying conditions (a) and (b). 
This is illustrated by the second counterexample above.
Condition (c) guarantees that even $\sup_{\vec x\in\R^3}|\nabla t_\Sigma(\vec x)|<1$
holds.
\end{remark}

In order to define the spaces of initial data it will be convenient to introduce
the following notation. The standard volume form over $\R^4$ is denoted by
$d^4x=dx^0\,dx^1\,dx^2\,dx^3$; the product of forms is understood as wedge
product. The symbol $d^3x$
means the 3-form $d^3x=dx^1\,dx^2\,dx^3$ on $\R^4$.
Contraction of a form $\omega$ with a vector $v$ is denoted
by $i_v(\omega)$.
The notation $i_v(\omega)$ is also used for the spinor matrix
valued vector $\gamma=(\gamma^0,\gamma^1,\gamma^2,\gamma^3)=\gamma^\mu e_\mu$:
\begin{equation}
i_\gamma(d^4x)=\gamma^\mu i_{e_\mu}(d^4x).
\end{equation}
Furthermore, for a $4$-spinor $\psi\in\C^4$ (viewed as column vector),  
$\overline{\psi}$ stands for the row vector $\psi^*\gamma^0$,
where ${}^*$ denotes \db hermitian \ed conjugation.

\begin{definition}[Spaces of Initial Data in Space-Time Representation]
\label{def: CSigma}
\label{def:HSigma}
For any Cauchy surface $\Sigma$ we define the vector space 
\begin{align}\label{eq:CSigma_dense}
  \CSigma:=C^\infty_c(\Sigma,\C^4).
\end{align}
For a given Cauchy surface $\Sigma$, 
let $\HSigma=L^2(\Sigma,\C^4)$ denote the  vector space of
all 4-spinor valued measurable functions $\phi:\Sigma\to \C^4$ (modulo changes on null sets)
having a finite norm
$\|\phi\|=\sqrt{\sk{\phi,\phi}}<\infty$ w.r.t.\ the scalar
product 
\begin{equation}
\label{eq:scalar-product}
\sk{\phi,\psi}
=\int_{\Sigma}\overline{\phi(x)}i_\gamma(d^4x)\psi(x).
\end{equation}
\end{definition}

To see that the pairing in (\ref{eq:scalar-product}) is a scalar product 
on $\CSigma\subset \HSigma$, note that for all future-directed
time-like vectors $n$, the $4\times 4$ matrix
\begin{align}\gamma^0\slashed{n}=(\gamma^0\slashed{n})^*\end{align}
is positive definite. Furthermore, condition (c) of Definition \ref{def:cauchy-surface}
ensures for all Cauchy surfaces $\Sigma$
\begin{align}
  \sup_{x\in\Sigma}\left\Vert\left(\gamma^0\slashed n(x)\right)^{-1}
  \right\Vert < \infty, 
  \label{eq:condition-c}
\end{align}
where the symbol $\|{\cdot}\|$ denotes the appropriate matrix
norm. In the special case $\Sigma^0=\{x\in\R^4|\;x^0=0\}$ the scalar product on 
$\cC_{\Sigma^0}$
reduces to the standard one:
\begin{equation}
\sk{\phi,\psi}=
\int_{\Sigma^0}\phi(x)^*\psi(x)\,d^3x.
\end{equation}
Note that $\CSigma$ is dense in $\HSigma$ w.r.t.\ the scalar product
(\ref{eq:scalar-product}).

\begin{remark}
For $x\in\Sigma$, the restriction of the \db spinor matrix \ed valued 3-form
$i_\gamma(d^4x)$ to the tangential space $T_x\Sigma$ is given by
\begin{equation}
\label{eq: repr igammad4x}
 i_\gamma(d^4x)=
\slashed{n}(x)i_n(d^4x)=
\left(
\gamma^0+\sum_{\mu=1}^3 \gamma^\mu \frac{\partial t_\Sigma(\vec{x})}{\partial x^\mu}
\right)d^3x
\text{ on }(T_x\Sigma)^3,
\end{equation}
where $n$ denotes the future-directed unit normal vector field to $\Sigma$
in the Minkowski sense. In physics one often uses the notation $d\sigma(x)=
i_n(d^4x)$.
\end{remark}

Finally, the space of solutions $\CA$ can be extended to a Hilbert space:

\begin{definition}[Hilbert Space of Solutions in Space-Time Representation]
We endow $\CA$ with the scalar product
\begin{equation}\label{eq:scalar-product-sol}
\sk{\phi,\psi}=
\int_{\Sigma}\overline{\phi(x)}i_\gamma(d^4x)\psi(x),
\end{equation}
where $\Sigma$ denotes any Cauchy surface and define
\begin{align}
  \HA := \operatorname{completion}(\CA) \label{eq:Cs_dense},
\end{align}
which denotes the (abstract) completion of $\CA$ w.r.t.\ 
the norm $\|\psi\|=\sqrt{\sk{\psi,\psi}}$. 
In case the external potential $A$ is zero we will use the notation
 $\Cs= \CA|_{A=0}$.
\end{definition}

Note that the scalar product (\ref{eq:scalar-product-sol}) is well-defined
for $\phi,\psi\in\CA$ because the support
of the form $\overline{\phi(x)}i_\gamma(d^4x)\psi(x)$
intersects $\Sigma$ in a compact set, and because
the integral does not depend on the choice of the Cauchy surface $\Sigma$.
This follows from Stokes' theorem \db as \ed the 3-form
$\overline{\phi(x)}i_\gamma(d^4x)\psi(x)$ is closed:\ed
\begin{align}
&d[\overline{\phi(x)}i_\gamma(d^4x)\psi(x)]=
\partial_\mu(\overline{\phi(x)}\gamma^\mu\psi(x))\,d^4x
\nonumber
\\&=
(\partial_\mu\overline{\phi(x)})\gamma^\mu\psi(x)\,d^4x
+\overline{\phi(x)}\gamma^\mu\partial_\mu\psi(x)\,d^4x
\nonumber
\\&=
\overline{\slashed{\partial}\phi(x)}\psi(x)\,d^4x
+\overline{\phi(x)}\slashed{\partial}\psi(x)\,d^4x
\nonumber
\\&=
i\overline{(m+\slashed A)\phi(x)}\psi(x)\,d^4x
-i\overline{\phi(x)}(m+\slashed A)\psi(x)\,d^4x=0.
\label{eq:stokes-argument}
\end{align}

\subsection{Relevant Spaces in Energy-Momentum Representation}
\label{sec:spaces-momentum}

In the same way the momentum representation, i.e., the standard Fourier
transform, provides a fundamental tool for the study of the non-relativistic
Schr\"odinger equation on equal-time hyperplanes, the energy-momentum
representation of solutions in $\Hs=\cH_{A=0}$ and the corresponding
generalized Fourier transform will facilitate the study of the Dirac equation
on Cauchy surfaces.  For this purpose we introduce the mass shell
\begin{align}
    \M=\{p\in\R^4|\;p_\mu p^\mu=m^2\}.
\end{align}
The mass shell $\M$ has two connected components which are denoted by
\begin{equation}
\M_+=\{p\in\M|\;p^0>0\},\quad
\M_-=\{p\in\M|\;p^0<0\}.
\end{equation}
We endow $\M$ with the orientation that makes the projection
$\M\to\R^3$, $(p^0,\vec p)\mapsto \vec p$ positively oriented.
Restricted to $\M_\pm$, this projection has the inverses
\begin{equation}
\R^3\ni \vec p\mapsto p_\pm(\vec p)=(\pm E(\vec p),\vec p)\in\M_\pm,
\quad\text{where} 
\quad
E(\vec p):=\sqrt{\vec p^{\,2}+m^2}.
\end{equation}

The free Dirac equation in momentum representation reads $\slashed
p\psi=m\psi$ for $p\in\M$. The corresponding solution space is given by
\begin{equation}
{\cal D}_p=\{\psi \in\C^4|\;
\slashed{p}\psi=m\psi\}.
\end{equation} 
We therefore introduce the complex vector bundle of rank 2 over $\M$
\begin{equation}
{\cal D}:=\{(p,\psi)|\;p\in\M,\;\psi\in {\cal D}_p\}
\end{equation}
which we call the Dirac bundle.
For $p\in\M$, the orthogonal projection
from $\C^4$ onto ${\cal D}_p$ w.r.t.\ the standard scalar product
is given by the matrix
\begin{equation}
\label{eq:Pp}
P(p)=\frac{\slashed p+m}{2p^0}\gamma^0.
\end{equation}
For $\vec p\in\R^3$, the vector spaces ${\cal D}_{p_+(\vec p)}$
and ${\cal D}_{p_-(\vec p)}$ are orthogonal complements to each other \db 
so that
\begin{equation} 
\label{eq: P+P-}
P_+ (\vec p)+P_- (\vec p)=1\in\C^{4\times 4},
\end{equation}
where we used the short-hand notation
\begin{equation}
\label{eq: P+-}
P_\pm (\vec p)=P(p_\pm(\vec p)).
\end{equation} \ed

With the application of Paley-Wiener arguments in mind to study support
properties of functions, we also introduce a
complexified version $\M_\C$.  
We define
\begin{equation}
\M_{\C}=\{p\in\C^4|\;p_\mu p^\mu=m^2\}.
\end{equation}
Note that $\M_{\C}$ is a connected submanifold of $\C^4$ 
of complex dimension $3$.
We use the following notations. The standard volume
form over $\R^3$ is denoted by $d^3\vec p=dp^1\,dp^2\,dp^3$ and one
has
\begin{equation}
i_p(d^4p)=
p^0\,dp^1\,dp^2\,dp^3-p^1\,dp^0\,dp^2\,dp^3+p^2\,dp^0\,dp^1\,dp^3
-p^3\,dp^0\,dp^1\,dp^2.
\end{equation}
For $p\in\M$, the restriction of this form to the tangential space
$T_p\M$ is the Lorentz invariant volume form on the mass shell
\begin{equation}\label{eq:ipd4p}
i_p(d^4 p)=\frac{m^2}{p^0}dp^1\,dp^2\,dp^3=
\frac{m^2}{p^0}d^3p\text{ on } (T_p\M)^3.
\end{equation}
The
\emph{euclidean} norm of $p\in\C^{d}$ for $d\in\N$ is denoted by $|p|$.
We introduce the following spaces:
\begin{definition}[Solutions in Energy-Momentum Representation]
    Let $\HM=L^2(\M,{\cal D})$ denote the space of all
square integrable sections $\psi$ in the Dirac bundle.
This means that $\HM$ consists 
of all measurable functions $\psi:\M\to \C^4$ 
(modulo changes on null sets)
that fulfill almost everywhere
\begin{align}\slashed{p}\psi(p)=m\psi(p)\end{align}
for $p\in \M$ and
$\|\psi\|=\sqrt{\sk{\psi,\psi}}<\infty$,
with the scalar product
\begin{equation}\label{eq:sk_M}
\sk{\phi,\psi}=
\int_\M\overline{\phi(p)}\psi(p)\,\frac{i_p(d^4 p)}{m}.
\end{equation}
Let $\CM\subseteq \HM$ denote the subspace of all functions $\psi\in\HM$
that have a holomorphic continuation
$\Psi:\M_{\C}\to\C^4$ fulfilling the bound
\begin{equation}
\label{bound 2}
\exists\alpha>0\;\forall n\in\N:\;\|\psi\|_{\M,\alpha, n}<\infty
\end{equation}
with
\begin{equation}
\label{bound 2a}
\|\psi\|_{\M,\alpha, n}:=
\sup_{p\in\M_{\C}} |p|^{n-1}e^{-\alpha |\im \vec p|}|\Psi(p)|
.
\end{equation}
\end{definition}
It is shown in Corollary \ref{cor:CM_dense} below that $\CM$ is dense in $\HM$.
It turns out that the norms (\ref{bound 2a}) involving \db 
only the \ed 3-vector part $\im\vec p$, and not the 4-vector $\im p$, in the
exponent are more convenient to use\db, and regarding Lorentz invariance \ed
the particular choice makes no difference as we shall see in
Section~\ref{sec:poincare} below.

Note that the holomorphic continuation $\Psi$ is uniquely determined by
$\psi\in \CM$. Even more, it is already determined by its restriction to
any non-empty relatively to $\M$ open 
subset $U\subseteq \M$. In particular, the restriction of $\psi$
to $\M_+$ already determines its values on $\M_-$ and vice versa.
Furthermore, the condition $\psi(p)\in {\cal D}_p$ for all $p\in\M$ 
extends analytically to
\begin{align}
\slashed{p}\Psi(p)=m\Psi(p)\qquad \text{for all }p\in\M_{\C}.
\end{align}
Note further that the inner product (\ref{eq:sk_M}) 
is positive definite. This can be seen as follows.
For $p\in \M$, the facts $\phi(p)\in{\cal D}_p$ and $\gamma^0\slashed{p}=
\slashed{p}^*\gamma^0$ imply 
$\overline{\phi(p)}\slashed{p}=m\overline{\phi(p)}$ and thus
\begin{equation}
m\overline{\phi(p)}\gamma^\mu\psi(p)=
\overline{\phi(p)}\gamma^\mu\slashed{p}\psi(p)=
\overline{\phi(p)}(2p^\mu-\slashed{p}\gamma^\mu)\psi(p)
=\overline{\phi(p)}(2p^\mu-m\gamma^\mu)\psi(p).
\end{equation}
We conclude
\begin{equation}
\label{gamma to p}
m\overline{\phi(p)}\gamma^\mu\psi(p)=
p^\mu\overline{\phi(p)}\psi(p).
\end{equation}
Using (\ref{eq:ipd4p}), we get
\begin{equation}
\label{skalarprodukt 2}
\sk{\phi,\psi}=
\int_\M\overline{\phi(p)}\frac{p^0}{p^0}\psi(p)\,\frac{i_p(d^4 p)}{m}
=
\int_\M\overline{\phi(p)}\gamma^0\psi(p)\,\frac{i_p(d^4 p)}{p^0}
=m^2\int_\M\phi(p)^*\psi(p)\,\frac{d^3p}{(p^0)^2},
\end{equation}
which is positive for $\phi=\psi$ unless $\psi=0$ almost everywhere.\\

As it turns out the causal structure of solutions to the Dirac equation can be seen to emerge
from the following geometric property of $\M_\C$, proven in Lemma~\ref{lemma:
mass shell geometry} in the appendix:
\begin{align}
    |\im p^0|\le |\im \vec p| \qquad \text{for all }(p^0,\vec p)\in\M_\C.
\end{align}

Finally, it will be convenient to introduce a Hilbert space $\HThree$ for the
$3$-momentum representation as it allows to fall back on many classical results
about the standard Fourier transform.  In view of the Paley-Wiener theorem we
define also a subspace $\CThree\subset \HThree$ consisting of certain
real-analytic functions. 

\begin{definition}
\db We endow \ed $\HThree:=L^2(\R^3,\C^4)$ 
with the standard scalar product
\begin{equation}
\sk{\phi,\psi}=
\int_{\R^3}\phi^*(\vec p)\psi(\vec p)\,d^3\vec p
\end{equation}	
and the corresponding norm $\|\phi\|=\sqrt{\sk{\phi,\phi}}$.
Let $\CThree$ be the subspace of $\HThree$ consisting of all functions $\phi:\R^3\to\C^4$
that have a holomorphic continuation $\Phi:\C^3\to\C^4$
that fulfills the bound
\begin{equation}
\label{bound 1}
\exists\alpha>0\;\forall n\in\N_0:\;
\|\phi\|_{3,\alpha,n}<\infty,
\end{equation}
where we set (using the notation $a\vee b = \max\{a,b\}$)
\begin{equation}
\label{bound 1a}
\|\phi\|_{3,\alpha,n}:=\sup_{\vec p\in\C^3} (m\vee |\vec p|)^ne^{-\alpha |\im \vec p|}
|\Phi(\vec p)|.
\end{equation}
This is well-defined \db as \ed $\Phi$ is uniquely determined by $\phi$.
\end{definition}

\begin{lemma}\label{lem:CThree_dense}
$\CThree$ is dense in $\HThree$.  
\end{lemma}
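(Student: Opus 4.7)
The plan is to invoke the easy direction of the classical Paley-Wiener theorem. Since $C_c^\infty(\R^3,\C^4)$ is dense in $L^2(\R^3,\C^4)=\HThree$ and the Fourier transform
\begin{equation*}
\mathcal{F}: \HThree \to \HThree,\qquad (\mathcal{F}f)(\vec p) = \int_{\R^3} f(\vec x)\,e^{-i\vec p\cdot\vec x}\,d^3\vec x
\end{equation*}
is a unitary isomorphism, its image $\mathcal{F}(C_c^\infty(\R^3,\C^4))$ is dense in $\HThree$. It therefore suffices to show the inclusion $\mathcal{F}(C_c^\infty(\R^3,\C^4))\subseteq \CThree$; density of $\CThree$ then follows immediately.

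To establish this inclusion I would fix $f\in C_c^\infty(\R^3,\C^4)$ with $\supp f \subseteq \{\vec x\in\R^3:\;|\vec x|\le \alpha\}$ for some $\alpha>0$ and set
\begin{equation*}
\Phi(\vec z) := \int_{\R^3} f(\vec x)\,e^{-i\vec z\cdot \vec x}\,d^3\vec x, \qquad \vec z\in\C^3,
\end{equation*}
where $\vec z\cdot\vec x := \sum_{j=1}^3 z_j x_j$ is the bilinear (not Hermitian) extension of the real dot product. Standard dominated-convergence and Morera arguments show that $\Phi$ is entire and restricts to $\mathcal{F}f$ on $\R^3$. Writing $\vec z=\vec p+i\vec q$ with $\vec p,\vec q\in\R^3$, the pointwise bound $|e^{-i\vec z\cdot\vec x}|=e^{\vec q\cdot\vec x}\le e^{\alpha|\vec q|}$ on $\supp f$ yields $|\Phi(\vec z)|\le \|f\|_{L^1}\,e^{\alpha|\im\vec z|}$. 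Repeated integration by parts in $\vec x$ converts each factor of $z_j$ occurring in $z^\beta\Phi(\vec z)$ into $-i\partial_{x_j}$ acting on $f$, and summing over multi-indices with $|\beta|=n$ (there are only finitely many) produces the polynomial-decay estimate $|\vec z|^n |\Phi(\vec z)|\le C_n\,e^{\alpha|\im\vec z|}$ for every $n\in\N_0$.

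It then remains only to translate these two estimates into the slightly non-standard weight $(m\vee|\vec z|)^n$ appearing in the definition of $\|\cdot\|_{3,\alpha,n}$: on $\{|\vec z|\le m\}$ the uniform bound together with $(m\vee|\vec z|)^n=m^n$ suffices, while on $\{|\vec z|\ge m\}$ the polynomial bound gives the required estimate since there $(m\vee|\vec z|)^n=|\vec z|^n$. Combining both cases yields $\|\mathcal{F}f\|_{3,\alpha,n}<\infty$ for every $n\in\N_0$, i.e.\ $\mathcal{F}f\in\CThree$, which finishes the argument. I do not anticipate any genuine obstacle: this is essentially the easy half of Paley-Wiener, and the only real work is the minor bookkeeping just described to accommodate the $(m\vee|\vec p|)^n$ weight in place of the more familiar $(1+|\vec p|)^n$.
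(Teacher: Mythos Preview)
Your proposal is correct and follows essentially the same route as the paper: both use that the Fourier transform is unitary, that $C_c^\infty(\R^3,\C^4)$ is dense in $L^2$, and that Paley--Wiener identifies Fourier transforms of compactly supported smooth functions with elements of $\CThree$. The paper simply cites the full Paley--Wiener theorem (the bijection $\mathcal{F}[C_c^\infty]=\CThree$) as a black box, whereas you observe that only the easy inclusion $\mathcal{F}[C_c^\infty]\subseteq\CThree$ is needed and you sketch it directly; this is a slight economy, not a different argument.
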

\begin{proof}
By the classical Paley-Wiener Theorem \cite[Theorem IX.11]{reed_methods_1981} 
the Fourier transform $L^{2}(\R^{3},\C^{4})\to \HThree $ maps 
$\cC_c^{\infty}(\R^{3},\C^{4})$ bijectively onto $\CThree$. 
Because the Fourier transform is unitary and $\cC_c^{\infty}(\R^{3},\C^{4})$ 
is dense in $L^{2}(\R^{3},\C^{4})$ the claim follows.
\end{proof}

\subsection{Action of the Poincar\'e Group}\label{sec:poincare}

In our later analysis, Poincar\'e transformations
will prove to be very helpful in computations. First, we introduce the Lorentz
transformations on 4-spinors $\psi\in \C^4$ and on 
space-time points $x\in\R^4$. They
are specified by a pair $(S,\Lambda)$
with a spinor matrix $S\in\C^{4\times 4}$
and a matrix
$\Lambda=({\Lambda^\mu}_\nu)_{\mu,\nu=0,1,2,3}\in\R^{4\times 4}$
that fulfill
\begin{align}
\label{lorentz1}
{\Lambda^\mu}_\sigma g_{\mu\nu}{\Lambda^\nu}_\tau=g_{\sigma\tau},
\qquad
S^*\gamma^0 S=\gamma^0
\end{align}
and are related by
\begin{equation}
\label{relation Lambda S}
{\Lambda^\mu}_\nu\gamma^\nu=S^{-1}\gamma^\mu S.
\end{equation}
Space-time points $x\in\R^4$ and Dirac spinor fields $\psi:\R^{4}\to\C^4$ 
are transformed by 
\begin{align}
{x'}^\mu&={\Lambda^\mu}_\nu x^\nu,\\
\psi'(x')&=S\psi(x).
\end{align}
Let 
\begin{equation}
T=\frac{1}{\sqrt{2}}\left(\begin{array}{cc}1&1\\1&-1\end{array}\right)=T^{-1}
\in\C^{4\times 4}
\end{equation}
be the unitary transformation matrix from the standard representation
of Dirac spinors to the Weyl spinor representation; see \cite[Appendix 1.A]{thaller:92}.
Transformations associated to the proper, orthochronous Lorentz group
$\operatorname{SO}^\uparrow(1,3)$ are parametrized by 
\begin{equation}\label{eq:lorentz-trafo}
\operatorname{SL}(2,\C)\ni M\mapsto (S(M),\Lambda(M))\in 
\operatorname{GL}(4,\C)\times\operatorname{SO}^\uparrow(1,3),
\end{equation}
where
\begin{equation}
S(M)=
T^{-1}\left(\begin{array}{cc}M&0\\0&(M^*)^{-1}\end{array}\right)T
\in \C^{4\times 4},
\end{equation}
and
$\Lambda(M)\in\R^{4\times 4}$ is the unique matrix 
such that equation~(\ref{relation Lambda S})
holds for $S=S(M)$ and $\Lambda=\Lambda(M)$.
Note that the map (\ref{eq:lorentz-trafo}) is a group homomorphism.

Let $m_\Lambda:\R^4\to\R^4$, $x\mapsto x'=\Lambda x$, denote
multiplication with $\Lambda$.
Using equation~(\ref{relation Lambda S}), 
we observe the following Lorentz covariance 
relations for pull-back w.r.t.\ this map: 
\begin{align}
\label{eq:invarianz-i_gamma}
S^{-1}[m_\Lambda^*i_{\gamma}(d^4x')]S
&=
S^{-1}\gamma^\mu S\,m_\Lambda^*i_{e_\mu}(d^4x')
={\Lambda^\mu}_\nu\gamma^\nu 
i_{\Lambda^{-1}e_\mu}(d^4x)
=\gamma^\nu\,
i_{e_\nu}(d^4x)=i_{\gamma}(d^4x),
\\ 
\label{eq: invarianz-ipd4p}
m_\Lambda^*i_{p'}(d^4p')
&=i_{p}(d^4p).
\end{align}
Poincar\'e transformations act in a natural way on the spaces
defined in this section:
\begin{definition}[Translations and Lorentz Transformations]
\label{def: Poincar'e transformations}
  Let $\Sigma$ be a Cauchy surface and $y\in\R^4$. We define the translation maps:
\begin{align}
\label{eq: transl Sigma}
&T_\Sigma^{-y}:\CSigma\to\cC_{\Sigma-y},
&T_\Sigma^{-y}\psi(x)=\psi(x+y)
&\qquad\text{ for } x\in\Sigma-y;
\\
\label{eq: transl s}
&T_{A}^{-y}:\CA\to\cC_{A(\cdot+y)},
&T_{A}^{-y}\psi(x)=\psi(x+y)
&\qquad\text{ for } x\in\R^4;
\\
\label{eq: transl M}
&T_{\mathcal M}^{-y}:\CM\to\CM,
&T_{\mathcal M}^{-y}\phi(p)=e^{-ipy}\phi(p)
&\qquad\text{ for } p\in\mathcal{M}.
\end{align}
Furthermore, for any $M\in \operatorname{SL}(2,\C)$ associated with a proper,
orthochronous Lorentz transformation $\Lambda=\Lambda(M)$ and a spinor
transformation $S=S(M)$ as in (\ref{eq:lorentz-trafo}) we define the maps
\begin{align}
\label{eq: Lorentz Sigma}
&L_\Sigma^{(S,\Lambda)}:\CSigma\to\cC_{\Lambda\Sigma},
&L_\Sigma^{(S,\Lambda)}\psi(x)=S\psi(\Lambda^{-1} x)
&\qquad\text{ for } x\in\Lambda\Sigma;
\\
\label{eq: Lorentz s}
&L_{A}^{(S,\Lambda)}:\CA\to\cC_{\Lambda A(\Lambda^{-1}\cdot)},
&L_{A}^{(S,\Lambda)}\psi(x)=S\psi(\Lambda^{-1}x)
&\qquad\text{ for } x\in\R^4;
\\
\label{eq: Lorentz M}
&L_{\mathcal M}^{(S,\Lambda)}:\CM\to\CM,\quad
&L_{\mathcal M}^{(S,\Lambda)}\phi(p)=S\phi(\Lambda^{-1}p)
&\qquad\text{ for } p\in\mathcal{M}.
\end{align}
\end{definition} 

\begin{lemma}\label{lemma: Poincar'e transformations}
The six maps specified in Definition \ref{def: Poincar'e transformations}
are well-defined. More precisely, they take their values
in the spaces 
specified in formulas (\ref{eq: transl Sigma})--(\ref{eq: Lorentz M}).
They extend to unitary maps, also denoted by 
$T_\Sigma^{-y}:\HSigma\to\mathcal{H}_{\Sigma-y}$, 
$T_{A}^{-y}:\HA\to\mathcal{H}_{A(\cdot+y)}$,
$T_{\mathcal M}^{-y}:\HM\to\HM$,
$L_\Sigma^{(S,\Lambda)}:\HSigma\to\mathcal{H}_{\Lambda\Sigma}$,
$L_{A}^{(S,\Lambda)}:\HA\to\mathcal{H}_{\Lambda A(\Lambda^{-1}\cdot)}$, and
$L_{\mathcal M}^{(S,\Lambda)}:\HM\to\HM$, respectively.
\end{lemma}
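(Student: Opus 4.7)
The strategy is a three-step pattern applied six times: (i) show the formula takes the dense subspace into the claimed codomain, (ii) show the map is an isometry on that subspace, (iii) appeal to density (already established, or proved via Lemma~\ref{lem:CThree_dense} and the yet-to-appear Corollary~\ref{cor:CM_dense}) plus the obvious inverse (negation of $y$, or inverse $M$) to extend to a unitary.

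For well-definedness, the translation and Lorentz actions on $\CSigma$ are trivial (pre-composition with a diffeomorphism preserves smoothness and compact support, and $S$ acts pointwise). For $\CA$ the chain rule gives $(i\slashed{\partial}-\slashed{A(\cdot+y)})(T_A^{-y}\psi)=m(T_A^{-y}\psi)$, and the relation (\ref{relation Lambda S}) combined with the chain rule shows $L_A^{(S,\Lambda)}\psi$ solves Dirac with potential $\Lambda A(\Lambda^{-1}\cdot)$; the spatially compact causal support is preserved because $\causal$ is both translation-invariant and Lorentz-invariant. For $T_{\M}^{-y}$, the identity $\slashed p (e^{-ipy}\phi)=m(e^{-ipy}\phi)$ is pointwise; for $L_{\M}^{(S,\Lambda)}$, the crucial step is $\slashed p S\phi(\Lambda^{-1}p)= S\slashed{\Lambda^{-1}p}\phi(\Lambda^{-1}p)= mS\phi(\Lambda^{-1}p)$, using (\ref{relation Lambda S}).

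For isometry I would exploit two algebraic facts simultaneously: the Lorentz covariance of the spinor-valued volume form (\ref{eq:invarianz-i_gamma}) (together with translation invariance of $i_\gamma(d^4x)$), and the identity $S^*\gamma^0=\gamma^0 S^{-1}$ which follows from $S^*\gamma^0 S=\gamma^0$ and gives $\overline{S\psi}=\overline{\psi}\,S^{-1}$. Combining these, the pullback of the pointwise bilinear form $\overline{\phi'(x')}\,i_\gamma(d^4x')\,\psi'(x')$ collapses to $\overline{\phi(x)}\,i_\gamma(d^4x)\,\psi(x)$, and since the integral in (\ref{eq:scalar-product}) and (\ref{eq:scalar-product-sol}) may be taken over any Cauchy surface, both $T_\Sigma^{-y},L_\Sigma^{(S,\Lambda)}$ and $T_A^{-y},L_A^{(S,\Lambda)}$ become isometric. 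For $T_{\M}^{-y}$, $|e^{-ipy}|=1$ on real $p$ gives isometry trivially, and for $L_{\M}^{(S,\Lambda)}$ one uses (\ref{eq: invarianz-ipd4p}) and, pointwise, $\overline{S\phi(\Lambda^{-1}p)}S\psi(\Lambda^{-1}p)=\overline{\phi(\Lambda^{-1}p)}\psi(\Lambda^{-1}p)$ after transporting $\gamma^0$ through $S$.

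The main obstacle, and the only non-routine step, is checking that $T_{\M}^{-y}$ and $L_{\M}^{(S,\Lambda)}$ preserve the Paley-Wiener bound (\ref{bound 2a}) rather than merely permuting functions on $\M$. Both tasks need the geometric property $|\im p^0|\le|\im\vec p|$ on $\M_{\C}$ stated before the lemma, since the bound uses only the $3$-vector part of the imaginary part. For $T_{\M}^{-y}$ one estimates $|e^{-ipy}\Phi(p)|\le e^{|y|\,|\im p|}|\Phi(p)|\le e^{\sqrt{2}|y|\,|\im\vec p|}|\Phi(p)|$, so the bound survives with $\alpha$ replaced by $\alpha+\sqrt{2}|y|$. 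For $L_{\M}^{(S,\Lambda)}$ one changes variables to $q=\Lambda^{-1}p$: then $|\im\vec q|$ is controlled by $|\im q|\le\|\Lambda^{-1}\|\,|\im p|\le\sqrt{2}\,\|\Lambda^{-1}\|\,|\im\vec p|$, while $|q|\ge\|\Lambda\|^{-1}|p|$ handles the polynomial weight $|p|^{n-1}$; thus the bound again holds with adjusted constants. Holomorphy of the continuations is immediate. Once (i)--(ii) are in place for the dense subspaces, the B.L.T.\ theorem and the obvious two-sided inverses give the claimed unitary extensions.
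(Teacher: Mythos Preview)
Your proposal is correct and follows essentially the same approach as the paper's own proof: the same covariance identities \eqref{eq:invarianz-i_gamma}, \eqref{eq: invarianz-ipd4p}, and $\overline{S\psi}=\overline\psi\,S^{-1}$ yield isometry, the same Paley--Wiener estimates (with $\alpha\mapsto\alpha+\sqrt{2}|y|$ for $T_\M^{-y}$ and $\alpha\mapsto\sqrt{2}\|\Lambda^{-1}\|\alpha$ for $L_\M^{(S,\Lambda)}$ via $|\im p|\le\sqrt{2}|\im\vec p|$) yield well-definedness on $\CM$, and the obvious inverses plus density give the unitary extensions. Your outline is in fact slightly more explicit than the paper in verifying the Dirac-bundle condition $\slashed p\,S\phi(\Lambda^{-1}p)=m\,S\phi(\Lambda^{-1}p)$.
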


The proof is given in Appendix~\ref{sec:appendix}. 

\subsection{Change of Gauge}

Another physically relevant transformation is the change of gauge in the
electrodynamic potential $A$. The transformation of the potential is defined as
\begin{align}
    A'_\mu(x) \mapsto A_\mu(x) + \partial_\mu \lambda(x)
\end{align}
for any scalar field $\lambda\in\cal C^\infty_c(\R^4,\R)$.

\begin{definition}[Gauge Transformation]
\label{def:gauge}
  Let $\lambda\in\cal C^\infty_c(\R^4,\R)$. We define 
\begin{align}
\label{eq:gauge}
&\Gamma_{\lambda}:\CA\to\cC_{A+\partial\lambda},
&\Gamma_{\lambda}\psi(x)=e^{-i\lambda(x)}\psi(x)
&\qquad\text{ for } x\in\R^4,
\end{align}
\end{definition} 

\begin{lemma}\label{lemma:gauge}
The maps specified in Definition~\ref{def:gauge}
are well-defined. More precisely, they take their values
in the spaces 
specified in formula (\ref{eq:gauge}).
They extend to unitary maps, also denoted by 
$\Gamma_\lambda:\HA\to\mathcal{H}_{A+\partial\lambda}$.
\end{lemma}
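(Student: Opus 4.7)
The plan is to verify the four claims in sequence: smoothness and support of $\Gamma_\lambda\psi$, the Dirac equation with shifted potential, isometry on $\CA$, and extension/unitarity on the completion $\HA$.

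First, for well-definedness, I observe that since $\lambda$ is smooth and $\psi\in\CA$ is smooth, the product $e^{-i\lambda(x)}\psi(x)$ is smooth on $\R^4$. The support is unchanged by multiplication with the nowhere-vanishing phase $e^{-i\lambda}$, so $\supp(\Gamma_\lambda\psi)=\supp\psi\subseteq K+\causal$ for the same compact set $K$; hence $\Gamma_\lambda\psi$ inherits the spatially compact causal support required in Definition~\ref{def:CA}.

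Next, I would check the Dirac equation with the gauged potential by a direct computation using the product rule:
\begin{align}
i\slashed\partial(e^{-i\lambda}\psi)
=e^{-i\lambda}\bigl(i\slashed\partial\psi+\slashed\partial\lambda\,\psi\bigr)
=e^{-i\lambda}\bigl((m+\slashed A)\psi+\slashed\partial\lambda\,\psi\bigr)
=(m+\slashed A+\slashed\partial\lambda)e^{-i\lambda}\psi,
\end{align}
where in the middle step I used $(i\slashed\partial-\slashed A)\psi=m\psi$. Rearranging yields $(i\slashed\partial-(\slashed A+\slashed\partial\lambda))(\Gamma_\lambda\psi)=m\,\Gamma_\lambda\psi$, confirming that $\Gamma_\lambda\psi\in\cC_{A+\partial\lambda}$.

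For the isometry property, I would use $\overline{e^{-i\lambda}\phi}=(e^{-i\lambda}\phi)^*\gamma^0=e^{i\lambda}\overline\phi$, so that for any Cauchy surface $\Sigma$,
\begin{align}
\sk{\Gamma_\lambda\phi,\Gamma_\lambda\psi}
=\int_\Sigma e^{i\lambda(x)}\overline{\phi(x)}\,i_\gamma(d^4x)\,e^{-i\lambda(x)}\psi(x)
=\int_\Sigma\overline{\phi(x)}\,i_\gamma(d^4x)\,\psi(x)
=\sk{\phi,\psi},
\end{align}
since the scalar phases commute through the matrix-valued 3-form. Thus $\Gamma_\lambda$ is an isometry from $\CA$ onto a subspace of $\cC_{A+\partial\lambda}$; in fact $\Gamma_{-\lambda}$ is a two-sided inverse (the composition $\Gamma_{-\lambda}\circ\Gamma_\lambda$ is the identity on $\CA$ and analogously on $\cC_{A+\partial\lambda}$), so $\Gamma_\lambda:\CA\to\cC_{A+\partial\lambda}$ is a surjective isometry.

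Finally, since $\CA$ and $\cC_{A+\partial\lambda}$ are dense in their respective completions $\HA$ and $\mathcal H_{A+\partial\lambda}$ by definition, the bounded linear isometry $\Gamma_\lambda$ extends uniquely to an isometry between the Hilbert space completions; the inverse $\Gamma_{-\lambda}$ extends in the same way and remains a two-sided inverse by continuity, so the extension is unitary. I do not expect any serious obstacle here: the argument is essentially identical in structure to (and strictly simpler than) the proof of Lemma~\ref{lemma: Poincar'e transformations}, since gauging acts only pointwise by a phase and therefore preserves supports, smoothness, and the hermitian pairing in a transparent way; the only point worth writing carefully is the identity $\overline{e^{-i\lambda}\phi}=e^{i\lambda}\overline\phi$ that kills the phase in the scalar product.
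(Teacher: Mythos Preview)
Your proof is correct and follows essentially the same approach as the paper's own proof: verify smoothness and support, check the Dirac equation with the shifted potential via the product rule, and conclude unitarity from isometry and invertibility. In fact you spell out the isometry calculation and the inverse $\Gamma_{-\lambda}$ explicitly, whereas the paper simply states that invertibility and isometry are ``obvious.''
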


The proof is given in Appendix~\ref{sec:appendix}. 

\subsection{Generalized Fourier Transforms}\label{sec:free-sol}

Next we introduce the mentioned generalized Fourier transforms.
Their properties are collected in the
following main theorem. An immediate byproduct is an evolution
operator for the free Dirac equation, i.e., equation (\ref{eq:dirac-equation})
for $A=0$. In the following we write
\begin{equation}
px=p_\mu x^\mu= p^0x^0-p^1 x^1-p^2x^2-p^3x^3=p^0x^0-\vec p\cdot \vec x,
\quad x,p\in\C^4.
\end{equation}
The two different meanings of $p^{2}$ as second component of $p\in\C^4$ and 
$p^2=p_\mu p^\mu$ will be unambiguous given the context. 

\begin{theorem}[Generalized Fourier Transforms and Free Dirac Evolution]
\label{thm:F_maps}
\quad
\begin{enumerate}
\item
  For all $I,J,K$ being placeholders for the symbols $3,\M,\sol$ or any Cauchy surface $\Sigma$ there are unique unitary maps
$\F_{IJ}:\cal H_{J}\to\cal H_{I}$
  with the following properties:
  
  \begin{enumerate}[(i)]
  \item $\F_{II} = \mathrm{id}_{\cal H_{I}}$.
  \item $\F_{IJ} \F_{JK} = \F_{IK}$.
  \item $\F_{IJ}$ maps $\cC_{J}$ bijectively onto $\cC_{I}$.
  \item The maps $\F_{3\M}$, $\F_{\M 3}$, $\F_{\M\Sigma}$,
$\F_{\sol\M}$, and $\F_{\Sigma\sol}$ are characterized as follows:
\begin{align}
\label{def F3M}
(\F_{3\M}\psi)(\vec p)&=m \frac{\psi(p_{+}(\vec p))-\psi(p_{-}(\vec p))}{E(\vec p)}
&&\text{for }\psi\in\HM, \vec p\in\R^{3};
\\
\label{def FM3}
\db (\F_{\M 3}\psi)(p)&=\frac{\slashed p+m}{2m}\gamma^{0}\psi(\vec p)
&&\text{for }\psi\in\HThree,  p=(p^0,\vec p)\in\M;\ed
\\
\label{def FMSigma}
\db(\F_{\M\Sigma}\psi)(p)&=\frac{\slashed p+m}{2m}(2\pi)^{-3/2}
\int_\Sigma e^{ipx}\,i_\gamma(d^4x)\,\psi(x)
&&\text{for } \psi\in\CSigma, p\in\M;
\\
\label{def FsM}
(\F_{\sol\M}\psi)(x)&=\frac{(2\pi)^{-3/2}}{m}\int_\M e^{-ipx}\, i_p(d^4p)\,\psi(p)
&&\text{for } \psi\in\CM,x\in\R^{4};
\\
\label{def FSigmas}
(\F_{\Sigma\sol}\psi)(x)&=\psi(x)
&&\text{for } \psi\in\Cs, x\in\Sigma.
\end{align}

\end{enumerate}

\item For $\psi\in\CSigma$,
the function $\F_{\sol\Sigma}\psi$
is supported in $\operatorname{supp}\psi+\causal$.

\item For any symbol $I$ among $3,\M,\Sigma,\sol$ the space $\cC_{I}$ is dense in $\cal H_{I}$.

\end{enumerate}

\end{theorem}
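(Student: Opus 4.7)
The plan is to construct the family $\{\F_{IJ}\}$ starting from the fully explicit building blocks of (iv) and to define every other entry by the composition rules (i)--(ii), which then uniquely determine the family. The first step is to verify that (\ref{def F3M}) and (\ref{def FM3}) are mutually inverse unitaries $\HThree\leftrightarrow\HM$ by straightforward algebra using (\ref{eq:Pp}), the completeness relation (\ref{eq: P+P-}), and the identification $i_p(d^4p)/m=(m/p^0)d^3p$ combined with (\ref{gamma to p})--(\ref{skalarprodukt 2}): the $p^0/m$-factors picked up by the spinor projectors cancel the Lorentz-invariant measure exactly. The bijection $\CThree\leftrightarrow\CM$ transfers the Paley-Wiener bounds (\ref{bound 1a}) and (\ref{bound 2a}) after comparing the weights $|p|$ and $m\vee|\vec p|$ on $\M_\C$ via the geometric inequality $|\im p^0|\le|\im\vec p|$ stated just before the theorem.

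Second, I would handle $\F_{\M\Sigma}$. For $\Sigma=\Sigma^0$ the right-hand side of (\ref{def FMSigma}) factors as the standard $3$-dimensional Fourier transform followed by $\F_{\M 3}$, so unitarity and bijection $\cC_{\Sigma^0}\leftrightarrow\CM$ are classical Plancherel/Paley-Wiener. For a general Cauchy surface $\Sigma$, the key observation is that the $\C^4$-valued $3$-form $(\slashed p+m)e^{ipx}\,i_\gamma(d^4x)\,\Psi(x)$ is closed in $x$ whenever $\Psi$ is a free Dirac solution and $p\in\M$, since $(\slashed p+m)(\slashed p-m)=p^2-m^2=0$; the computation is identical in spirit to (\ref{eq:stokes-argument}). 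Combined with the conservation identity (\ref{eq:stokes-argument}) for the current $\overline{\Psi}\gamma^\mu\Psi$, this shows that both $\F_{\M\Sigma}$ evaluated on the Cauchy data and the $\HSigma$-norm of the Cauchy data are invariants of the ambient free solution, so unitarity and the bijection $\CSigma\leftrightarrow\CM$ reduce to the $\Sigma=\Sigma^0$ case. With $\F_{\sol\M}$ defined by (\ref{def FsM}) and $\F_{\Sigma\sol}$ by restriction (well-defined because elements of $\Cs$ have compact support on every Cauchy surface), all remaining $\F_{IJ}$ are forced by (i)--(ii); the only consistency still to check is $\F_{\Sigma\sol}\F_{\sol\M}\F_{\M\Sigma}=\mathrm{id}_{\HSigma}$, which at $\Sigma=\Sigma^0$ again reduces to Fourier inversion plus the identity $P_+(\vec p)+P_-(\vec p)=1$.

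Third, for the causal-support claim (Part 2) I would write $\F_{\sol\Sigma}\psi(x)=\F_{\sol\M}\F_{\M\Sigma}\psi(x)$ as the iterated integral over $\M$ and $\Sigma$ and shift the $\vec p$-contour by $\vec p\mapsto\vec p+iR(\vec x-\vec y)/|\vec x-\vec y|$ for each $y\in\supp\psi$ with $x-y$ spacelike: the inequality $|\im p^0|\le|\im\vec p|$ bounds the real part of the exponent by $R(|x^0-y^0|-|\vec x-\vec y|)<0$, and the Paley-Wiener bounds on $\F_{\M\Sigma}\psi\in\CM$ dominate the polynomial growth in $|p|$, so letting $R\to\infty$ forces the integrand to vanish outside $\supp\psi+\causal$. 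Part 3 combines Lemma~\ref{lem:CThree_dense} with the standard density of $C_c^\infty$ in $L^2$ for $\CSigma\subset\HSigma$ and transport along the unitary $\F_{IJ}$. The main obstacle throughout lies in the second paragraph: one must rigorously justify the vanishing of boundary terms at spatial infinity in the Stokes argument and the uniform Paley-Wiener decay of $\F_{\M\Sigma}\psi$ over surfaces of arbitrary shape, which is where the uniform bound $\sup|\nabla t_\Sigma|<1$ from Definition~\ref{def:cauchy-surface}(c) becomes indispensable.
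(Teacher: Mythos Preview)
Your overall architecture---build the explicit maps first, then extend by composition---matches the paper's, and your treatment of $\F_{3\M}\leftrightarrow\F_{\M 3}$ and the flat case $\Sigma=\Sigma^0$ is essentially correct. But the reduction of a general Cauchy surface $\Sigma$ to $\Sigma^0$ via Stokes has a genuine circularity.

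Your closed-form argument for $(\slashed p+m)e^{ipx}\,i_\gamma(d^4x)\,\Psi(x)$ is valid, and it does show that $\F_{\M\Sigma}(\Psi|_\Sigma)=\F_{\M\Sigma^0}(\Psi|_{\Sigma^0})$ \emph{for $\Psi\in\Cs$}. Combined with current conservation this yields unitarity of $\F_{\M\Sigma}$ on the image $\F_{\Sigma\sol}[\Cs]\subseteq\CSigma$. What it does \emph{not} give you is that this image is all of $\CSigma$: to apply Stokes you must already start from a free solution, yet the statement ``every $\psi\in\CSigma$ is the restriction of some $\Psi\in\Cs$'' is precisely the existence half of the free Dirac initial-value problem on $\Sigma$, which is what Theorem~\ref{thm:F_maps} is meant to deliver. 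Without an independent proof that $\F_{\Sigma\sol}$ is onto, your unitarity and bijection claims for $\F_{\M\Sigma}$ only hold on an unidentified subspace.

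The paper breaks the circularity in two places. First, it proves \emph{directly} that $\F_{\M\Sigma}[\CSigma]\subseteq\CM$ for arbitrary $\psi\in\CSigma$, by estimating the integral (\ref{def FMSigma}) through a family of oblique projections $\pi^\Sigma_{\hat q}:\Sigma\to\R^3$ (Theorem~\ref{lemma: FMSigma}); this is exactly the ``uniform Paley--Wiener decay over surfaces of arbitrary shape'' that you flag at the end as an obstacle but try to sidestep. Second, it proves $\F_{\Sigma\sol}\F_{\sol\M}\F_{\M\Sigma}=\id_{\CSigma}$ pointwise (Lemma~\ref{lemma: cycle Sigma-s-M}) by a localization-and-rescaling argument at each $y\in\Sigma$, reducing via translation and a Lorentz boost to the single computation at $0\in\Sigma$ with $T_0\Sigma=\{0\}\times\R^3$; no ambient free solution is presupposed. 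Only after both of these are in hand does the Stokes/current identity become a consequence rather than an assumption.

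A smaller point on your first paragraph: transferring the $\CM$-bounds to $\CThree$ via $\F_{3\M}$ is not just a matter of comparing weights $|p|$ and $m\vee|\vec p|$. The formula (\ref{def F3M}) involves the two-sheeted cover $\M_\C\to\C^3$ and a denominator vanishing on the ramification locus $\{p^0=0\}$; the paper needs Riemann's extension theorem and a quantitative maximum-principle argument near that locus (Lemma~\ref{lemma: p0 nahe null}) to get the holomorphic extension $\Phi:\C^3\to\C^4$ with the right decay. Your sketch does not address this.

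Your contour-shift idea for Part~(b) is morally the same mechanism as the paper's, but as written it is ambiguous: the shift direction $(\vec x-\vec y)/|\vec x-\vec y|$ depends on $y\in\supp\psi$, while the $\vec p$-integral sits outside the $y$-integral. The paper resolves this by partitioning $\psi$ into pieces with arbitrarily small support and invoking translation covariance, which is the honest way to turn your heuristic into a proof.
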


The free Dirac evolution between Cauchy surfaces $\Sigma$ and $\Sigma'$ is
given by the unitary map $\F_{\Sigma'\Sigma}$. 
In physicists' notation the formal integral kernel of $\F_{\sol\Sigma}$ is usually
called the propagator of the free Dirac equation.
The maps $\F_{\M\Sigma}$ and $\F_{\sol\M}$ commute
with Poincar\'e transformations in the following sense:

\begin{theorem}[Compatibility with
    Poincar\'e Transformations]
\label{thm: compatibility Poincar'e}
For any translation vector $y\in\R^4$ and any Lorentz transformation
associated with $\Lambda=\Lambda(M)$, $S=S(M)$ with 
\db $M\in\operatorname{SL}(2,\C)$, cf. Definition~\ref{def: Poincar'e transformations}\ed, the following compatibility relations
hold true.
\begin{align}
\label{eq:translation-M-Sigma}
T_{\mathcal M}^{-y} \F_{\M\Sigma}
&=\F_{\M, \Sigma-y}  T_\Sigma^{-y},\\
\label{eq:translation-s-M}
T_{\sol}^{-y} \F_{\sol\M}
&=\F_{\sol\M}  T_\M^{-y},\\
\label{eq:Lorentz-M-Sigma}
L_{\mathcal M}^{(S,\Lambda)} \F_{\M\Sigma}
&=
\F_{\M, \Lambda\Sigma}  L_\Sigma^{(S,\Lambda)},
\\
\label{eq:Lorentz-s-M}
L_{\sol}^{(S,\Lambda)} \F_{\sol\M}
&=
\F_{\sol\M}  L_{\mathcal M}^{(S,\Lambda)}
\end{align}
\end{theorem}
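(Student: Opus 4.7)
The plan is to verify all four identities on the dense subspaces $\CSigma, \CM, \Cs$ (which are mapped bijectively to each other by the $\F_{IJ}$ thanks to Theorem~\ref{thm:F_maps}(iii)) where the explicit integral representations from (\ref{def FMSigma}) and (\ref{def FsM}) apply; since all eight operators involved (the four Fourier-type maps and the four Poincar\'e transformations) are unitary between the relevant Hilbert spaces, the identities then extend uniquely to the full $\HSigma, \HM, \Hs$ by continuity. The fact that Poincar\'e transformations preserve the dense subspaces $\cC_I$ is already used (and proved) in Lemma~\ref{lemma: Poincar'e transformations} via the Paley--Wiener bounds.

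For the translation identities, I would simply compute both sides. For (\ref{eq:translation-M-Sigma}), applying the LHS to $\psi\in\CSigma$ and evaluating at $p\in\M$ produces a factor $e^{-ipy}$ in front of $\F_{\M\Sigma}\psi$; absorbing this factor into the exponential via $e^{ipx}e^{-ipy}=e^{ip(x-y)}$ and changing variables $x\mapsto x+y$ (which shifts the domain from $\Sigma$ to $\Sigma-y$ and the form $i_\gamma(d^4x)$ is translation invariant) produces exactly the RHS. For (\ref{eq:translation-s-M}) the same mechanism applies: the phase shift $\phi(p)\mapsto e^{-ipy}\phi(p)$ in momentum space corresponds to the translation $x\mapsto x+y$ in position space, via $e^{-ipx}e^{-ipy}=e^{-ip(x+y)}$.

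For the Lorentz identities, the calculation uses three ingredients already established in the paper: the scalar invariance $p\cdot(\Lambda x)=(\Lambda^{-1}p)\cdot x$, the pull-back relations (\ref{eq:invarianz-i_gamma}) and (\ref{eq: invarianz-ipd4p}), and the intertwiner $\slashed{p}\,S = S\,\slashed{\Lambda^{-1}p}$, which is an immediate consequence of (\ref{relation Lambda S}) contracted with $p_\mu$. For (\ref{eq:Lorentz-M-Sigma}), I would evaluate the RHS at $p\in\M$, substitute $x'=\Lambda x$ in the integral over $\Lambda\Sigma$, rewrite the form via $m_\Lambda^* i_\gamma(d^4x') = S\, i_\gamma(d^4x)\, S^{-1}$ from (\ref{eq:invarianz-i_gamma}), and pull the resulting $S$ through the projector $(\slashed p+m)/(2m)$ using the intertwiner to obtain the LHS. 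The computation for (\ref{eq:Lorentz-s-M}) is analogous but easier: the substitution $p=\Lambda p''$ on the mass shell together with (\ref{eq: invarianz-ipd4p}) immediately converts the RHS into the LHS, with $S$ simply factoring out of the integral.

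The only genuinely delicate point is the Lorentz computation for (\ref{eq:Lorentz-M-Sigma}), where the projector $(\slashed p+m)/(2m)$ and the spinor transformation $S$ do \emph{not} commute; one must carefully move $S$ through $\slashed p$ using $\slashed{p}S=S\slashed{\Lambda^{-1}p}$ so that the projector on the RHS evaluated at $p$ becomes the projector on the LHS evaluated at $\Lambda^{-1}p$ (matching the argument of $\F_{\M\Sigma}\psi$ inside $L_\M^{(S,\Lambda)}$). Everything else is a mechanical change-of-variables exercise; no analytic difficulty arises because we restrict to the dense subspaces where the integrals are absolutely convergent (Paley--Wiener bounds), and the final extension to the Hilbert spaces is automatic by unitarity.
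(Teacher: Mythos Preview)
Your proposal is correct and follows essentially the same approach as the paper: the translation identities are dismissed as obvious change-of-variable computations, the Lorentz identity (\ref{eq:Lorentz-M-Sigma}) is handled via the pull-back relation (\ref{eq:invarianz-i_gamma}) together with the intertwiner $\slashed p' = S\slashed p\,S^{-1}$ (equivalently your $\slashed p\,S = S\,\slashed{\Lambda^{-1}p}$) to move $S$ through the projector, and (\ref{eq:Lorentz-s-M}) follows directly from (\ref{eq: invarianz-ipd4p}). Your explicit mention of restricting to dense subspaces and extending by unitarity is a minor elaboration the paper leaves implicit, but the substance is identical.
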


The proof of Theorem~\ref{thm:F_maps} is given in Section
\ref{sec:generalized-fourier}, and the proof of Theorem 
\ref{thm: compatibility Poincar'e} is given in Appendix~\ref{sec:appendix}. 
While the latter is straight-forward the former needs several
technical lemmas, some of which are phrased in the following subsection; the
remaining technical lemmas and proofs are given in Section~\ref{sec:gen-fou-trafo}.
Note that beside the regularity information contained in the spaces $\CSigma$ and $\Cs$ 
Theorem~\ref{thm:F_maps} (b) makes precise the causal structure of the support properties
of solutions of the free Dirac equation.
These support properties are
controlled by Paley-Wiener techniques.  The standard Paley-Wiener theorem treats
the case in which both position and momentum spaces are flat. To apply this
theorem to curved Cauchy surfaces we employ a family of projections from a
particular Cauchy surface to a flat position space; see proof of
Lemma~\ref{lemma: FMSigma}.  On the momentum space side, the mass-shell $\M$ is
analytically continued to a complex $3$-dimensional manifold $\M_\C$ which is
also projected onto a complexified 3-momentum space $\C^3$.  The Paley-Wiener
theorem is then \db applied \ed not \old{used for}\db to the \ed holomorphic
functions on $\M_\C$ directly but to their appropriate projections.  Tools from
complex analysis help to control the projected functions quantitatively, in
particular close to the ramification set of the projection; see proof of
Lemma~\ref{lem:FM3_F3M}. The following section provides an overview of the most
important bounds, and in Section~\ref{sec:sobolev} we introduce appropriate
Sobolev norms that are also helpful to control regularity of solutions of the
Dirac equation subject to an external potential $A$; see
Section~\ref{sec:dirac-et-overview}.

\subsubsection{Paley-Wiener Bounds}\label{sec:PWbounds}

Let $K\subset\R^4$ be a compact set and $0\leq V<1$.
We define $\cal S(K,V)$ to be the set of all Cauchy surfaces $\Sigma$
with $\Sigma\cap K\neq\emptyset$ and $\sup_{\vec x\in \R^3}|\nabla
t_\Sigma(\vec x)|\leq V$; cf. (\ref{eq:parametrize Sigma}).
For $\Sigma\in\cal S(K,V)$ let $\cal C_\Sigma(K)$ denote the set of all wave
functions $\psi\in\CSigma$ supported in $K\cap\Sigma$. For such $\psi$ and
$n\in\N_0$ we define
\begin{align}
    \Vert \psi \Vert_{\Sigma,K,n}
    =
    \sup_{\vec x\in\R^3} \sum_{|\beta|\leq n}\left| 
        D^\beta \psi(t_\Sigma(\vec x),\vec x)
    \right|,
\end{align}
where the differential operator $D^\beta$ for a multi-index $\beta\in\N_0^3$
acts on $\vec x\in\R^3$.

\begin{theorem}[Paley-Wiener Bounds for Cauchy Surfaces]
    \label{lemma: FMSigma}
For any $K,V$ as above, any $\Sigma\in\mathcal S(K,V)$, any 
$\psi\in\cal C_\Sigma(K)$, any positive number $\alpha$ such that $\alpha>\sqrt
2\sup_{x\in K}|x|$,
and any $n\in\N$ one has
\begin{align}
    \|\F_{\M\Sigma}\psi\|_{\M,\alpha, n}
    \leq
    \constl{c:pwbound} \Vert \psi \Vert_{\Sigma,K,n}
\end{align}
with some positive constant $\constr{c:pwbound}=\constr{c:pwbound}(K,V,n,\alpha,m)$.
\end{theorem}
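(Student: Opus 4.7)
The plan is a Paley--Wiener argument adapted to the curved Cauchy surface $\Sigma$ and the complex mass shell $\M_\C$. I first extend $\Psi:=\F_{\M\Sigma}\psi$ holomorphically to $p\in\M_\C$ by using the integral formula (\ref{def FMSigma}) with complex $p$; holomorphy is automatic because $\psi$ has compact support in $K\cap\Sigma$ and $e^{ipx}$ is entire in $p$. Parametrising $\Sigma$ by $\vec x\mapsto(t_\Sigma(\vec x),\vec x)$ and using identity (\ref{eq: repr igammad4x}), I rewrite
\begin{align*}
\Psi(p)=\frac{\slashed p+m}{2m}\,(2\pi)^{-3/2}\,\widehat F(\vec p),\qquad \widehat F(\vec p):=\int_{\R^3}e^{i(p^0 t_\Sigma(\vec x)-\vec p\cdot\vec x)}\,N(\vec x)\,\psi(t_\Sigma(\vec x),\vec x)\,d^3\vec x,
\end{align*}
with $N(\vec x)=\gamma^0+\sum_{j=1}^3\gamma^j\partial_j t_\Sigma(\vec x)$. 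The matrix bound $\|(\slashed p+m)/(2m)\|\le C(m\vee|p|)/m$ together with $|p|\le\sqrt 3(m\vee|\vec p|)$ (the latter from $(p^0)^2=m^2+\vec p\cdot\vec p$ on $\M_\C$ and $|z^2|=|z|^2$ for $z\in\C$) reduces the claim to showing
\begin{align*}
(m\vee|\vec p|)^n\,|\widehat F(\vec p)|\le C\,\|\psi\|_{\Sigma,K,n}\,e^{\alpha|\im\vec p|}.
\end{align*}

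For the exponential control, on $x=(t_\Sigma(\vec x),\vec x)\in K$ I estimate
\begin{align*}
|e^{i(p^0 t_\Sigma-\vec p\cdot\vec x)}|\le e^{|\im p^0||t_\Sigma(\vec x)|+|\im\vec p||\vec x|}\le e^{|\im\vec p|(|t_\Sigma(\vec x)|+|\vec x|)}\le e^{\sqrt 2|\im\vec p||x|}\le e^{\sqrt 2|\im\vec p|\sup_{y\in K}|y|},
\end{align*}
using the mass-shell geometry inequality $|\im p^0|\le|\im\vec p|$ in the second step and $|a|+|b|\le\sqrt 2\sqrt{a^2+b^2}$ in the third. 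The strict hypothesis $\alpha>\sqrt 2\sup_{y\in K}|y|$ then leaves a positive exponential margin in $|\im\vec p|$ that is absorbed by $e^{\alpha|\im\vec p|}$ in the target norm.

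For the polynomial decay I split into regimes. For $|\vec p|\le R$ (with a threshold $R=R(m,V)$ chosen below), the direct $L^1$ bound $|\widehat F|\le C_K\|\psi\|_{\Sigma,K,0}\,e^{\sqrt 2|\im\vec p|\sup_K|y|}$ together with $(m\vee|\vec p|)^n\le(m\vee R)^n$ suffices. For $|\vec p|>R$, I apply $n$ iterations of the non-stationary phase operator $Lf:=i^{-1}|\nabla\phi|^{-2}\nabla\phi\cdot\nabla_{\vec x} f$ with phase $\phi(\vec x)=p^0 t_\Sigma(\vec x)-\vec p\cdot\vec x$, so that $Le^{i\phi}=e^{i\phi}$. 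Integration by parts yields $\widehat F(\vec p)=\int_{\R^3}(L^t)^n[N\psi\circ(t_\Sigma,\id)](\vec x)\,e^{i\phi(\vec x)}\,d^3\vec x$ with no boundary terms. The bound $|\nabla\phi|=|p^0\nabla t_\Sigma-\vec p|\ge|\vec p|-V|p^0|\ge(1-V)|\vec p|-Vm$, combined with the $\M_\C$-estimate $|p^0|\le\sqrt{m^2+|\vec p|^2}$, yields $|\nabla\phi|\ge c(1-V)|\vec p|$ once $R$ is chosen large enough in terms of $V,m$. Each iteration of $L^t$ then contributes a factor $1/|\vec p|$ without accumulating powers of $p^0$, and brings at most one additional derivative of $t_\Sigma$ and $N\psi\circ(t_\Sigma,\id)$; after $n$ iterations this gives $|\widehat F(\vec p)|\le C|\vec p|^{-n}\|\psi\|_{\Sigma,K,n}\,e^{\sqrt 2|\im\vec p|\sup_K|y|}$, which is the required reduced estimate.

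The main obstacle is precisely this polynomial decay step. A naive integration by parts $(ip^j)^n\widehat F(\vec p)=\int_{\R^3}\partial_{x^j}^n[e^{ip^0 t_\Sigma}N\psi(t_\Sigma,\vec x)]\,e^{-i\vec p\cdot\vec x}\,d^3\vec x$ fails, because differentiating the implicit phase $e^{ip^0 t_\Sigma(\vec x)}$ produces powers of $ip^0\partial_j t_\Sigma$ of size $m\vee|\vec p|$ that exactly cancel the polynomial gain from the prefactor $(ip^j)^n$. The non-stationary phase reformulation redistributes the IBP along $\nabla\phi$ and avoids this cancellation, but must be executed for complex $\vec p$: one must verify $\nabla\phi\ne 0$ on the real integration domain (which follows from $V<1$ and the $\M_\C$-bound on $|p^0|$), and the adjoint $L^t$ differentiates $|\nabla\phi|^{-2}$, so uniform bounds on higher derivatives of $t_\Sigma$ on the compact projection of $K\cap\Sigma$ to $\R^3$ are needed; these must be absorbed into the constant $C=C(K,V,n,\alpha,m)$.
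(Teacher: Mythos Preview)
Your approach is essentially sound but takes a genuinely different route from the paper. The paper does \emph{not} perform integration by parts against the mixed phase $\phi(\vec x)=p^0 t_\Sigma(\vec x)-\vec p\cdot\vec x$. Instead it introduces a $p$-dependent time-like projection $\pi^\Sigma_{\hat q(p)}:\Sigma\to\R^3$, choosing the direction $\hat q(p)\in(\re p)^\perp$ so that after the change of variables $\vec y=\pi^\Sigma_{\hat q(p)}(x)$ the real part of the phase becomes exactly $-\re\vec p\cdot\vec y$. One is then reduced to the \emph{standard} Fourier transform of a compactly supported smooth amplitude $f_p(\vec y)$, and ordinary Paley--Wiener decay applies directly; no non-stationary-phase machinery is needed. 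The price is a case split: the direction $\hat q(p)$ is only admissible on a region $I\subset\M_\C$ where $|\im p|^2+m^2\le\epsilon|\re p|^2$, and the complementary region is handled by a trivial bound. Your method avoids this case analysis and the moving coordinate change, at the cost of running the oscillatory-integral argument directly with a complex phase.

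There is one concrete gap you should fix. As written, your operator $Lf=i^{-1}|\nabla\phi|^{-2}\nabla\phi\cdot\nabla f$ does \emph{not} satisfy $Le^{i\phi}=e^{i\phi}$ when $\phi$ is complex-valued: it yields $|\nabla\phi|^{-2}(\nabla\phi\cdot\nabla\phi)\,e^{i\phi}$, and for complex vectors the bilinear square $(\nabla\phi)\cdot(\nabla\phi)$ is not $|\nabla\phi|^2$ (it can even vanish). The correct first-order operator is $Lf=-i\,|\nabla\phi|^{-2}\,\overline{\nabla\phi}\cdot\nabla f$, which uses the Hermitian pairing; then $Le^{i\phi}=e^{i\phi}$ holds identically, your lower bound $|\nabla\phi|\ge c(1-V)|\vec p|$ (Euclidean norm) is exactly what is needed to control $|\nabla\phi|^{-1}$, and the iterated transpose $(L^t)^n$ indeed gains $|\vec p|^{-n}$ because each $D^\gamma(|\nabla\phi|^{-2}\overline{\nabla\phi})$ stays $O(|\vec p|^{-1})$. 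With this correction the rest of your outline goes through. Note also that, just as in the paper's argument, the constants produced by $(L^t)^n$ depend on derivatives of $t_\Sigma$ up to order $n{+}1$ on the projection of $K$; this is the same regularity input the paper uses through the derivatives of $(\pi^\Sigma_{\hat q})^{-1}$.
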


This theorem is proven in the proof of Lemma~\ref{lemma: FMSigma}. Furthermore,
we give the following bounds which are useful in switching between the spaces $\CM$ and $\CThree$.

\begin{theorem}[Bounds on $\CM$ and $\CThree$]
    \label{eq:M3bounds}
    For any $\alpha>0$, $\epsilon>0$, and $n\in\N$ the following bounds hold\ed
\begin{align}
\label{bound F3M-v1}
\|\F_{3\M}\psi\|_{3,\alpha, n-1}
&\le 
\constr{4-v1}\|\psi\|_{\M,\alpha, n}
&&\db \text{for }\psi\in\CM,\ed
\\
\label{bound F3M}
\|\F_{3\M}\psi\|_{3,\alpha, n}
&\le 
\constr{4}\|\psi\|_{\M,\alpha+\epsilon, n}
&&\text{for }\psi\in\CM,
\\
\label{bound FM3}
\|\F_{\M 3}\phi\|_{\M,\alpha, n}
&\le 
\constr{cfm3}\|\phi\|_{3,\alpha, n}
&& \text{for }\phi\in\CThree,
\end{align}
with positive constants $\constr{4-v1}=\constr{4-v1}(n,\alpha,m)$,
$\constr{4}=\constr{4}(n,\alpha,\epsilon,m)$, and
$\constr{cfm3}=\constr{cfm3}(n,m)$.
\end{theorem}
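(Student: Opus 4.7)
The three bounds split into one direct estimate (bound (c) involving $\F_{\M 3}$) and two estimates that require controlling $\F_{3\M}\psi$ near a branch locus in $\C^3$ (bounds (a) and (b)). My plan is to dispatch (c) first via a straightforward algebraic computation, then handle (a) and (b) together, with the near-branch analysis as the principal ingredient.

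For bound (c), given $\phi\in\CThree$ with holomorphic extension $\Phi:\C^3\to\C^4$, I set $\Psi(p):=\frac{\slashed p+m}{2m}\gamma^0\Phi(\vec p)$ for $p\in\M_\C$. This is manifestly holomorphic on $\M_\C$, and the identity $\slashed p(\slashed p+m)=p^2+m\slashed p=m^2+m\slashed p=m(\slashed p+m)$ on $\M_\C$ shows $\slashed p\Psi=m\Psi$, so $\Psi$ is the holomorphic extension of $\F_{\M 3}\phi$. Using $|\slashed p+m|\leq C(|p|+m)$ yields $|\Psi(p)|\leq C(|p|+m)|\Phi(\vec p)|/m$. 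The mass-shell relation gives $|p^0|^2=|(p^0)^2|=|\vec p^2+m^2|\leq|\vec p|^2+m^2$, whence $|p|\leq\sqrt{2|\vec p|^2+m^2}\leq C(m\vee|\vec p|)$ and consequently $|p|^{n-1}(|p|+m)\leq C(m\vee|\vec p|)^n$. Combining with the weight $e^{-\alpha|\im\vec p|}$ and taking suprema yields (c).

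For bounds (a) and (b), the map $\F_{3\M}\psi(\vec p)=m[\Psi(p_+(\vec p))-\Psi(p_-(\vec p))]/E(\vec p)$ has an apparent singularity at the branch locus $B=\{\vec p\in\C^3:\vec p^2+m^2=0\}$. I would first verify that $\F_{3\M}\psi$ extends holomorphically to all of $\C^3$: on $\C^3\setminus B$ the quotient is single-valued since monodromy around $B$ flips the signs of $E$ and of $\Psi(p_+)-\Psi(p_-)$ simultaneously; near $B$ both numerator and denominator vanish to matching order because $\nabla E^2=2\vec p\neq 0$ on $B$ makes $E$ a local square-root coordinate on the double cover $\M_\C\to\C^3$, and $\Psi(p_+)-\Psi(p_-)$ is odd under the sheet exchange. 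Riemann's removable singularity theorem then provides the extension. Away from $B$, one uses $|p_\pm|^2=|E|^2+|\vec p|^2$ together with $|\Psi(p_\pm)|\leq|p_\pm|^{-(n-1)}e^{\beta|\im\vec p|}\|\psi\|_{\M,\beta,n}$ (with $\beta=\alpha$ for (a) and $\beta=\alpha+\epsilon$ for (b)) to obtain both bounds in this generic region after case analysis on the relative sizes of $|E|$, $|\vec p|$, and $m$.

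Near $B$, I would use Cauchy's integral formula in one complex variable, $\Phi(\vec p_0)=(2\pi i)^{-1}\oint_{|w|=r}\Phi(\vec p_0+w\vec v)w^{-1}dw$, choosing $\vec v$ with $|\vec v|=1$ and $|\vec p_0\cdot\vec v|$ maximal so that $E^2(\vec p_0+w\vec v)=E^2(\vec p_0)+2w\vec p_0\cdot\vec v+w^2\vec v^2$ is a nondegenerate quadratic in $w$ whose second root lies far from $0$, ensuring a lower bound on $|E|$ along the contour. For (a), a fixed contour radius $r$ suffices and all induced factors (the exponential shift $e^{\alpha r}$, bounded ratios of the weights, the lower bound on $|E|$) are absorbed into the constant. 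For (b), the extra power of $(m\vee|\vec p_0|)$ on the left-hand side cannot be absorbed so uniformly; the key observation is that on the near-branch region the geometry forces $|\im\vec p_0|\geq|\vec p_0|/\sqrt 2$ (from $|\vec p^2+m^2|$ small one derives $|\im\vec p|^2\geq|\re\vec p|^2+3m^2/4$), so any polynomial in $|\vec p_0|$ is dominated by $e^{\epsilon|\im\vec p_0|}$ and the $\epsilon$-margin in the stronger norm $\|\psi\|_{\M,\alpha+\epsilon,n}$ supplies exactly the needed exponential margin. The main obstacle will be this near-branch bookkeeping for (b): one must choose the Cauchy radius $r(\vec p_0)$ adaptively in $|\vec p_0|$ so that the extra polynomial factor is traded against the $\epsilon$-exponential uniformly, and the final constant depends only on $\alpha,\epsilon,n,m$.
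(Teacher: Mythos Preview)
Your overall architecture coincides with the paper's: bound~(c) is handled by the direct algebraic estimate you describe (the paper uses the same $\Psi(p)=\frac{\slashed p+m}{2m}\gamma^0\Phi(\vec p)$ together with $m\le|p|\le\sqrt3(m\vee|\vec p|)$), and for (a)--(b) the paper likewise extends $\F_{3\M}\psi$ holomorphically across the ramification set via Riemann's theorem, splits into ``far'' ($|p^0|\ge m/12$) and ``near'' ($|p^0|<m/12$) regions, uses a direct bound in the far region, and applies the maximum principle on a disk in the near region.

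The one genuine technical difference is how the near-branch disk is produced. The paper works upstairs in $\M_\C$: it constructs, via an explicit Euler substitution, a holomorphic map $k:\bar\Delta\to\M_\C$ with $k(0)=p$, $|\vec k(t)-\vec p|\le m/6$ on $\bar\Delta$, and $|k^0(t)|\ge m/12$ on $\partial\Delta$, then invokes the maximum principle for $\chi\circ k$. Your proposal works downstairs in $\C^3$, applying Cauchy's formula to $\Phi$ along a line in the direction $\vec v=\overline{\vec p_0}/|\vec p_0|$ chosen so that $|E^2|$ stays bounded below on the contour. Both achieve the same end; the paper's version has the advantage that the disk is given by explicit rational functions (so all constants are completely explicit), while yours is slightly more conceptual but needs a lower bound on $|\vec p_0|$ near the branch (which you have: $|\vec p_0|\ge m$ there) to guarantee the quadratic in $w$ is well separated.

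One point to clean up: your discussion of how the $\epsilon$ enters in bound~(b) is backwards. The norm $\|\psi\|_{\M,\alpha+\epsilon,n}$ is \emph{smaller} than $\|\psi\|_{\M,\alpha,n}$ (larger exponent means smaller weight), so putting it on the right-hand side does not ``supply margin'' --- it makes the inequality harder. What actually happens (and what the paper does) is that the extra factor $(m\vee|\vec p|)$ is traded, via the geometric inequality $m\vee|\vec p|\le C(\tfrac{m}{12}\vee|p^0|)e^{\epsilon|\im\vec p|}$, for an $e^{\epsilon|\im\vec p|}$ that lands in the exponential weight on the \emph{left}; the paper's proof literally establishes $\|\Phi\|_{3,\alpha+\epsilon,n}\le C\|\psi\|_{\M,\alpha,n}$ and then relabels. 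Your geometric observation that $|\im\vec p|\gtrsim|\vec p|$ near the branch is exactly the content of this inequality, so your ingredients are right --- only the bookkeeping of which side absorbs the $\epsilon$ needs to be fixed, and your remark about an adaptive radius $r(\vec p_0)$ is then unnecessary (a fixed radius suffices, as in the paper).
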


This theorem is proven in the proof of Lemma~\ref{lem:FM3_F3M}. \\

\subsubsection{Sobolev Norms}\label{sec:sobolev}

On the one hand, the norms $\|{\cdot}\|_{\M,\alpha,n}$, introduced in
(\ref{bound 2a}) are well adapted to Paley-Wiener arguments and are therefore
useful for the analysis of support properties. 
On the other hand, Sobolev norms turn out to be
more convenient for the analysis of regularity. Now we introduce a
version of Sobolev norms well suited for the analysis of the Dirac equation.

\begin{definition}\label{def:sobolev}
For $n\in\N_0$, let $\cH_{\M,n}$ denote the vector space of all $\psi\in \HM$
such that $p^\beta \psi\in  \HM$ for any multi-index $\beta\in\N_0^4$ with
$|\beta|\le n$.  Here
$p^\beta:=p_0^{\beta_0}p_1^{\beta_1}p_2^{\beta_2}p_3^{\beta_3}$, where $p_j$
stands for the multiplication operator with $p_j$, $p\in\M$.  We endow
$\cH_{\M,n}$ with the norm
\begin{equation}
\label{def-normMn}
\|\psi\|_{\M,n}^2:=
\sum_{\substack{\beta\in\N_0^4\\|\beta|\le n}}\|p^\beta\psi\|^2
=
\sum_{\substack{\beta\in\N_0^4\\|\beta|\le n}}\int_{\M}|p^\beta|^2
\overline{\psi(p)}\psi(p)\frac{i_p(d^4p)}{m}.
\end{equation}
Given a Cauchy surface $\Sigma$, for a placeholder $I$ 
standing for $\sol$ or $\Sigma$, we define the normed
space
\begin{align}
    \cH_{I,n} := \F_{I\M}\left[\cH_{\M,n}\right],
  \qquad
  \|\psi\|_{I,n}:=\|\F_{\M I}\psi\|_{\M,n}
\end{align}
and for any $j=0,1,2,3$ the bounded operator
\begin{align}
  \partial_j: \cH_{I,n+1}\to \cH_{I,n},
  &
  \quad \psi \mapsto -i\F_{I\M}
  \, p_j \, \F_{\M I} \psi.
\end{align}
\end{definition}

We remark that for any placeholder $I$, standing for
$\M,\sol$, or $\Sigma$, the space
$(\cH_{I,n},\|{\cdot}\|_{I,n})$ is a Hilbert space containing
$\cC_I$ as a dense subspace. Furthermore, the multiplication operator
$p^\beta:
(\cH_{\M,n},\|{\cdot}\|_{\M,n})\to
(\cH_{\M,n-|\beta|},\|{\cdot}\|_{\M,n-|\beta|})$
is bounded. Note that the restriction of $\partial_j:\cH_{\sol,n+1}\to \cH_{\sol,n}$
to $\Cs$ is the differential operator
$\partial_j\psi(x)=\frac{\partial}{\partial x^j}\psi(x)$.
For $\psi\in\cH_{\M,n+1}$, $n\geq 0$, and
$\widehat\psi=\F_{\M\sol}\psi$ one \old{observes}\db has\ed
\begin{align}
  \Vert \partial_j \psi \Vert_{\sol,n}
  =
  \left\Vert p_j \widehat\psi \right\Vert_{\M,n}.
\end{align}

Thanks to the free Dirac equation, the restriction of $\partial_j:\cH_{\Sigma,n+1}\to\cH_{\Sigma,n}$ to
$\cC_\Sigma$ is also a differential operator. For $\vec x=(x^1,x^2,x^3)\in\R^3$ it takes the form
\begin{align}\label{eq:partial-k}
  \partial_j \psi(x)
  =
  -i\left(
    \sum_{k=1}^3 \alpha^\Sigma_{jk}(x)D_k
    + \beta^\Sigma_{j}(x)
  \right) \psi(x),
  \qquad
  x=(t_\Sigma(\vec x),\vec x),
\end{align}
with some smooth functions $\alpha^\Sigma_{jk},
\beta^\Sigma_{j}:\Sigma\to\C^{4\times4}$ depending only on the geometry of
$\Sigma$ and
\begin{align}\label{eq:D-k}
  D_k \psi(t_\Sigma(\vec x),\vec x) 
  =
  \frac{\partial}{\partial x^k}
  \left(
    \psi(t_\Sigma(\vec x),\vec x)
  \right) = \partial_k \psi(t_\Sigma(\vec x),\vec x) + \frac{\partial
  t_\Sigma(\vec x)}{\partial x^k} \partial_0
  \psi(t_\Sigma(\vec x),\vec x).
\end{align}

The following lemma shows that pointwise evaluation for elements of
$\cH_{\sol,n}$ makes sense whenever $n\ge 2$.

\begin{lemma}[Pointwise Evaluation]
\label{lemma: Pointwise evaluation}
For $n\in\N$ with $n\ge 2$ and $x\in\R^4$, the evaluation map 
$\delta_x:\Cs\to \C^4$, $\psi\mapsto \psi(x)$,
extends to a bounded linear map $\delta_x:
(\cH_{\sol,n},\|{\cdot}\|_{\sol,n})\to\C^4$, also denoted
by $\delta_x:\psi\mapsto \psi(x)$.
\end{lemma}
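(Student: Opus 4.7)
The plan is to reduce the claim to a uniform bound on $\Cs$ in the norm $\|\cdot\|_{\sol,n}$ and then extend by continuity using the stated density of $\Cs$ in $\cH_{\sol,n}$. The main tool is the explicit inverse Fourier transform formula (\ref{def FsM}), which for $\psi\in\Cs$ and $\widehat\psi:=\F_{\M\sol}\psi\in\CM$ gives
\begin{equation*}
\psi(x)=\frac{(2\pi)^{-3/2}}{m}\int_\M e^{-ipx}\,i_p(d^4p)\,\widehat\psi(p)
=(2\pi)^{-3/2}\int_\M e^{-ipx}\,\widehat\psi(p)\,d\mu(p),
\end{equation*}
where $d\mu(p)=i_p(d^4p)/m$ is the measure underlying the scalar product on $\HM$.

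First I would estimate $|\psi(x)|\le (2\pi)^{-3/2}\int_\M|\widehat\psi(p)|\,d\mu(p)$, using $|e^{-ipx}|=1$ for real $p\in\M$, which already makes the resulting bound uniform in $x\in\R^4$. Then I would apply the Cauchy--Schwarz inequality with the weight $w(p)=(1+|p|)^n$:
\begin{equation*}
\int_\M|\widehat\psi(p)|\,d\mu(p)
\le\left(\int_\M(1+|p|)^{-2n}\,d\mu(p)\right)^{1/2}
\left(\int_\M(1+|p|)^{2n}|\widehat\psi(p)|^2\,d\mu(p)\right)^{1/2}.
\end{equation*}
Parametrizing $\M_\pm$ by $\vec p\in\R^3$ via $p_\pm(\vec p)$ and using (\ref{eq:ipd4p}) so that $d\mu=\frac{m}{|p^0|}d^3\vec p$, the first integral behaves like $\int_{\R^3}(1+|\vec p|)^{-2n}|\vec p|^{-1}\,d^3\vec p$ at infinity, which is finite precisely because $n\ge 2>1$. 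This is the step where the hypothesis $n\ge 2$ is used, and pinning down the threshold correctly is the one genuine point of care in the proof.

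Next I would dominate the $L^2$-weighted integral by the Sobolev norm $\|\widehat\psi\|_{\M,n}^2$ from (\ref{def-normMn}). Since $(1+|p|)^{2n}\le C_n\bigl(1+\sum_{\mu=0}^{3}|p_\mu|^{2n}\bigr)$, and each monomial $p_\mu^{2n}=|p^\beta|^2$ with $\beta=n\,e_\mu$ and $|\beta|=n$ appears (together with the $|\beta|=0$ term) in the sum defining $\|\widehat\psi\|_{\M,n}^2$, we obtain
\begin{equation*}
\int_\M(1+|p|)^{2n}|\widehat\psi(p)|^2\,d\mu(p)\le C_n'\|\widehat\psi\|_{\M,n}^2=C_n'\|\psi\|_{\sol,n}^2,
\end{equation*}
where the last equality is the definition $\|\psi\|_{\sol,n}=\|\F_{\M\sol}\psi\|_{\M,n}$.

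Combining the two bounds yields $|\psi(x)|\le C\|\psi\|_{\sol,n}$ for all $\psi\in\Cs$, with $C$ independent of $x$. Since $\Cs$ is dense in $(\cH_{\sol,n},\|\cdot\|_{\sol,n})$ (as noted after Definition~\ref{def:sobolev}), the linear map $\delta_x:\Cs\to\C^4$ extends uniquely and continuously to a bounded linear map on $\cH_{\sol,n}$, completing the proof. No obstacle beyond the correct integrability check is anticipated; the $n\ge 2$ hypothesis is sharp only at the level of this argument and is what makes the weight $(1+|p|)^{-2n}$ integrable on the three-dimensional mass shell.
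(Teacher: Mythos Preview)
Your overall strategy---Fourier representation plus Cauchy--Schwarz with a polynomial weight, then density---is exactly the paper's. However, there is a genuine gap in the measure bookkeeping that makes the argument as written fail to close.

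The form $d\mu(p)=i_p(d^4p)/m=\frac{m}{p^0}\,d^3\vec p$ is \emph{signed}: it is negative on $\M_-$. So the triangle-inequality step must use $|d\mu|=\frac{m}{|p^0|}\,d^3\vec p$, which you seem to do implicitly. The real problem is the second Cauchy--Schwarz factor. By (\ref{skalarprodukt 2}) the $\HM$-norm, and hence the Sobolev norm (\ref{def-normMn}), is an $L^2$-norm of $|\widehat\psi|$ against the \emph{different} positive measure $d\nu=\frac{m^2}{(p^0)^2}\,d^3\vec p$, not against $d\mu$:
\[
\|\widehat\psi\|_{\M,n}^2=\sum_{|\beta|\le n}\int_\M |p^\beta|^2\,|\widehat\psi(p)|^2\,d\nu(p).
\]
Since $|d\mu|=\frac{|p^0|}{m}\,d\nu$ with $\frac{|p^0|}{m}$ unbounded on $\M$, your claimed inequality
$\int_\M(1+|p|)^{2n}|\widehat\psi|^2\,|d\mu|\le C_n'\|\widehat\psi\|_{\M,n}^2$
would require $(1+|p|)^{2n}\frac{|p^0|}{m}\lesssim (1+|p|)^{2n}$, which is false.

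The fix is easy and is what the paper does: run Cauchy--Schwarz with the measure $d\nu$ instead. Writing $\frac{i_p(d^4p)}{m}=\frac{p^0}{m}\,d\nu$, one has
\[
|\psi(x)|\le (2\pi)^{-3/2}\Bigl(\int_\M (p^0)^{-2k}\,\tfrac{(p^0)^2}{m^2}\,d\nu\Bigr)^{1/2}
\Bigl(\int_\M (p^0)^{2k}\,|\widehat\psi|^2\,d\nu\Bigr)^{1/2}.
\]
With the weight $(p^0)^k$ (the paper takes $k=2$), the second factor is exactly $\|(p_0)^k\widehat\psi\|\le\|\widehat\psi\|_{\M,n}$ for $n\ge k$, and the first factor equals $m^{-2}\int_{\R^3}E(\vec p)^{-2k+2-2}\,m^2\,d^3\vec p\cdot 2$, i.e.\ is finite precisely for $k\ge 2$. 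This recovers the threshold $n\ge 2$ and completes the argument.
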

\begin{proof}
Given
$\phi\in\CM$ and $n\ge 2$, 
using the Definition~(\ref{def FsM}) of $\F_{\sol\M}$,
the Cauchy-Schwarz-inequality, and (\ref{skalarprodukt 2}),
one has 
\begin{align}
|\F_{\sol\M}\phi(x)|^2&=
\left|\frac{(2\pi)^{-3/2}}{m}\int_\M e^{-ipx}\phi(p)\, i_p(d^4p)\right|^2
\cr&
\le\frac{(2\pi)^{-3}}{m^2}
\int_\M 
|q^0|^{-2} \,\frac{i_q(d^4q)}{q^0}\
\int_\M 
|(p^0)^2\phi(p)|^2\, \frac{i_p(d^4p)}{p^0}
\cr&\le
\constl{const:delta}^2\|(p_0)^2\phi\|^2
\le\constr{const:delta}^2\|\phi\|_{\M,n}^2
\end{align}
with some positive constant $\constr{const:delta}=\constr{const:delta}(m)$.
Setting $\phi=\F_{\M\sol}\psi$ for any given $\psi\in\Cs$,
it follows
\begin{equation}
|\psi(x)|\le \constr{const:delta}\|\phi\|_{\M,n}=
\constr{const:delta}\|\psi\|_{\sol,n}.
\end{equation}
\old{When passing to the completion $\cH_{\sol,n}$, the claim follows.}
The claim then follows by passing to the completion in $\cH_{\sol,n}$.
\end{proof}

\subsection{Existence, Uniqueness, and Causal Structure}
\label{sec:main-results}

The next theorem is about 
the well-posedness of the initial value problem corresponding
to (\ref{eq:dirac-equation}).
For a given  $\psi\in\CA$ and a
Cauchy surface $\Sigma$, we denote the
restriction of $\psi$ to $\Sigma$ by
$\psi|_{\Sigma}\in\cC^\infty_c(\Sigma,\C^4)$.\ed
\begin{theorem}[Initial Value Problem and Support]
    \label{dirac-existence-uniqueness}
  Let $\Sigma$ be a Cauchy surface and
  $\chi_\Sigma\in\cC_{c}^{\infty}(\Sigma,\C^{4})$ be given initial data.
  Then the following is true: 
  \begin{enumerate}[(i)]
    \item There is a $\psi\in\CA$ such that $\psi|_{\Sigma}=\chi_\Sigma$
       and $\supp \psi \subseteq \supp \chi_\Sigma + \operatorname{Causal}$.
    \item Suppose $\widetilde\psi\in\cC^\infty(\R^4,\C^4)$ solves the Dirac
        equation
        (\ref{eq:dirac-equation}) for initial data $\widetilde\psi|_\Sigma=\chi_\Sigma$. 
        Then $\widetilde\psi=\psi$.
  \end{enumerate}
\end{theorem}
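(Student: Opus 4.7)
My plan is to prove existence (i) by a Picard-Lindel\"of argument in an interaction picture between Cauchy surfaces, and uniqueness (ii) by a local energy estimate based on the conservation of the Dirac current. The free case $A=0$ is immediate: $\psi=\F_{\sol\Sigma}\chi_\Sigma\in\Cs$ solves the free Dirac equation with $\psi|_\Sigma=\chi_\Sigma$, and Theorem~\ref{thm:F_maps}(b) already gives $\supp\psi\subseteq\supp\chi_\Sigma+\causal$.

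\textbf{Existence.} For general $A$, I would fix a smooth foliation $(\Sigma_t)_{t\in\R}$ of $\R^4$ by Cauchy surfaces with $\Sigma_0=\Sigma$, uniformly bounded $|\nabla t_{\Sigma_t}|\le V<1$, and $\Sigma_t\cap\supp A=\emptyset$ for $|t|$ large. Seek the solution through the interaction-picture vector $\phi(t):=\F_{\Sigma\Sigma_t}(\psi|_{\Sigma_t})\in\HSigma$. Using~(\ref{eq:dirac-equation}) together with the expression~(\ref{eq:partial-k}) for $\partial_\mu$ on $\Sigma_t$, one derives an evolution equation
\[
\frac{d\phi(t)}{dt} = -iK(t)\phi(t),\qquad \phi(0)=\chi_\Sigma,
\]
with $K(t) := \F_{\Sigma\Sigma_t}B(t)\F_{\Sigma_t\Sigma}$, where $B(t)$ is multiplication on $\cH_{\Sigma_t}$ by a smooth matrix-valued function supported in $\Sigma_t\cap\supp A$ and determined by $\slashed A$ and the geometry of $\Sigma_t$. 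Because $A$ is compactly supported, $B(t)$ is bounded on each Sobolev space $\cH_{\Sigma_t,n}$ and $K(t)$ vanishes outside a bounded range of $t$; Picard-Lindel\"of then provides a unique global $\phi$ compatible with every $n$. Setting $\psi|_{\Sigma_t}:=\F_{\Sigma_t\Sigma}\phi(t)$ and invoking pointwise evaluation (Lemma~\ref{lemma: Pointwise evaluation}) yields $\psi\in C^\infty(\R^4,\C^4)$ with $\psi|_\Sigma=\chi_\Sigma$ solving~(\ref{eq:dirac-equation}). The causal inclusion $\supp\psi\subseteq\supp\chi_\Sigma+\causal$ is inherited by every Picard iterate, since the free evolution $\F_{\Sigma_t\Sigma_s}=\F_{\Sigma_t\sol}\F_{\sol\Sigma_s}$ preserves the causal cone by Theorem~\ref{thm:F_maps}(b) and $B(s)$ never enlarges supports.

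\textbf{Uniqueness and main obstacle.} Given another smooth solution $\widetilde\psi$ with $\widetilde\psi|_\Sigma=\chi_\Sigma$, set $\eta:=\widetilde\psi-\psi$, so $\eta$ solves~(\ref{eq:dirac-equation}) with $\eta|_\Sigma=0$. The computation~(\ref{eq:stokes-argument}) shows that $\omega:=\overline\eta\,i_\gamma(d^4x)\,\eta$ is closed. For any $x_0$ strictly in the future of $\Sigma$, let $J^-(x_0)$ denote its causal past and pick a smooth Cauchy foliation $(\Sigma_t)_{t\in[0,T]}$ with $\Sigma_0=\Sigma$ and $x_0\in\Sigma_T$; define $f(t):=\int_{\Sigma_t\cap J^-(x_0)}\omega$. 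Applying Stokes to the compact slab $\bigcup_{s\in[0,t]}\Sigma_s\cap J^-(x_0)$, whose lateral boundary lies on the past null cone of $x_0$, yields $f(t)-f(0)=-\int_{\text{lateral}}\omega\le 0$ because the lateral integrand $\overline\eta\,\slashed n\,\eta$ is non-negative thanks to $\gamma^0\slashed n\ge 0$ for future-null outward normals $n$. Since $f(0)=0$ and $f\ge 0$ by the positivity of $\gamma^0\slashed n$ on time-like $n$, one gets $f\equiv 0$, hence $\eta\equiv 0$ on $J^-(x_0)\cap J^+(\Sigma)$ and in particular $\eta(x_0)=0$; the past-of-$\Sigma$ case is symmetric, so $\eta\equiv 0$. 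The main technical obstacle is the clean derivation of the interaction-picture ODE above: identifying $B(t)$ explicitly, verifying its boundedness and continuous $t$-dependence on each $\cH_{\Sigma_t,n}$, and controlling the conjugation by $\F_{\Sigma\Sigma_t}$ uniformly as $\Sigma_t$ varies. The compact support of $A$ is essential here in confining the effective $t$-range and thereby making the Picard estimates uniform.
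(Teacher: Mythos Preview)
Your existence argument follows essentially the same route as the paper: a Picard--Lindel\"of iteration in an interaction picture along a foliation $(\Sigma_t)_t$, with the interaction operator given by conjugating the multiplication operator $v_t\slashed n_t\slashed A$ with the free evolution, and support control inherited by every iterate from Theorem~\ref{thm:F_maps}(b). The paper carries this out with values in $\Hs$ rather than $\HSigma$, but that is an inessential difference; the technical core (boundedness and continuity of $t\mapsto L_t$ on the Sobolev scale, regularity via pointwise evaluation) is the same, and is the content of Lemma~\ref{lem:reg-Lt}, Lemma~\ref{lem:interaction-existence-uniqueness}, and Theorem~\ref{lemma: equiv}.

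Your uniqueness argument, however, is genuinely different from the paper's. You use the classical local energy estimate: the closed current $\omega=\overline\eta\,i_\gamma(d^4x)\,\eta$, Stokes' theorem on a truncated causal diamond, and the semi-definiteness of $\gamma^0\slashed n$ on the null lateral boundary. This is correct and self-contained, and it proves directly that any smooth solution with vanishing data on $\Sigma$ must vanish, without ever invoking part~(i). The paper instead runs a \emph{duality argument}: it first reduces to showing that an arbitrary smooth solution $\widetilde\psi$ with data $\chi_\Sigma$ automatically lies in $\CA$ (i.e.\ has spatially compact causal support), and does this by pairing $\widetilde\psi$ against solutions $\phi\in\CA$ produced by part~(i) from test data $\varphi_{\Sigma'}$ supported outside $\supp\chi_\Sigma+\causal$; Stokes gives $\langle\phi,\widetilde\psi\rangle_{\Sigma'}=\langle\phi,\widetilde\psi\rangle_{\Sigma}=0$, forcing $\supp\widetilde\psi\subseteq\supp\chi_\Sigma+\causal$. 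Once $\widetilde\psi\in\CA$, uniqueness follows from the interaction-picture uniqueness. Your approach is more elementary and avoids the circular-looking use of existence inside the uniqueness proof; the paper's approach avoids having to integrate over null hypersurfaces (with the attendant care about orientations, degenerate volume forms, and the corner at the vertex of the cone), trading that for the already-established machinery of $\F_{\Sigma'\Sigma}$.
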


This theorem gives rise to the following definition.

\begin{definition}[Evolution Operator]
  Let $\Sigma'$ be another Cauchy surface. Given $\chi_{\Sigma}$ with the
  corresponding $\psi\in\CA$ as above,
  we define the Dirac evolution from $\Sigma$ to $\Sigma'$ by
  \begin{align}\F^A_{\Sigma'\Sigma}\chi_{\Sigma}:=\psi|_{\Sigma'}\end{align}
  which yields a map $\F^A_{\Sigma'\Sigma}:\cC^{\infty}_{c}(\Sigma,\C^{4})\to\cC^{\infty}_{c}(\Sigma',\C^{4})$.
\end{definition}

\db As a direct consequence of Theorem~\ref{dirac-existence-uniqueness} we
infer:\ed

\begin{theorem}[Unitary Evolution]
\label{thm: Hilbert space time evolution}
The map
$\F^A_{\Sigma'\Sigma}:\cC^{\infty}_{c}(\Sigma,\C^{4})\to\cC^{\infty}_{c}(\Sigma',\C^{4})$
extends uniquely to a unitary map
$\F^A_{\Sigma'\Sigma}:\HSigma\to\cH_{\Sigma'}$.
\end{theorem}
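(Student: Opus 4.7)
The plan is to establish unitarity in three moves: prove that $\F^A_{\Sigma'\Sigma}$ is an isometry on the dense subspace $\CSigma$, extend by the BLT theorem to an isometric map on $\HSigma$, and then deduce surjectivity onto $\cH_{\Sigma'}$ via a symmetric application of Theorem~\ref{dirac-existence-uniqueness}.

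For the isometry step, I would proceed as follows. Given $\chi_\Sigma,\widetilde\chi_\Sigma\in\CSigma$, Theorem~\ref{dirac-existence-uniqueness} produces unique solutions $\psi,\widetilde\psi\in\CA$ extending them, and by construction $\F^A_{\Sigma'\Sigma}\chi_\Sigma=\psi|_{\Sigma'}$ and $\F^A_{\Sigma'\Sigma}\widetilde\chi_\Sigma=\widetilde\psi|_{\Sigma'}$. Since $\psi,\widetilde\psi$ have spatially compact causal support, their product form $\overline{\psi(x)}i_\gamma(d^4x)\widetilde\psi(x)$ has compact support on every Cauchy surface (any Cauchy surface meets a set $K+\causal$ in a compact set), so the relevant integrals converge. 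The Stokes argument already carried out in equation~(\ref{eq:stokes-argument}) shows that this 3-form is closed whenever both $\psi$ and $\widetilde\psi$ solve the Dirac equation with the same external potential $A$, since the $\slashed A$ contributions cancel. Applying Stokes' theorem to the space-time region bounded by $\Sigma$ and $\Sigma'$ (decomposed, if necessary, via a partition of unity into pieces where the two surfaces can be connected by a compact cobordism) then yields
\begin{align}
\sk{\chi_\Sigma,\widetilde\chi_\Sigma}_{\HSigma}
=\int_\Sigma\overline{\psi(x)}i_\gamma(d^4x)\widetilde\psi(x)
=\int_{\Sigma'}\overline{\psi(x)}i_\gamma(d^4x)\widetilde\psi(x)
=\sk{\F^A_{\Sigma'\Sigma}\chi_\Sigma,\F^A_{\Sigma'\Sigma}\widetilde\chi_\Sigma}_{\cH_{\Sigma'}}.
\end{align}
Density of $\CSigma$ in $\HSigma$ (Definition~\ref{def:HSigma}) combined with the bounded linear extension theorem then produces the unique isometric extension $\F^A_{\Sigma'\Sigma}:\HSigma\to\cH_{\Sigma'}$.

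Surjectivity follows by reversing the roles of $\Sigma$ and $\Sigma'$. Given any $\chi_{\Sigma'}\in\cC^\infty_c(\Sigma',\C^4)$, Theorem~\ref{dirac-existence-uniqueness} applied with Cauchy surface $\Sigma'$ yields a solution $\psi\in\CA$ with $\psi|_{\Sigma'}=\chi_{\Sigma'}$ and $\supp\psi\subseteq\supp\chi_{\Sigma'}+\causal$. The restriction $\chi_\Sigma:=\psi|_\Sigma$ is smooth (since $\psi\in\cC^\infty(\R^4,\C^4)$) and has compact support on $\Sigma$ (since $\Sigma$ meets $\supp\chi_{\Sigma'}+\causal$ in a compact set), so $\chi_\Sigma\in\CSigma$. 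The uniqueness clause of Theorem~\ref{dirac-existence-uniqueness} identifies the solution associated with $\chi_\Sigma$ as this same $\psi$, so $\F^A_{\Sigma'\Sigma}\chi_\Sigma=\chi_{\Sigma'}$. Thus the range of the extended isometry contains $\cC^\infty_c(\Sigma',\C^4)$, which is dense in $\cH_{\Sigma'}$; combined with closedness of the image of an isometry between Hilbert spaces, this yields surjectivity, hence unitarity.

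The main subtlety lies in the Cauchy-surface independence of the scalar product integral, i.e., the legitimate application of Stokes' theorem when $\Sigma$ and $\Sigma'$ are arbitrary (possibly intersecting) Cauchy surfaces rather than two parallel hyperplanes. The compactness of $\supp(\overline\psi\,i_\gamma(d^4x)\,\widetilde\psi)\cap(\text{region between }\Sigma\text{ and }\Sigma')$, which is what makes the divergence theorem applicable, depends on exploiting the spatially compact causal support of elements of $\CA$ together with condition~(c) in Definition~\ref{def:cauchy-surface} controlling the tilt of tangent spaces; the other steps are routine once this geometric fact is in hand.
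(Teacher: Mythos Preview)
Your proof is correct and uses the same essential ingredients as the paper: Theorem~\ref{dirac-existence-uniqueness} for bijectivity at the level of smooth data, the Stokes argument~(\ref{eq:stokes-argument}) for isometry, and density for the unitary extension. The only cosmetic difference is that the paper factors the map through the solution space $\CA$ (with its Cauchy-surface-independent scalar product~(\ref{eq:scalar-product-sol})), writing $\F^A_{\Sigma'\Sigma}=\F_{\Sigma'A}\F_{A\Sigma}$ as a composition of two isometric bijections $\CSigma\to\CA\to\cC_{\Sigma'}$; this absorbs your Stokes step into the already-established well-definedness of the scalar product on $\CA$, so surjectivity comes for free without a separate argument.
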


As discussed in the introduction, there are several different strategies of proof for
Theorem~\ref{dirac-existence-uniqueness} and \ref{thm: Hilbert space
time evolution} in the literature. In this work we will give a proof
with the help of the just introduced generalized
Fourier transforms. We recall that the
collected results about these Fourier transforms in Section~\ref{sec:free-sol}
already include a proof of Theorem~\ref{dirac-existence-uniqueness}  and 
\ref{thm: Hilbert space time evolution} in the case of
$A=0$. With a Picard-Lindel\"of argument this result can readily be extended to include
an external vector potential in the Dirac evolution.
We shall use this opportunity to introduce a convenient interaction picture
adapted to Cauchy surfaces; see Section~\ref{sec:interaction}.
The main ingredient in the switching from the Schr\"odinger picture to this
interaction picture are again the generalized Fourier
transforms.  In the interaction picture, the
Dirac equation is rephrased in terms of an ordinary differential equation for
functions taking values in Sobolev spaces, introduced in
Definition~\ref{def:sobolev}, composed of solutions of the free Dirac equation.
The Picard-Lindel\"of theorem then yields existence and uniqueness of solutions,
see Lemma~\ref{lem:interaction-existence-uniqueness}, while regularity of
solutions is analyzed with the help of a version of Sobolev's lemma adapted to
Cauchy surfaces; see Lemma~\ref{lemma: regularity}.  The support properties of
the free Dirac evolution and the
Picard-Lindel\"of iteration imply the support properties of the solutions of the
Dirac equation with external potential.

%
%

\subsection{An Interaction Picture on Cauchy Surfaces}\label{sec:interaction}

\db As discussed in the previous section it can be useful to switch to an
interaction picture in order to treat the interaction with the external
potential.  For this we introduce a family of Cauchy surfaces
$(\Sigma_t)_{t\in\R}$ driven by a family of normal vector fields 
$(v_t n_t|_{\Sigma_t})_{t\in\R}$, where $n:\R^4\times\R\to \R^4, x\mapsto
n_t^\mu(x)$ and $v:\R^4\times\R\to\R, v:(x,t)\mapsto v_t(x)$ are smooth
functions. 
For $x\in\Sigma_t$ the vector $n_t(x)$ denotes the future-directed unit-normal
vector to $\Sigma_t$ and $v_t$ the corresponding normal velocity of the flow of
Cauchy surfaces.
In particular, given 
initial values $x_0\in\Sigma_0$, the solutions of the ODE
\begin{align}
    \dot x_t^\mu = v_t(x_t)n^\mu_t(x_t),
    \qquad
    \mu=0,1,2,3,
\end{align}
give rise to trajectories $t\mapsto x_t$ with $x_t\in\Sigma_t$ for all $t\in\R$.
Furthermore, we define the set 
$\mathbf{\Sigma}=\left\{(x,t) \in \R^4\times\R \, | \, x\in\Sigma_t \right\}$.
In case, the following conditions are satisfied:
\begin{itemize}
    \item $v_t(x) > 0$ for all $(x,t)\in\mathbf{\Sigma}$;
    \item the projection $F:\mathbf{\Sigma}\to \R^4, (x,t)\mapsto x$ is a
        diffeomorphism,
\end{itemize}
we call $\mathbf{\Sigma}$ a future-directed foliation of space-time and define
$(y,\tau(y)) := F^{-1}(y)$ for $y\in\R^4$ for which
\begin{equation}
\partial_\mu \tau(x) = (n_{\tau(x)})_\mu(x) \, v_{\tau(x)}(x)^{-1}
    \label{eq:partial-tau}
\end{equation}
holds. 
Though defining $n$ and $v$ on $\mathbf \Sigma$ would suffice, it is sometimes
convenient to have them on the whole space $\R^4\times\R$.
A simple example of a foliation of space-time is given by
$\Sigma_t=\Sigma + te_0$ for $t\in\R$.
The following lemma describes the transition from the Dirac equation in the
Sch\"odinger picture to an interaction picture associated to \db the given family
of hypersurfaces \ed and
vice versa.  It is proven in Section~\ref{sect: class sol}, below.
\begin{theorem}[Equivalence of the Schr\"odinger Picture and the Interaction Picture]
\label{lemma: equiv}
Consider a future-directed foliation $\mathbf{\Sigma}$, the Cauchy surface
$\Sigma=\Sigma_{t=0}$, and let $\chi_\Sigma\in\CSigma$.
\begin{enumerate}
\item
Assume that $\psi\in\CA$ fulfills the initial condition
$\psi|_\Sigma=\chi_\Sigma$.
Define \db $\phi_t=\F_{0\Sigma_t}\psi|_{\Sigma_t}\in \cC_0$ \ed for all $t\in\R$.
Then the function $\phi:\R^4\times\R\to\C^4$,
$\db (x,t)\mapsto\ed\phi(x,t)=\phi_t(x)$ is smooth. It fulfills
the initial condition
\begin{align}
&\phi_0=\F_{\sol\Sigma}\chi_\Sigma
\label{eq:initial cond interaction picture}
\end{align}
and the following evolution equation for all $t\in\R$ and $x\in\R^4$:
\begin{align}
&i\frac{\partial}{\partial t} \phi_t(x)
=
L_t \phi_t(x)
\qquad
\text{ with }
\qquad
\db L_t := \F_{\sol \Sigma_t} ( v_t \slashed n_t \slashed A ) \F_{\Sigma_t \sol} \ed
:\Cs\selfmaps.
\label{eq:interaction-evolution-Cs}
\end{align}
Here, $(v_t\slashed n_t\slashed A):
\cC_{\Sigma_t}\selfmaps$ is understood as a multiplication operator
\begin{align}
    (v_t\slashed n_t\slashed A)\xi (x) = v_t(x)\slashed n_t(x)\slashed A(x)\xi(x),
    \qquad\text{for }\xi\in\cC_{\Sigma_t}, x\in\Sigma_t.
\end{align}
Furthermore, there is a compact set $K\subset\R^4$ such that
for all $t\in\R$ the function $\phi_t\in\Cs$ is supported in $K+\causal$.
Finally, one has $\psi(x)=\phi(x,\tau(x))$ for all $x\in\R^4$.
\item
Conversely, let
$\phi:\R^4\times\R\to\C^4$,
$\phi(x,t)=\phi_t(x)$ be a smooth function, supported
in  $(K+\causal)\times\R$ for some compact set $K\subseteq\R^4$.
Assume that $\phi_t\in\Cs$ for all $t\in\R$, and that $\phi$ fulfills
the evolution equation
(\ref{eq:interaction-evolution-Cs})
and the initial condition (\ref{eq:initial cond interaction picture}).
Let
\begin{equation}
\psi:\R^4\to\C^4,\quad
\psi(x):=\phi(x,\tau(x)).
\label{eq:psi-from-phi}
\end{equation}
Then $\psi\in\CA$ and $\psi|_\Sigma=\chi_\Sigma$. Finally, one has
\db $\phi_t=\F_{0\Sigma_t}\psi|_{\Sigma_t}$ \ed for all $t\in\R$.
\end{enumerate}
\end{theorem}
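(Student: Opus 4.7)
\emph{Overall strategy.} My plan is to reduce both directions to the central identity $\psi(x)=\phi(x,\tau(x))$, which holds because $x\in\Sigma_{\tau(x)}$ by definition of $\tau$ and because $\phi_t|_{\Sigma_t}=\psi|_{\Sigma_t}$ (from $\F_{\Sigma_t\sol}\F_{\sol\Sigma_t}=\id$ in Theorem~\ref{thm:F_maps}). A chain-rule computation then converts the Dirac equation into the evolution equation (Part 1) and vice versa (Part 2). The operator $L_t$ has been designed so that
\begin{equation*}
\F_{\Sigma_t\sol}(L_t\xi)=(v_t\slashed n_t\slashed A)\,\F_{\Sigma_t\sol}\xi\qquad\text{for }\xi\in\Cs,
\end{equation*}
i.e.\ its Cauchy data on $\Sigma_t$ is pointwise multiplication by $v_t\slashed n_t\slashed A$. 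Since an element of $\Cs$ is determined by its restriction to any Cauchy surface (Theorem~\ref{thm:F_maps}), it will suffice to verify the evolution equation after restricting to $\Sigma_t$.

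\emph{Part 1.} Joint smoothness of $(x,t)\mapsto\phi(x,t)$ follows from the integral representation $\phi_t=\F_{\sol\M}\F_{\M\Sigma_t}(\psi|_{\Sigma_t})$ using formulas (\ref{def FMSigma}) and (\ref{def FsM}): both the parametrizing function $t_{\Sigma_t}$ and the data $\psi|_{\Sigma_t}$ vary smoothly in $t$, and the Paley--Wiener bound of Theorem~\ref{lemma: FMSigma} justifies differentiation under the integral sign. The initial condition $\phi_0=\F_{\sol\Sigma}\chi_\Sigma$ is immediate from $\Sigma_0=\Sigma$ and $\psi|_\Sigma=\chi_\Sigma$. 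For the evolution equation, differentiation of the identity $\psi(x)=\phi(x,\tau(x))$ together with (\ref{eq:partial-tau}) gives
\begin{equation*}
\slashed\partial\psi(x)=\slashed\partial\phi_t(x)\big|_{t=\tau(x)}+v_{\tau(x)}(x)^{-1}\slashed n_{\tau(x)}(x)\,\partial_t\phi(x,\tau(x)).
\end{equation*}
Substituting $\slashed\partial\phi_t=-im\phi_t$ and $\slashed\partial\psi=-im\psi-i\slashed A\psi$, left-multiplying by $v_{\tau(x)}(x)\slashed n_{\tau(x)}(x)$, and using $\slashed n_t^2=1$ produces $i\partial_t\phi(x,\tau(x))=v_{\tau(x)}(x)\slashed n_{\tau(x)}(x)\slashed A(x)\phi(x,\tau(x))$; restricting to $x\in\Sigma_t$ this reads $(i\partial_t\phi_t-L_t\phi_t)|_{\Sigma_t}=0$, hence $i\partial_t\phi_t=L_t\phi_t$ in $\Cs$. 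A uniform support bound $\supp\phi_t\subseteq K+\causal$ is obtained from Theorem~\ref{thm:F_maps}(b) applied to the data $\psi|_{\Sigma_t}$ together with the spatially compact causal support of $\psi\in\CA$ from Theorem~\ref{dirac-existence-uniqueness}.

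\emph{Part 2 and main obstacle.} For the converse, I set $\psi(x):=\phi(x,\tau(x))$; smoothness is inherited from $\phi$ and $\tau$. Running the same chain-rule identity in reverse, now substituting the free Dirac equation for the first term on the right and the restriction $i\partial_t\phi_t(x)=v_t(x)\slashed n_t(x)\slashed A(x)\phi_t(x)$ for $x\in\Sigma_t$ (obtained by applying $\F_{\Sigma_t\sol}$ to the assumed $i\partial_t\phi_t=L_t\phi_t$) for the second, yields $(i\slashed\partial-\slashed A)\psi=m\psi$. The initial condition $\psi|_\Sigma=\chi_\Sigma$ follows from $\tau|_\Sigma=0$, $\phi_0=\F_{\sol\Sigma}\chi_\Sigma$, and $\F_{\Sigma\sol}\F_{\sol\Sigma}=\id$; causal support of $\psi$ is inherited directly from the hypothesis on $\phi$; and the identity $\phi_t=\F_{\sol\Sigma_t}\psi|_{\Sigma_t}$ holds because both sides lie in $\Cs$ and agree on $\Sigma_t$ by construction. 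The main technical obstacle I anticipate lies entirely in Part 1: the rigorous proof of joint smoothness of $\phi$ and the justification of differentiation under the generalized Fourier integral defining $\phi_t$. This requires uniform control of the Paley--Wiener constants in the parametrization $t_{\Sigma_t}$ over compact time intervals, which should follow by verifying that the family $\{\Sigma_t\}$ lies in a single class $\mathcal S(K_0,V_0)$ uniformly on such intervals, using smoothness of the foliation and condition (c) of Definition~\ref{def:cauchy-surface}.
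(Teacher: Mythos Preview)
Your chain-rule computation linking the interaction-picture evolution to the Dirac equation is correct and is exactly the paper's core argument (its equivalence chain~(\ref{eq: Aequivalenzkette})). However, your argument for the uniform support bound in Part~1 has a genuine gap. Theorem~\ref{thm:F_maps}(b) only gives $\supp\phi_t\subseteq\supp(\psi|_{\Sigma_t})+\causal$, and although $\supp(\psi|_{\Sigma_t})\subseteq K_\psi+\causal$ by the definition of $\CA$, you cannot iterate: $\causal$ contains both forward and backward cones, so $\causal+\causal=\R^4$. Concretely, as $|t|\to\infty$ the slice $\supp\psi\cap\Sigma_t$ can spread without bound, and $\supp(\psi|_{\Sigma_t})+\causal$ is not contained in any fixed $K+\causal$. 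The paper repairs this by first establishing the evolution equation, then using that $A$ is compactly supported to conclude $L_t=0$ for $|t|$ large, so $\phi_t$ is eventually constant; one then only needs a compact interval $I=[t_0,t_1]$, on which $K:=\supp\psi\cap\bigcup_{t\in I}\Sigma_t$ is compact and works. (Separately, citing Theorem~\ref{dirac-existence-uniqueness} here would be circular, since that theorem is proved \emph{using} the present one; what you actually need is just Definition~\ref{def:CA}.)

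On smoothness your proposed route---differentiating under the Fourier integral with uniform Paley--Wiener control---is different from the paper's, which goes through the Sobolev machinery and Lemma~\ref{lemma: Ableitung translationen}. Your route is viable, but note that differentiating $t\mapsto\F_{\M\Sigma_t}(\psi|_{\Sigma_t})$ is differentiation of an integral over a moving surface; merely placing the $\Sigma_t$ in a common class $\mathcal S(K_0,V_0)$ bounds the integrand but does not by itself produce the derivative. You will need the Lie derivative $\mathcal L_{v_tn_t}$ along the foliation flow (as in the proof of Lemma~\ref{lem:reg-Lt}) to carry this out.
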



\section{Proofs}
\label{sec:proofs}

In this last section we provide the remaining technical proofs of the claims in
Section~\ref{sec:review}. It is split in two parts. The first part, given in
Section~\ref{sec:gen-fou-trafo}, concerns the generalized Fourier transforms. The
second part, given in Section~\ref{sec:dirac-et-overview}, concerns the
solution theory of the Dirac equation.

\subsection{Generalized Fourier Transforms}
\label{sec:gen-fou-trafo}

\subsubsection{Properties of the Maps $\F_{3\M}$ and $\F_{\M 3}$}

The following lemma extends Therorem~\ref{eq:M3bounds}.

\begin{lemma}\label{lem:FM3_F3M}
The maps $\F_{3\M}$ and $\F_{\M 3}$ are well-defined
unitary operators. They are inverse to each other.
\db Furthermore, one has $\F_{3\M}[\CM]=\CThree$, 
$\F_{\M 3}[\CThree]=\CM$, and  
for any $\alpha>0$, $\epsilon>0$, and $n\in\N$ the following bounds hold\ed
\begin{align}
\label{bound F3M-v1}
\|\F_{3\M}\psi\|_{3,\alpha, n-1}
&\le 
\constl{4-v1}\|\psi\|_{\M,\alpha, n}
&&\db \text{for }\psi\in\CM,\ed
\\
\label{bound F3M}
\|\F_{3\M}\psi\|_{3,\alpha, n}
&\le 
\constl{4}\|\psi\|_{\M,\alpha+\epsilon, n}
&&\text{for }\psi\in\CM,
\\
\label{bound FM3}
\|\F_{\M 3}\phi\|_{\M,\alpha, n}
&\le 
\constl{cfm3}\|\phi\|_{3,\alpha, n}
&& \text{for }\phi\in\CThree,
\end{align}
with positive constants $\constr{4-v1}=\constr{4-v1}(n,\alpha,m)$,
$\constr{4}=\constr{4}(n,\alpha,\epsilon,m)$, and
$\constr{cfm3}=\constr{cfm3}(n,m)$.
\end{lemma}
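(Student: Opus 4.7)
My plan is to establish the claims in four stages: mutual invertibility and unitarity of the $L^2$-extensions, the bijection between $\CM$ and $\CThree$, and then separately the short bound (\ref{bound FM3}) and the harder pair (\ref{bound F3M-v1}) and (\ref{bound F3M}). For the first stage, formulas (\ref{def F3M}) and (\ref{def FM3}) make pointwise sense on all of $\HM$ and $\HThree$ respectively and define bounded $L^2\to L^2$ operators. The identities $\F_{\M 3}\F_{3\M}=\mathrm{id}$ and $\F_{3\M}\F_{\M 3}=\mathrm{id}$ reduce to pointwise algebraic computations using the factorization $\frac{\slashed p_\pm+m}{2m}\gamma^0=\frac{\pm E(\vec p)}{m}P(p_\pm)$, the projector identities $P(p_\pm)\psi(p_\pm)=\psi(p_\pm)$ and $P(p_+)P(p_-)=0$ coming from ${\cal D}_{p_+}\perp{\cal D}_{p_-}$, and $\slashed p_+-\slashed p_-=2E\gamma^0$. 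Unitarity $\|\F_{3\M}\psi\|_3=\|\psi\|_\M$ then follows at once by expanding $|\psi(p_+)-\psi(p_-)|^2$, killing cross terms by the same orthogonality, and recognizing the result via (\ref{skalarprodukt 2}).

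For the second stage, $\F_{\M 3}[\CThree]\subseteq\CM$ is immediate because $(\slashed p+m)\gamma^0/(2m)$ is polynomial in $p$, so $\F_{\M 3}\phi$ inherits a holomorphic extension to $\M_\C$ from $\Phi:\C^3\to\C^4$. The converse inclusion $\F_{3\M}[\CM]\subseteq\CThree$ amounts to showing that $F:=\F_{3\M}\psi$ extends holomorphically from real $\vec p$ to all of $\C^3$. The only candidate singular set is the branch locus $B:=\{\vec p\in\C^3:m^2+\vec p\cdot\vec p=0\}$ on which $E$ vanishes; but on $B$ one has $p_+(\vec p)=p_-(\vec p)=(0,\vec p)$, so the numerator vanishes too. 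Working in a local chart of $\M_\C$ near a branch point in which $\Psi$ becomes a genuine holomorphic function of $(p^0,p^1,p^2)$, the expansion $\Psi(+E,\vec p)-\Psi(-E,\vec p)=2E\,\partial_{p^0}\Psi(0,\vec p)+O(E^3)$ shows that the quotient extends analytically across $B$.

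Bound (\ref{bound FM3}) is a short calculation: on $\M_\C$ one has $|p^0|^2=|E(\vec p)^2|=|m^2+\vec p\cdot\vec p|\le m^2+|\vec p|^2$, hence $|p|^2\le m^2+2|\vec p|^2$ and $|p|\le C(m\vee|\vec p|)$. Combined with the pointwise bound $|(\F_{\M 3}\phi)(p)|\le C(1+|p|)|\Phi(\vec p)|$, the polynomial factor $|p|^{n-1}$ is absorbed into $(m\vee|\vec p|)^n$ while the exponential weight $e^{-\alpha|\im\vec p|}$ is unchanged.

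The bounds (\ref{bound F3M-v1}) and (\ref{bound F3M}) constitute the main obstacle. For $\vec p$ with $|E(\vec p)|$ bounded below by a fixed positive constant, the triangle inequality together with $|p_\pm|^2=|E|^2+|\vec p|^2$ and the hypothesis on $\Psi$ give the estimate directly. Near the branch locus $B$, the factor $1/|E|$ in (\ref{def F3M}) blows up but $F$ remains holomorphic, so the plan is to represent $F(\vec p)$ by a Cauchy integral on a polydisc centered at $\vec p$, with radius chosen small enough to keep the contour away from $B$ yet large enough for the hypothesis on $\Psi$ to control the integrand. The crucial geometric input is that $B$ sits in the region $|\im\vec p|^2\ge m^2+|\re\vec p|^2\ge m^2$, since $\re(\vec p\cdot\vec p)=|\re\vec p|^2-|\im\vec p|^2=-m^2$ on $B$; consequently the exponential weight $e^{-\alpha|\im\vec p|}$ supplies built-in decay of order $e^{-\alpha m}$ on a neighborhood of $B$. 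This decay is traded either for an $e^{\epsilon|\im\vec p|}$ loss from enlarging the contour, yielding (\ref{bound F3M}), or for one power of the polynomial factor, yielding (\ref{bound F3M-v1}). The chief technical challenge is to organize this trade-off uniformly in $\vec p$: the polydisc radius must be tuned as a function of the distance from $\vec p$ to $B$ and of $|\vec p|$, which forces a case split according to whether $|\vec p|\le 2m$ or $|\vec p|\ge 2m$, and one must verify that the hypothesis on $\Psi$ remains applicable on the chosen contour.
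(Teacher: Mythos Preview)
Your stages one through three are correct and essentially parallel the paper. The paper also verifies isometry of $\F_{3\M}$ by expanding $|\psi(p_+)-\psi(p_-)|^2$ using orthogonality of $\mathcal D_{p_+}$ and $\mathcal D_{p_-}$, checks the two compositions pointwise via the projector $P(p)$, and obtains the bound~(\ref{bound FM3}) from $|p|\le\sqrt{3}(m\vee|\vec p|)$. For the analytic continuation of $\F_{3\M}\psi$ across the branch set, the paper argues via Riemann's extension theorem (the numerator $\Psi-\Psi\circ r$ vanishes on the ramification set $Z=\{p^0=0\}\subset\M_\C$ and $p^0$ vanishes there to first order, hence the quotient is locally bounded), whereas you use the odd Taylor expansion in $E$; both are valid.

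Stage four is where your plan departs from the paper, and there is a gap. Your polydisc idea in $\C^3$ faces a real obstruction: the Shilov boundary of a polydisc centred at $\vec p$ need not stay uniformly away from $B=\{\,\vec p\cdot\vec p=-m^2\,\}$. Concretely, with equal polyradii $r$ and $\vec p\in B$ one has $E^2(\vec p+\vec w)=2\vec p\cdot\vec w+\vec w\cdot\vec w$ on the torus $|w^j|=r$, and the linear term $2\vec p\cdot\vec w=2r\sum_j p^j e^{i\theta_j}$ can be made to vanish for generic $\vec p$ (one complex equation in three real phases), so $|E^2|$ on the distinguished boundary can be as small as $O(r^2)$ regardless of how $r$ is chosen. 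Tuning $r$ to the distance $d(\vec p,B)$ does not help either: as $\vec p\to B$ you would send $r\to 0$ and recover no control on $1/|E|$. Your case split on $|\vec p|$ does not resolve this, because $B$ is noncompact and the obstruction persists for large $|\vec p|$.

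The paper avoids this by working not in $\C^3$ but on $\M_\C$, where $p^0$ is a genuine local coordinate rather than a two-valued function of $\vec p$. It constructs, for each $p\in\M_\C$ with $|p^0|<m/12$, an explicit holomorphic disc $k:\bar\Delta\to\M_\C$ (via an Euler-type rational substitution) satisfying $k(0)=p$, $|\vec k(t)-\vec p|\le m/6$ on $\bar\Delta$, and $|k^0(t)|\ge m/12$ on $\partial\Delta$. Applying the one-variable maximum principle to $\chi\circ k$ then reduces Case~2 to Case~1 with a controlled shift in $|\im\vec p|$, from which both (\ref{bound F3M-v1}) and (\ref{bound F3M}) follow. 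The construction of this disc is the missing technical lemma in your proposal; once you supply it (or an equivalent one-complex-dimensional slice in $\M_\C$ whose boundary meets $\{|p^0|\ge c\}$ while $\vec p$ moves only $O(m)$), your argument goes through.
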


\begin{proof}[Proof of Lemma~\ref{lem:FM3_F3M}]
\textsc{We show first that $\F_{3\M}$ is an isometry.}
We calculate for $\psi\in\HM$,
using that for $\vec p\in\R^3$, the vectors
$\psi(p_+(\vec p))\in{\cal D}_{p_+(\vec p)}$
and $\psi(p_-(\vec p))\in{\cal D}_{p_-(\vec p)}$
are orthogonal:
\begin{align}
\nonumber
\|\F_{3\M}\psi\|^2
&=
m^2 \int_{\R^3}|\psi(p_+(\vec p))-\psi(p_-(\vec p))|^2\frac{d^3{\vec p}}{E(\vec p)^2}
=
m^2 \int_{\R^3}(|\psi(p_+(\vec p))|^2+|\psi(p_-(\vec p))|^2)\frac{d^3{\vec p}}{E(\vec p)^2}
\\
&=
m^2 \int_\M|\psi(p)|^2\frac{d^3p}{(p^0)^2}
=\|\psi\|^2,
 \end{align}
where we have used equation~(\ref{skalarprodukt 2}) in the last step.\\

\textsc{Next, we show that $\F_{3\M}$ and $\F_{\M 3}$ are inverse to each
other.}
Consider the reflection $r:\M\to\M$,
$r(p^0,\vec p)=(-p^0,\vec p)$.
For $\psi\in\HM$ and $p=(p^0,\vec p)\in\M$, we get the following,
using the definition (\ref{eq:Pp}) of $P(p)$.
\begin{equation}
\F_{\M 3}\F_{3\M}\psi(p)
=\frac{\slashed{p}+m}{2}\gamma^0 \frac{\psi(p_+(\vec p))-\psi(p_-(\vec p))}{E(\vec p)}
=P(p) (\psi(p)-\psi(r(p)))
=\psi(p),
\end{equation}
where we have used that $P(p)$ acts as identity on ${\cal D}_p$ and as zero on
${\cal D}_{r(p)}$. Conversely we get for $\phi\in\HThree$:
\begin{equation}
\F_{3\M}\F_{\M 3}\phi(\vec p)
=P_+(\vec p)\phi(\vec p)+
P_-(\vec p)\phi(\vec p)=\phi(\vec p).
\end{equation}
Because $\F_{3\M}$ is an isometry, 
we conclude that $\F_{3\M}$ and $\F_{\M 3}$
are unitary maps.\\

\textsc{Now, we show (\ref{bound F3M-v1}), (\ref{bound F3M}) and
$\F_{3\M}[\CM]\subseteq\CThree$. } Let $\psi \in\CM$.
By definition, $\psi$ has a holomorphic extension $\Psi:\M_\C\to \C^4$
that fulfills the bound (\ref{bound 2}).
We extend the reflection $r:\M\to\M$ 
to the biholomorphic map $r:\M_\C\to\M_\C$,
$r(p^0,\vec p)=(-p^0,\vec p)$ and consider the ramification
set $Z=\{p\in\M_\C:\;
p^0=0\}$ consisting of fixed points of $r$. $Z$ is a complex submanifold
of $\M_\C$ of codimension 1; in particular it has no singular points.
The holomorphic map
\begin{equation}
\label{def chi}
\chi:\M_\C\setminus Z\to \C^4, \quad 
\chi(p)=m\frac{\Psi(p)-\Psi(r(p))}{p^0}
\end{equation}
is locally bounded near any point in $Z$, 
because the numerator $\Psi-\Psi\circ r$
vanishes on $Z$, and the denominator 
$\M_\C\ni p\mapsto p^0$ vanishes of first order
on $Z$. By Riemann's extension theorem, the map $\chi$ extends to a holomorphic
map on the whole set $\M_\C$. We denote this extension
also by $\chi:\M_\C\to \C^4$.
Now consider the projection $\pi:\M_\C\to\C^3$, $\pi(p^0,\vec p)=\vec p$,
and its set of branching points $\pi[Z]=\{\vec p\in\C^3:\;\vec p^2+m^2=0\}$.
Note that $\pi^{-1}[\pi(p)]=\{p,r(p)\}$  holds for any $p\in\M_\C$,
and that $\pi[Z]$ is also a submanifold
of $\C^3$ of complex codimension 1; in particular it has also no singular points.
Since $\chi\circ r=\chi$, there is a map $\Phi:\C^3\to\C^4$ such
that $\Phi\circ\pi=\chi$. Obviously $\Phi$ is holomorphic
outside the branching points, i.e.
on  $\C^3\setminus\pi[Z]$, and it is locally bounded near any branching 
point  
$\vec p \in \pi[Z]$. Using Riemann's extension theorem again, we see that
$\Phi$ is holomorphic on its whole domain $\C^3$.
Comparing definitions (\ref{def F3M}) and (\ref{def chi}),
we see that $\Phi:\C^3\to\C^4$ is a holomorphic extension 
of $\phi:=\F_{3\M}\psi:\R^3\to\C^4$.
To finish the proof of $\F_{3\M}\psi \in \CThree$, it 
remains to show that $\Phi$ fulfills the bound
(\ref{bound F3M});
recall the definition (\ref{bound 1})/(\ref{bound 1a}).
Take $\alpha>0$ such (\ref{bound 2})/(\ref{bound 2a}) holds.
Let $n\in\N$.
Using 
the definition (\ref{def chi}) of $\chi$, we get
the following for all $p\in\M_\C\setminus Z$:
\begin{equation}
\label{p0chi}
|p^0\chi(p)|\le 2m|p|^{-(n-1)} 
e^{\alpha |\im\vec  p|}\|\psi\|_{\M,\alpha,n}
\le 2m(m\vee |\vec p|)^{-(n-1)} 
e^{\alpha |\im\vec  p|}\|\psi\|_{\M,\alpha,n}
\end{equation}
For the last step, we have used $m\le |p|$ from (\ref{p0 bound 5}) in
Lemma~\ref{lemma: mass shell geometry} 
and $|\vec p|\le |p|$. We distinguish two cases:\\
{\it Case 1, ``locations far from the ramification set'':} $|p^0|\ge m/12$.
On the one hand,
(\ref{p0chi}) implies in this case the following.
\begin{equation}
\label{ineq p0 big-v1}
|\chi(p)|\le 24(m\vee|\vec p|)^{-(n-1)}
e^{\alpha |\im \vec  p|}
\|\psi\|_{\M,\alpha,n}
.
\end{equation}
On the other hand, from (\ref{p0chi}) and inequality (\ref{p0 bound 6}) from
Lemma~\ref{lemma: mass shell geometry},
we get for any given $\epsilon>0$
\begin{equation}
\label{ineq p0 big}
|\chi(p)|
\le \frac{2m(m\vee \vec p)^{1-n}}{\frac{m}{12}\vee |p^0|}
e^{\alpha |\im \vec  p|}
\|\psi\|_{\M,\alpha,n}
\le \constl{3}(m\vee|\vec p|)^{-n}
e^{(\alpha+\epsilon) |\im \vec  p|}
\|\psi\|_{\M,\alpha,n}
\end{equation}
where $\constr{3}=\constr{3}(\epsilon, m):=
2m\constr{aux2}$.\\
{\it Case 2, ``locations close to the ramification set'':} $|p^0|< m/12$.
The key to deal with this case is provided by the following lemma,
which uses the geometric structure of $\M_\C$ close to the
ramification set.
The intuitive idea behind it relies on the fact that the 
three components of the differential form $d\vec p$ on $\M_\C$ 
become linearly dependent on the ramification set $Z$,
while the form $dp^0$ on $\M_\C$ does not vanish there. 
Consequently, close to any ramification point,
one can find a complex direction tangential to $\M_\C$
such that $p^0$ varies
considerably in that direction, while $d\vec p$ does not vary too much
in the same direction. This vague idea is made precise and quantitative
in the following lemma.
\begin{lemma}
\label{lemma: p0 nahe null}
For every $p=(p^0,\vec p)\in\M_{\C}$ with $|p^0|\le m/12$,
there is a holomorphic map
\begin{equation}
k=(k^0,\vec k):\bar{\Delta}\to \M_{\C}
\end{equation}
defined on the closed unit disc $\bar \Delta=\{t\in\C:\;|t|\le 1\}$
with 
\begin{align}
k(0)=p,&\\
\label{claim ktp}
|\vec{k}(t)-\vec p|\le m/6\quad&\text{ for }t\in\bar\Delta,\\
\label{claim k0p}
|k^0(t)|\ge m/12\quad &\text{ for }t\in\partial \Delta,
\end{align}
where $\partial \Delta=\{t\in\C:\;|t|=1\}$ denotes the unit circle.
\end{lemma}
This lemma is also proven in the appendix.
In the following estimates (\ref{eq:est1}) and (\ref{eq:est2}), 
we apply the function $k$ from this lemma 
together with the maximum principle for holomorphic functions.
Then, the inequalities (\ref{ineq p0 big-v1}) and (\ref{ineq p0 big}),
respectively,
are used with $p$ replaced by $k(t)$ with $t\in\partial\Delta$. 
The hypothesis $|k^0(t)|\ge m/12$ of these two inequalities is
verified by (\ref{claim k0p}).
Using also (\ref{claim ktp}) we get
\begin{align}
|\Phi(\vec p)|=|\chi(p)|&\le\sup_{t\in\partial\Delta}|\chi(k(t))|
\le 24
\sup_{t\in\partial\Delta}(m\vee |\vec k(t)|)^{-(n-1)}
e^{\alpha |\im \vec k(t)|}\|\psi\|_{\M,\alpha,n}
\cr&\le \constr{4-v1} (m\vee|\vec p|)^{-(n-1)}
e^{\alpha |\im \vec p|}
\|\psi\|_{\M,\alpha,n}
\label{eq:est1},
\\
\label{eq:est2}
|\Phi(\vec p)|=|\chi(p)|&\le
\constr{3}
\sup_{t\in\partial\Delta}(m\vee |\vec k(t)|)^{-n}
e^{(\alpha+\epsilon) |\im \vec k(t)|}\|\psi\|_{\M,\alpha,n}
\cr&
\le \constr{4} (m\vee|\vec p|)^{-n}
e^{(\alpha+\epsilon) |\im \vec p|}
\|\psi\|_{\M,\alpha,n}
\end{align}
with constants $\constr{4-v1}=\constr{4-v1}(n,\alpha,m)>24$
and $\constr{4}=\constr{4}(n,\alpha,\epsilon,m)>\constr{3}$.
This proves the bounds (\ref{bound F3M-v1}), (\ref{bound F3M}) and 
thus, the claim 
$\F_{3\M}\psi \in \CThree$, which yields 
$\F_{3\M}[\CM]\subseteq\CThree$.\\

\textsc{It remains to show the bound (\ref{bound FM3}) and
$\F_{\M3}[\CThree]\subseteq \CM$.} 
Given $\phi\in\CThree$, we have a holomorphic continuation
$\Phi:\C^3\to\C^4$ that fulfills the bound 
(\ref{bound 1})/(\ref{bound 1a}).
Then $\psi:=\F_{\M 3}\phi$ has the holomorphic continuation
\begin{equation}
\Psi:\M_\C\to\C^4,\quad
\Psi(p)=\frac{\slashed p+m}{2m}\gamma^{0}\Phi(\vec p)
\quad \text{ for }p=(p^0,\vec p)\in\M_\C.
\end{equation}
For the matrix norm $\|{\cdot}\|$ on $\C^{4\times 4}$
associated to the euclidean norm on $\C^4$, one observes
$\|\slashed p\|=|p|$ and $\|\gamma^0\|=1$.
The following estimate uses these two equalities in the first step,
the first inequality in (\ref{p0 bound 5}) in the second step, 
formula (\ref{bound 1a}) in the third step, 
and the second inequality in (\ref{p0 bound 5})
in the last step.
\begin{align}
|\Psi(p)|&\le \frac{|p|+m}{2m}|\Phi(\vec p)|
\le \frac{|p|}{m}|\Phi(\vec p)|
\le\frac{|p|}{m}|(m\vee |\vec p|)^{-n}e^{\alpha |\im \vec p|}
\|\phi\|_{3,\alpha,n}
\cr&\le \frac{3^{n/2}}{m}|p|^{-(n-1)}e^{\alpha |\im \vec p|}
\|\phi\|_{3,\alpha,n}.
\end{align}
Using the definition (\ref{bound 2a}) of  $\|\psi\|_{\M,\alpha, n}$,
this shows the bound (\ref{bound FM3}) and thus
$\F_{\M 3}\phi \in \CM$. \\

\textsc{We summarize:} We have proven $\F_{3\M}[\CM]\subseteq\CThree$
and $\F_{\M 3}[\CThree]\subseteq\CM$. Because  
$\F_{3\M}$ and $\F_{3\M}$ are inverse to each other,
this also proves the claims $\F_{3\M}[\CM]=\CThree$
and $\F_{\M 3}[\CThree]=\CM$.
\end{proof}

Finally, one observes the following corollary to Lemma~\ref{lem:FM3_F3M}.
\begin{corollary}\label{cor:CM_dense}
  $\CM$ is dense in $\HM$.  
\end{corollary}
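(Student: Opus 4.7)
The plan is to reduce the density of $\CM$ in $\HM$ to the density of $\CThree$ in $\HThree$ already established in Lemma \ref{lem:CThree_dense}, transporting it through the unitary map $\F_{\M 3}$. The two ingredients needed are both in hand: on the one hand Lemma \ref{lem:CThree_dense} asserts density of $\CThree$ in $\HThree$; on the other hand Lemma \ref{lem:FM3_F3M} asserts that $\F_{\M 3}:\HThree\to\HM$ is unitary with inverse $\F_{3\M}$, and that it maps $\CThree$ bijectively onto $\CM$.

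Concretely, I would argue as follows. Given any $\psi\in\HM$ and any $\varepsilon>0$, set $\phi:=\F_{3\M}\psi\in\HThree$; by Lemma \ref{lem:CThree_dense} pick $\phi_\varepsilon\in\CThree$ with $\|\phi-\phi_\varepsilon\|<\varepsilon$. Define $\psi_\varepsilon:=\F_{\M 3}\phi_\varepsilon$. By the image identity $\F_{\M 3}[\CThree]=\CM$ from Lemma \ref{lem:FM3_F3M}, we have $\psi_\varepsilon\in\CM$, and since $\F_{\M 3}$ is an isometry with $\F_{\M 3}\phi=\psi$,
\begin{equation}
\|\psi-\psi_\varepsilon\|=\|\F_{\M 3}(\phi-\phi_\varepsilon)\|=\|\phi-\phi_\varepsilon\|<\varepsilon.
\end{equation}
This yields the desired density.

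There is essentially no obstacle; the corollary is a formal consequence of the two lemmas, which is precisely why it is stated as a corollary rather than as a separate lemma. The only thing worth noting is that the entire nontrivial content, namely the Paley-Wiener description of $\CM$ and the image identity $\F_{\M 3}[\CThree]=\CM$, has already been absorbed into Lemma \ref{lem:FM3_F3M}.
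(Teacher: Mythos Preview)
Your proof is correct and follows exactly the same approach as the paper's: transport the density of $\CThree$ in $\HThree$ (Lemma~\ref{lem:CThree_dense}) through the unitary map $\F_{\M 3}$, using the image identity $\F_{\M 3}[\CThree]=\CM$ from Lemma~\ref{lem:FM3_F3M}. The only difference is that you spell out the $\varepsilon$-argument explicitly, whereas the paper simply states that the claim follows.
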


\begin{proof}
  By Lemma~\ref{lem:CThree_dense}, $\CThree$ is dense in $\HThree$. Furthermore, by Lemma~\ref{lem:FM3_F3M}, the map $\F_{\M 3}:\HThree\to\HM$ is unitary and maps $\CThree$ onto $\CM$. The claim follows.
\end{proof}

\subsubsection{Properties of the Maps $\F_{\M\Sigma}$, $\F_{\Sigma\sol}$
and $\F_{\sol\M}$}
In this section, we prove three technical, but important lemmas.
The first one, Lemma~\ref{lemma: FMSigma}, deals with 
the generalized Fourier transformation $\F_{\M\Sigma}: \CSigma \to \CM$ 
from wave functions on $\Sigma$
to wave functions on the mass shell.
It relies on Paley-Wiener-like bounds: Support properties in
physical space are translated to growth rates in imaginary directions
in the complexified mass shell.
The second lemma, Lemma~\ref{lemma: FsolM},
deals with the maps $\F_{\sol\M}: \CM \to \Cs$
and $\F_{\Sigma\sol}: \Cs \to \CSigma$. \db Here, the point is \ed to translate
growth rates in imaginary directions in the complexified mass shell
back to support properties in physical space, using the classical 
Paley-Wiener theorem. Finally, the third lemma, 
Lemma~\ref{lemma: cycle Sigma-s-M},
is about compositions of these three maps. In particular,
it controls the support of a solution of the free Dirac equation
with given initial data on a Cauchy surface.

Recall the definitions of $\cal S(K,v)$, $\cal C_\Sigma(K)$, and
$\Vert{\cdot}\Vert_{\Sigma,K,n}$ given in the first paragraph of
Section \ref{sec:PWbounds}. The following lemma slightly extends Theorem~\ref{lemma:
FMSigma}.

\begin{lemma}[Paley-Wiener Bounds for Cauchy Surfaces]
    \label{lemma: FMSigma}
For any Cauchy surface $\Sigma$ the map $\F_{\M\Sigma}: \CSigma \to \CM$ is
well-defined. More precisely, let $K\subset\R^4$ be compact, $0\leq
V<1$,
$\Sigma\in\cal S(K,V)$,
$\psi\in\cal C_\Sigma(K)$,
$\alpha$ be a positive number such that $\alpha>\sqrt 2\sup_{x\in K}|x|$,
and $n\in\N$. Then
\begin{align}
    \label{eq:pw-bound}
    \|\F_{\M\Sigma}\psi\|_{\M,\alpha, n}
    \leq
    \constr{c:pwbound} \Vert \psi \Vert_{\Sigma,K,n}.
\end{align}
holds for some some positive constant
$\constr{c:pwbound}=\constr{c:pwbound}(K,V,n,\alpha,m)$. 
\end{lemma}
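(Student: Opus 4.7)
My approach is to work directly with the integral representation~(\ref{def FMSigma}) and bound its holomorphic extension to $\M_\C$. First I parametrize $\Sigma$ as in~(\ref{eq:parametrize Sigma}) and use~(\ref{eq: repr igammad4x}) to rewrite the defining integral as
\begin{equation*}
\F_{\M\Sigma}\psi(p) = \frac{\slashed p+m}{2m}(2\pi)^{-3/2}\int_{\R^{3}} e^{i\phi_p(\vec x)}\,M_\Sigma(\vec x)\,\psi(t_\Sigma(\vec x),\vec x)\,d^3\vec x,
\end{equation*}
where $\phi_p(\vec x)=p^0 t_\Sigma(\vec x)-\vec p\cdot\vec x$ and $M_\Sigma(\vec x)=\gamma^0+\sum_{\mu=1}^3\gamma^\mu\partial_\mu t_\Sigma(\vec x)$. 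Since $\psi$ is compactly supported in $K\cap\Sigma$ and the integrand is holomorphic in $p$ for each fixed $\vec x$, differentiation under the integral sign shows that $\Psi\colon\M_\C\to\C^4$, $p\mapsto\F_{\M\Sigma}\psi(p)$, is holomorphic.

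Next I estimate the oscillatory factor. For real $x=(t_\Sigma(\vec x),\vec x)\in K\cap\Sigma$ and complex $p\in\M_\C$,
\begin{equation*}
|e^{ipx}|\le e^{|\im p^0||x^0|+|\im\vec p||\vec x|}\le e^{|\im\vec p|(|x^0|+|\vec x|)}\le e^{\sqrt 2|\im\vec p|\sup_{x\in K}|x|},
\end{equation*}
using Lemma~\ref{lemma: mass shell geometry}'s inequality $|\im p^0|\le|\im\vec p|$ in the second step and $|x^0|+|\vec x|\le\sqrt 2|x|$ in the third. Hence setting $\delta:=\alpha-\sqrt 2\sup_{x\in K}|x|>0$, the weight $e^{-\alpha|\im\vec p|}$ appearing in the target norm produces a genuine exponential gain $e^{-\delta|\im\vec p|}$ after multiplication by the above bound.

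Third, to produce the polynomial decay $|p|^{-n}$ needed to dominate the prefactor $(\slashed p+m)/(2m)$ of norm $O(|p|)$ together with the weight $|p|^{n-1}$, I integrate by parts $n$ times. Set $\vec w(p,\vec x):=p^0\nabla t_\Sigma(\vec x)-\vec p\in\C^3$, so that $\partial_{x^j}e^{i\phi_p}=iw_j(p,\vec x)\,e^{i\phi_p}$. Condition (c) of Definition~\ref{def:cauchy-surface} yields $\sup_{\vec x\in\R^3}|\nabla t_\Sigma(\vec x)|\le V<1$, and combining this with $(p^0)^2=\vec p^{\,2}+m^2$ and Lemma~\ref{lemma: mass shell geometry} I expect to produce a uniform lower bound $|\vec w(p,\vec x)|\ge c|p|$ valid for all $p\in\M_\C$ with $|p|\ge R_0$, all $\vec x$ in the compact projection of $K\cap\Sigma$ to $\R^3$, and some fixed $c, R_0>0$. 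With this in place, the first-order operator $L=\sum_j(\overline{w_j}/|\vec w|^2)(-i\partial_{x^j})$ satisfies $Le^{i\phi_p}=e^{i\phi_p}$, and its formal adjoint $L^*$ applied $n$ times under the integral contributes a factor bounded by $C|p|^{-n}$ and shifts at most $n$ derivatives onto $M_\Sigma(\vec x)\psi(t_\Sigma(\vec x),\vec x)$. Each such derivative is of the form $D^\beta\psi(t_\Sigma(\vec x),\vec x)$ with $|\beta|\le n$ or acts on smooth factors depending only on $t_\Sigma$, hence is controlled either by $\|\psi\|_{\Sigma,K,n}$ or by constants depending only on $K$, $V$, $n$.

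Assembling the pieces, for $|p|\ge R_0$ I obtain
\begin{equation*}
|p|^{n-1}e^{-\alpha|\im\vec p|}|\Psi(p)|\le C\cdot|p|^{n-1}\cdot|p|\cdot|p|^{-n}\cdot e^{-\delta|\im\vec p|}\|\psi\|_{\Sigma,K,n}\le C'\|\psi\|_{\Sigma,K,n};
\end{equation*}
for $|p|<R_0$ the trivial estimate without integration by parts already suffices, since $|p|^{n-1}$ is uniformly bounded. Taking the supremum over $\M_\C$ gives~(\ref{eq:pw-bound}). I expect the main obstacle to be the uniform lower bound $|\vec w(p,\vec x)|\ge c|p|$ for complex $p\in\M_\C$: on the real mass shell it follows straightforwardly from $V<1$ together with $|p|^2=2|\vec p|^2+m^2$, but extending it to $\M_\C$ requires exploiting the mass-shell constraint $(p^0)^2=\vec p^{\,2}+m^2$ in combination with $|\im p^0|\le|\im\vec p|$ to exclude complex cancellations between $p^0\nabla t_\Sigma$ and $\vec p$ that have no real analogue.
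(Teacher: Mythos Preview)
Your approach is valid but genuinely different from the paper's. The paper does \emph{not} integrate by parts along the surface using the phase gradient $\vec w=p^0\nabla t_\Sigma-\vec p$. Instead it introduces, for each $p$ in a ``large real part'' region $I\subset\M_\C$, a time-like direction $\hat q(p)\in(\re p)^\perp$ and a projection $\pi^\Sigma_{\hat q(p)}:\Sigma\to\R^3$ along $\hat q(p)$. After this change of variables the real part of the phase becomes exactly $-\re\vec p\cdot\vec y$, so the integral is a \emph{standard} Fourier transform of a compactly supported smooth amplitude in $\vec y$; the $|p|^{-n}$ decay then comes from the classical estimate for Fourier transforms of $C^n_c$ functions, with no need to control a non-stationary-phase denominator on $\M_\C$. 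The complementary region $\M_\C\setminus I$ is handled by the crude bound, since there $|p|\lesssim m+|\im p|$ and the exponential weight alone already wins. Your route trades this geometric reduction for the analytic task of bounding $|\vec w|$ from below on the complexified mass shell.

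The good news is that the obstacle you flag can be resolved cleanly. From $\vec p=p^0\vec v-\vec w$ with $\vec v=\nabla t_\Sigma(\vec x)$, $|\vec v|\le V$, the mass-shell relation $(p^0)^2-\vec p^{\,2}=m^2$ becomes $(p^0)^2(1-|\vec v|^2)=m^2-2p^0\,\vec v\cdot\vec w+\vec w^{\,2}$. Taking absolute values gives $(1-V^2)|p^0|^2\le m^2+2V|p^0||\vec w|+|\vec w|^2$, which is a quadratic inequality in $|p^0|$ yielding $|p^0|\le(1-V)^{-1}|\vec w|+(1-V^2)^{-1/2}m$; together with $|\vec p|\le V|p^0|+|\vec w|$ this gives $|p|\le C_V(|\vec w|+m)$ uniformly in $\vec v$ with $|\vec v|\le V$. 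Hence $|\vec w|\ge c(V)|p|$ for all $p\in\M_\C$ with $|p|\ge R_0(V,m)$, exactly what your integration-by-parts scheme needs.

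One caveat that applies equally to your argument and to the paper's: iterating $L^*$ (in your approach) or differentiating the pulled-back amplitude $g_{\hat q}$ (in the paper's approach) produces second and higher spatial derivatives of $t_\Sigma$, which are \emph{not} bounded by the data $(K,V)$ alone. So in either proof the constant really depends on $\sup_{|\beta|\le n+1}\sup_{\pi(K\cap\Sigma)}|D^\beta t_\Sigma|$ as well; the stated uniformity over all $\Sigma\in\mathcal S(K,V)$ is slightly optimistic in both cases, though harmless for the applications in the paper.
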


\begin{proof}
  The wave function $\psi\in\CSigma(K)$ is supported on the compact set $K\cap\Sigma$. 
We consider the following integral:
\begin{align}\label{eq:sigma-to-mc-integral}
 \Psi(p):=\frac{\slashed p+m}{2m}(2\pi)^{-3/2}\int_{K\cap\Sigma} e^{ipx}\,i_\gamma(d^4x)\,\psi(x)\qquad \text{for } p\in\M_{\C},
\end{align}
  which coincides with $(\F_{\M\Sigma}\psi)(p)$ for $p\in\M$; see
(\ref{def FMSigma}).
Because $(\slashed{p}-m)(\slashed{p}+m)=p^2-m^2=0$ holds for
$p\in\mathcal{M}_\C$, one has $(\slashed{p}-m)\Psi(p)=0$ for these $p$.
In particular, $(\F_{\M\Sigma}\psi)(p)\in \mathcal{D}_p$ 
holds for $p\in\M$. \\

\textsc{Our next goal is to estimate $\Psi(p)$ for $p\in\M_{\C}$.}
  For this we intend to use the decay of the Fourier transform 
for smooth compactly supported functions in $\R^3$,
at least for sufficiently large $|\re p|$. 
Therefore, we shall employ a projection $\pi_{\hat q}^\Sigma:\Sigma\to\R^{3}\simeq\Sigma^{0}=\{0\}\times \R^{3}$ in some time-like direction $\hat q\in\R^{4}$, $|\hat q|=1$, which fulfills:
\begin{enumerate}
 \item $\hat q$ is transversal to $\Sigma^{0}$ and to
     every tangent space of $\Sigma$.
 \item $\hat q\in (\re p)^{\perp}:=\left\{y\in\R^{4}\,|\,\re
     p_{\mu}y^{\mu}=0\right\}$ in order to have 
     \begin{align}
         \exp({i \re p_{\mu}
     x^{\mu}})
     =\exp({i \re p_{\mu}(s \hat q^{\mu} +x^{\mu})})
     \qquad \text{for any }s\in\R.
 \end{align}
\end{enumerate}
First, we focus on condition (a). Note that light-like vectors fulfill (a).  By
definition of $\cal S(K,V)$, the set
\begin{align}N=N(K,V):=\left\{\hat
q\in\R^{4}\,|\, |\hat q|=1, \hat q\text{ is transversal to $\Sigma^0$ and to all
$\Sigma\in\cal S(K,V)$}
\right\}\end{align} is a neighborhood of the set of light-like vectors in the unit
sphere. Therefore, we can choose $\epsilon=\epsilon(K,V)>0$ sufficiently small such
that \begin{align}E=E(K,V):=\left\{k\in\R^{4}\,\big|\,|k|=1,|k^{2}|\leq
\epsilon\right\}\subseteq N.\end{align} Note that $E$ is
compact. For $\Sigma\in\cal S(K,V),\hat
q \in E$, and $x\in \Sigma$ we define the projection
\begin{align}
    \left(0,\pi^\Sigma_{\hat q}(x)\right)
    =
    \left(s \hat{q}^{\mu} +x^{\mu}\right)_{\mu=0,1,2,3}\in\Sigma^{0}
\end{align} 
with $s=-x^0/\hat q^0$.  Note that $\pi^\Sigma_{\hat q}$ is a
diffeomorphism from $\Sigma$ to $\R^{3}$, and
$\pi^\Sigma_{\hat q}$ and all its derivatives depend continuously on $\hat q\in
E$.

Second, we focus on condition (b). To fulfill this condition the direction
$\hat q$ must be chosen to depend on $p\in\M_{\C}$, i.e., $\hat q=\hat q(p)$.
Therefore, for $p\in\C^{4}$ with $\re \vec p\neq 0$ we define
\begin{align}\hat{q}^{\mu}(p):=\frac{q^{\mu}(p)}{|q(p)|}\in\left(\re
    p\right)^{\perp} \qquad \text{with} \qquad q(p):=\left(|\re \vec p|,\left(\re
    p^{0}\right)\frac{\re\vec  p}{|\re\vec  p|}\right).
\end{align}

However, for this choice of $\hat q(p)$ it may occur that condition (a) is violated. Therefore we restrict $p$ to the following set
\begin{equation}
\label{eq:def-ISigma}
I=I(K,V):=\left\{p\in\M_{\C}\,\big|\, \left|\im p\right|^{2}+m^{2}\leq \epsilon\left|\re p\right|^{2}, \re \vec p\neq 0\right\}.
\end{equation}
Indeed, we have $\hat q[I]\subseteq E$ as the following argument shows. Let $p\in I$. We have $p^{2}-m^{2}=0$, and thus
\begin{align}
  \left(\re p\right)^{2}=\left(\im p\right)^{2}+m^{2}\leq \left|\im p\right|^{2}+m^{2}\leq \epsilon\left|\re p\right|^{2}.
\end{align}
In other words
\begin{align}\label{eq:re-p-hut}
  \frac{\re p}{|\re p|}\in E.
\end{align}
Furthermore, we have $q(p)^{2}=-(\re p)^{2}$ and 
$|q(p)|=|\re p|$, which imply
\begin{align}
  \frac{\re p}{|\re p|}\in E\Leftrightarrow \hat q(p)\in E.
\end{align}
Together with (\ref{eq:re-p-hut}) this shows $\hat q(p)\in E$.  In
consequence, $\hat q(p)$ fulfills conditions (a) and (b) for all $p\in
I$.\\

\textsc{In the next step, we provide a bound on $\Psi(p)$ defined in
(\ref{eq:sigma-to-mc-integral}) in the case of $p\in I$.} Using 
the transformation
$\vec y=\pi_{\hat q(p)}^\Sigma(x)$, which fulfills
$\re p x=-\re \vec p\cdot \vec y$ by construction,
we get
\begin{equation}
\label{eq:pi-transform}
  \int_{x\in K\cap\Sigma} e^{ipx}\,i_\gamma(d^4x)\,\psi(x)=\int_{\vec y\in
      \pi^\Sigma_{\hat
  q(p)}[K\cap\Sigma]} e^{-i \re \vec p \cdot \vec y}
  f_p(\vec y)\,d^{3}\vec y
\end{equation}
for
\begin{align}
  f_p(\vec y):=\exp\left(-\im p_{\mu}[(\pi^\Sigma_{\hat q(p)})^{-1}(\vec y)]^{\mu}\right)
  g_{\hat q(p)}(\vec y),
\end{align}
where
\begin{align}
  g_{\hat q}(\vec y)\,d^3\vec y
:=\left(( (\pi^\Sigma_{\hat q})^{-1}\right)^{*} \left(i_\gamma(d^4x)\,\psi(x) \right)
\end{align}
denotes the pull-back of $i_\gamma(d^4x)\,\psi(x)$ 
w.r.t.\ $(\pi^\Sigma_{\hat q})^{-1}$.

Thanks to compactness of $E$ and $K$ and continuity in $\hat
q\in E$, for all multi-indices $\beta\in \N_{0}^{3}$ with $|\beta|\leq n$,
the following holds, with 
the differentiation operators
$D^\beta$ acting on the variable $\vec y$:
\begin{align}
    \sup_{\Sigma\in\cal S(K,V)}\sup_{\hat q\in E}\sup_{\vec y\in\pi_{\hat
    q}[K\cap\Sigma]}\left|(D^{\beta}(\pi^\Sigma_{\hat q})^{-1})(\vec y)\right|<\infty,
\label{eq:supsup}
\end{align}
and hence,
\begin{equation}
\label{eq:bound_Dbeta_g}
\sup_{\Sigma\in\cal S(K,V)}\sup_{\hat q\in E}\sup_{\vec y\in\pi^\Sigma_{\hat q}[K\cap\Sigma]}
\left|D^{\beta}g_{\hat q}(\vec y)\right|<
\constl{c:q-bound}
\Vert\psi\Vert_{\Sigma,K,n}
,
\end{equation}
with a constant $\constr{c:q-bound}=\constr{c:q-bound}(K,V,n)$.
For any given $\delta>0$, taking
\begin{align}\label{eq:alpha}
  {\widetilde\alpha}=\widetilde \alpha(\delta,K):=2\delta+\sup_{x\in K}|x|,
\end{align}
we know 
\begin{align}
&\sup_{\Sigma\in\cal S(K,V)}\sup_{p\in I}\sup_{\vec y\in\pi^\Sigma_{\hat q(p)}[K\cap\Sigma]}
e^{(2\delta-{\widetilde\alpha})|\im p|}\exp\left(-\im p_{\mu}[(\pi_{\hat
q(p)})^{-1}(\vec y)]^{\mu}\right)\\
&=
\sup_{\Sigma\in\cal S(K,V)}\sup_{p\in I}\sup_{x\in K\cap\Sigma}e^{(2\delta-{\widetilde\alpha})|\im p|-x\im p}
\leq 1.
\end{align}
We obtain:
\begin{equation}
\label{eq: bound deriv}
\max_{|\beta|\le n}\,\sup_{\Sigma\in\cal S(K,V)}\sup_{p\in I}\sup_{\vec y\in\pi^\Sigma_{\hat
q(p)}[K\cap\Sigma]}
(m+|\im p|)^{-n}e^{(2\delta-{\widetilde\alpha})|\im p|}
\left|D^{\beta}\exp\left(-\im p_{\mu}[(\pi^\Sigma_{\hat q(p)})^{-1}(\vec y)]^{\mu}\right)\right|<\infty.
\end{equation}
To see this, one expresses the iterated derivatives $D^\beta\ldots$ with the
chain rule and uses the bound (\ref{eq:supsup}) for the inner derivatives
and compactness of the set
$\{(\hat q,\pi^\Sigma_{\hat q}(x))\;|\;\hat q\in E,\;x\in K\cap\Sigma, \Sigma\in\cal S(K,V)\}$.
Combining the bound (\ref{eq: bound deriv})
with the bound (\ref{eq:bound_Dbeta_g}) yields
\begin{align}
\max_{|\beta|\le n}\,\sup_{\Sigma\in\cal S(K,V)}\sup_{p\in I}\sup_{\vec y\in\pi^\Sigma_{\hat
q(p)}[K\cap\Sigma]}
(m+|\im p|)^{-n}e^{(2\delta-{\widetilde\alpha})|\im p|}\left|D^{\beta}f_p(\vec y)\right|
\leq
\constl{c:f-bound}\Vert\psi\Vert_{\Sigma,K,n},
\end{align}
and, a little weaker,
\begin{equation}
\label{eq:ingredient-bound-Dbeta}
\max_{|\beta|\le n}\,\sup_{\Sigma\in\cal S(K,V)}
\sup_{p\in I}\sup_{\vec y\in\pi^\Sigma_{\hat q(p)}[K\cap\Sigma]}
e^{(\delta-{\widetilde\alpha}) |\im p|}\left|D^{\beta}f_p(\vec y)\right|
\leq
\constl{c:f-bound-1}\Vert\psi\Vert_{\Sigma,K,n}
,
\end{equation}
for some constants $\constr{c:f-bound}=\constr{c:f-bound}(K,V,n,\delta,m)$ and 
$\constr{c:f-bound-1}=\constr{c:f-bound-1}(K,V,n,\delta,m)$.

Using compactness again, there is $R>0$ such that
\begin{equation}
\label{eq:ingredient-compactness}
  \bigcup_{p\in I}\operatorname{supp} f_p\subseteq 
  \bigcup_{\substack{q\in E\\\Sigma\in\cal S(K,V)}}
  \pi^\Sigma_{\hat q}[K\cap\Sigma]\subseteq \overline{B^3_R(0)}
,
\end{equation}
where $\overline{B^3_R(0)}$ denotes the closed $3$-dimensional ball with radius $R$ around $0$.

The bound (\ref{eq:ingredient-bound-Dbeta}) of the derivatives 
together with the boundedness (\ref{eq:ingredient-compactness}) 
yield the following bound for the Fourier transform
for some constant $\constl{const:Dbeta-bound}=
\constr{const:Dbeta-bound}(K,V,n,\delta,m)$.
\begin{align}
    \left|\widehat f_p(\vec k)\right|\leq 
\constr{const:Dbeta-bound}
\frac{e^{({\widetilde\alpha}-\delta) |\im p|}}
{(m+|\vec k|)^{n}}
\Vert\psi\Vert_{\Sigma,K,n}
,\qquad \forall\;\vec k\in\R^{3},
p\in I
.
\end{align}
In the special case $\vec k=\re\vec p$ it tells us the following for all
$p\in I$ and $n\in \N_0$:
\begin{align}
    \left|\int_{\vec y\in \pi^\Sigma_{\hat q(p)}[K\cap\Sigma]} e^{-i \re \vec p \cdot \vec y}
  f_p(\vec y)\,d^{3}\vec y\right|
  &\leq 
\constr{const:Dbeta-bound}
\,\frac{e^{({\widetilde\alpha}-\delta) |\im p|}}{(m+|\re\vec p|)^{n}}
\Vert\psi\Vert_{\Sigma,K,n}\\
&\le \constl{const:new-payley-wiener}
\frac{e^{{\widetilde\alpha} |\im p|}}{(m+|\vec p|)^{n}}
\Vert\psi\Vert_{\Sigma,K,n}
\le \constl{const:new-payley-wiener2}
|p|^{-n}e^{{\widetilde\alpha} |\im p|}
\Vert\psi\Vert_{\Sigma,K,n}
\end{align}
with some constants $\constr{const:new-payley-wiener}=
\constr{const:new-payley-wiener}(K,V,n,\delta,m)$ and 
$\constr{const:new-payley-wiener2}=
\constr{const:new-payley-wiener2}(K,V,n,m)$,
where in the last step we have used bound (\ref{p0 bound 5}) given in Lemma~\ref{lemma: mass shell geometry}.
Combining this with (\ref{eq:pi-transform}) and using
$\|\slashed p+m\|\le |p|+m\le 2|p|$
yields
\begin{align}
|\Psi(p)|=
\left|\frac{\slashed p+m}{2m}(2\pi)^{-3/2}\int_{K\cap\Sigma} e^{ipx}\,i_\gamma(d^4x)\,\psi(x)\right|
\le \constl{const:new-payley-wiener3}
|p|^{-(n-1)}e^{{\widetilde\alpha} |\im p|}
\Vert\psi\Vert_{\Sigma,K,n}
\qquad \text{for } p\in I
\label{eq:isigma-estimate}
\end{align}
with some \db sufficiently \ed large constant $\constr{const:new-payley-wiener3}=
\constr{const:new-payley-wiener3}(K,V,n,\delta,m)$ and
$\Psi$ being defined in (\ref{eq:sigma-to-mc-integral}).\\

\textsc{Next, we examine the easier case $p\in\M_\C\setminus I$.}
By definition (\ref{eq:def-ISigma}) of $I$, we have 
$\left|\im p\right|^{2}+m^{2}> \epsilon\left|\re p\right|^{2}$,
called ``case A'',
or $\re \vec p= 0$, called ``case B''. In case A, 
$|p|\le\constl{const:const-innen}(m+|\im p|)$ holds with some
constant $\constr{const:const-innen}=\constr{const:const-innen}
(\epsilon)$.
In case B, the same bound holds when $\constr{const:const-innen}$
is chosen sufficiently large. Indeed:
$\re \vec p= 0$ and $p^2=m^2$ imply
$|\re p|^2=(\re p^{0})^2=(\re p)^2=(\im p)^2+m^2\le|\im p|^2+m^2$,
and hence, $|p|^2\le 2|\im p|^2+m^2$.
Taking ${\widetilde\alpha}=2\delta+\sup_{x\in K}|x|$ as in (\ref{eq:alpha}) and 
a sufficiently large constant
$\constl{const:const-innen2}=
\constr{const:const-innen2}(K,V,n,\delta,m)$,
we conclude the following for $p\in \M_\C\setminus I$, $n\in \N$:
\begin{align}
\left|
\Psi(p)
\right|
&\le
\constr{const:const-innen2}|p|
e^{({\widetilde\alpha}-2\delta)|\im p|}
\Vert\psi\Vert_{\Sigma,K,n}
\le
\constr{const:const-innen2}|p|
e^{2\delta (m-|p|/\constr{const:const-innen})}
e^{{\widetilde\alpha}|\im p|}
\Vert\psi\Vert_{\Sigma,K,n}
\\
& \le
\constr{const:new-payley-wiener3}
|p|^{-(n-1)}e^{{\widetilde\alpha}|\im p|}
\Vert\psi\Vert_{\Sigma,K,n},
\label{eq: c11}
\end{align}
where the constant
$\constr{const:new-payley-wiener3}(K,V,n,\delta,m)$,
which was also used in (\ref{eq:isigma-estimate})
in a different way,
needs to be taken large enough. 
In the first inequality in (\ref{eq: c11}), we used
the definition (\ref{eq:sigma-to-mc-integral}) of $\Psi(p)$
and again the bound $\|\slashed p+m\|\le 2|p|$
together with the estimate 
$|e^{ipx}|\le e^{({\widetilde\alpha}-2\delta)|\im p|}$
for $p\in\M_\C\setminus I$ and $x\in K$.
Combining (\ref{eq: c11}) and (\ref{eq:isigma-estimate}) we have shown 
\begin{equation}
\left|\Psi(p)\right|
\le
\constr{const:new-payley-wiener3}
|p|^{-(n-1)}e^{{\widetilde\alpha}|\im p|}
\Vert\psi\Vert_{\Sigma,K,n}
\le
\constr{const:new-payley-wiener3}
|p|^{-(n-1)}e^{\sqrt{2}{\widetilde\alpha}|\im \vec p|}
\Vert\psi\Vert_{\Sigma,K,n}
\quad\text{ for all }p\in\mathcal{M}_\C,
\end{equation}
where we have used (\ref{p0 bound 3}) from Lemma \ref{lemma: mass shell geometry} in the last step.
Because $\Psi:
\M_\C\to\C^4$ is holomorphic, we can rewrite this as  
\begin{align}
    \|\F_{\M\Sigma}\psi\|_{\M,\sqrt{2}{\widetilde\alpha}, n}
    \leq
    \constr{const:new-payley-wiener3}\Vert\Psi\Vert_{\Sigma,K,n},
\end{align}
with the norm 
$\|{\cdot}\|_{\M,\sqrt{2}{\widetilde\alpha}, n}$ being defined in (\ref{bound 2a}). 
We now take a specific $\delta>0$ depending on the given $\alpha>\sqrt2
\sup_{x \in K} \Vert x \Vert$ such that the equation
\begin{align}
    \alpha=\sqrt 2\widetilde\alpha(\delta,K)
\end{align}
holds. This concludes the proof of
(\ref{eq:pw-bound}).
Hence, 
$\F_{\M\Sigma}: \CSigma \to \CM$ is well-defined which proves the claim.
\end{proof}

\begin{lemma}
\label{lemma: FsolM}
The maps $\F_{\sol\M}: \CM \to \Cs$
and $\F_{\Sigma\sol}: \Cs \to \CSigma$ are well-defined.
For any $\psi\in\CM$ and $\alpha>0$ such that 
$\|\psi\|_{\M,\alpha, n}<\infty$ holds for all $n\in\N$,
the function $\F_{\sol\M}\psi$ is supported in
$\{0\}\times \overline{B^3_{\alpha}(0)}+\causal$.
\end{lemma}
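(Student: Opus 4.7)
The plan is to view $(\F_{\sol\M}\psi)(x^0,\vec x)$ as the standard inverse $3$-dimensional Fourier transform at $\vec x$ of an $x^0$-dependent function $H_{x^0}\colon \R^3 \to \C^4$ and then to apply the classical Paley-Wiener theorem (as used already in Lemma~\ref{lem:CThree_dense}) to extract the support information. Setting $\phi := \F_{3\M}\psi \in \CThree$ (Lemma~\ref{lem:FM3_F3M}), splitting $\M = \M_+\cup \M_-$, and parametrizing both sheets by $\vec p\in\R^3$ via $p_\pm(\vec p) = (\pm E(\vec p),\vec p)$, the identity $\psi(p_\pm(\vec p)) = \pm(E(\vec p)/m)P_\pm(\vec p)\phi(\vec p)$ (which follows from $P_\pm$ projecting onto $\cal D_{p_\pm}$ together with the definition (\ref{def F3M}) of $\F_{3\M}$) allows me to rewrite (\ref{def FsM}) as
\begin{equation*}
    (\F_{\sol\M}\psi)(x^0,\vec x) = (2\pi)^{-3/2}\int_{\R^3} H_{x^0}(\vec p)\,e^{i\vec p\cdot\vec x}\,d^3\vec p,
\end{equation*}
with $H_{x^0}(\vec p) := \bigl[e^{-iE(\vec p)x^0}P_+(\vec p) + e^{iE(\vec p)x^0}P_-(\vec p)\bigr]\phi(\vec p)$.

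The central step is to establish that $H_{x^0}$ extends entirely to $\C^3$ with a Paley-Wiener bound of exponent $\alpha+|x^0|$. Using $P_+ + P_- = 1$ and the algebraic identity $E(\vec p)(P_+(\vec p) - P_-(\vec p)) = m\gamma^0 - \vec p\cdot\vec\gamma\gamma^0$ (an immediate consequence of (\ref{eq:Pp})), the matrix bracket in $H_{x^0}$ simplifies to $\cos(E(\vec p)x^0) - i\,(\sin(E(\vec p)x^0)/E(\vec p))\bigl(m\gamma^0 - \vec p\cdot\vec\gamma\gamma^0\bigr)$. Although $E(\vec p)=\sqrt{\vec p^{\,2}+m^2}$ itself branches on $\vec p^{\,2}+m^2=0$, both $\cos(E(\vec p)x^0)$ and $\sin(E(\vec p)x^0)/E(\vec p)$ are power series in the single-valued quantity $\vec p^{\,2}+m^2$ and are therefore entire in $\vec p\in\C^3$, which makes $H_{x^0}$ entire. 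For the bound, the elementary estimates $|\cos z|,|\sin z|\le e^{|\im z|}$ combined with the mass-shell inequality $|\im E(\vec p)|\le|\im\vec p|$ from Lemma~\ref{lemma: mass shell geometry} control the matrix factor by $C(1+|x^0|)(1+|\vec p|)\,e^{|x^0||\im \vec p|}$, and combining with the bound $\|\phi\|_{3,\alpha,n-1}\le C\|\psi\|_{\M,\alpha,n}$ from Lemma~\ref{lem:FM3_F3M} yields for every $n\in\N$
\begin{equation*}
    |H_{x^0}(\vec p)|\le C(n,x^0,\alpha,m)\,(m\vee |\vec p|)^{-(n-1)}\,e^{(\alpha+|x^0|)|\im \vec p|}.
\end{equation*}
The classical Paley-Wiener theorem then shows $\operatorname{supp}[(\F_{\sol\M}\psi)(x^0,\,\cdot\,)]\subseteq \overline{B^3_{\alpha+|x^0|}(0)}$ for every $x^0\in\R$, which is equivalent to $\supp \F_{\sol\M}\psi\subseteq \{0\}\times \overline{B^3_\alpha(0)}+\causal$.

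The remaining claims are routine. Smoothness of $\F_{\sol\M}\psi$ on $\R^4$ follows from the rapid decay of $H_{x^0}$ and its $x^0$-derivatives of arbitrary order, which justifies differentiating under the integral; the free Dirac equation is obeyed because $\slashed p\psi(p)=m\psi(p)$ on $\M$. Together with the support property this gives $\F_{\sol\M}\psi\in\Cs$. For $\F_{\Sigma\sol}:\Cs\to\CSigma$, smoothness of the restriction is clear, and compactness of its support follows by combining $\supp\psi\subseteq K+\causal$ (for some compact $K\subset\R^4$) with the strict slope bound $\sup_{\vec x}|\nabla t_\Sigma(\vec x)|<1$ from condition~(c) of Definition~\ref{def:cauchy-surface}, which prevents $(K+\causal)\cap\Sigma$ from escaping to infinity. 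The main obstacle is the holomorphicity step: although $E(\vec p)$ and $P_\pm(\vec p)$ individually have branch points on $\vec p^{\,2}+m^2=0$, the specific combination appearing in $\F_{\sol\M}$ must collapse to an entire function; this is precisely achieved through the identity $E(P_+-P_-)=m\gamma^0-\vec p\cdot\vec\gamma\gamma^0$, which rewrites the problematic square roots in terms of the entire functions $\cos(E(\vec p)x^0)$ and $\sin(E(\vec p)x^0)/E(\vec p)$ of $\vec p^{\,2}+m^2$. Once this cancellation is recognised, the rest of the argument is routine book-keeping of Paley-Wiener bounds of the kind already developed in the proof of Lemma~\ref{lem:FM3_F3M}.
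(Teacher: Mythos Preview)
Your argument is correct and is essentially the paper's proof unpacked: the paper sets $\psi_t(p):=e^{-ip^0t}\psi(p)$, observes $\|\psi_t\|_{\M,\alpha+|t|,n}\le\|\psi\|_{\M,\alpha,n}$ from $|\im p^0|\le|\im\vec p|$, applies the bound (\ref{bound F3M-v1}) of Lemma~\ref{lem:FM3_F3M} to $\psi_t$ as a black box, and then invokes Paley--Wiener on $\F_{3\M}\psi_t$, which is precisely your $H_{x^0}$. Your explicit $\cos/\sin$ rewriting makes the entire-ness of $H_{x^0}$ transparent without re-invoking the Riemann-extension machinery hidden in Lemma~\ref{lem:FM3_F3M}, but the two arguments are otherwise identical.
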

\begin{proof}
\textsc{\db First, we show \ed that $\F_{\sol\M}: \CM \to \Cs$ is well-defined.}
Let $\psi\in \CM$, $\Psi:\M_\C\to\C^4$ be the holomorphic
extension of $\psi$ to $\mathcal{M}_\C$, and take $\alpha>0$ such that
for all $n\in\N$ the bound 
$\|\psi\|_{\M,\alpha, n}<\infty$ holds.
Because $|\psi(p)|$ tends to $0$ as $|p|\to\infty$, $p\in\mathcal{M}$,
faster than any power of $|p|$, 
\begin{align}
\F_{\sol\M}\psi(x)=
\frac{(2\pi)^{-3/2}}{m}\int_\M e^{-i px}\psi(p)\,i_{p}(d^{4}p)
\end{align}
depends smoothly on $x\in\R^4$.
Furthermore, $\F_{\sol\M}\psi$ solves the free Dirac equation
because the fact $\psi(p)\in\mathcal{D}_p$ for any $p\in\mathcal{M}$
implies
\begin{align}
(i\slashed{\partial}-m)\F_{\sol\M}\psi(x)
=
\frac{(2\pi)^{-3/2}}{m}\int_\M e^{-i px}(\slashed{p}-m)
\psi(p)\,i_{p}(d^{4}p)=0\quad \text{ for }x\in\R^4.
\end{align}
Given $t\in\R$, we introduce the time-shifted version
\begin{align}
\Psi_t(p):=e^{-ip_0t}\Psi(p),\qquad p\in\mathcal{M}_\C.
\end{align}
The restriction of this holomorphic map to $\mathcal{M}$ is denoted by 
$\psi_t$. We observe the following
for $p=(p^0,\vec p)\in\mathcal{M}_\C$, using the bound
$|\im p^0|\le |\im \vec p|$ from (\ref{p0 bound 2}):
\begin{align}
|\Psi_t(p)|\le e^{|\im p_0||t|}|\Psi(p)|
\le
e^{|\im \vec p||t|}|\Psi(p)|
\end{align}
We get
\begin{align}
\|\psi_t\|_{\M,\alpha+|t|, n}
\le
\|\psi\|_{\M,\alpha, n},\qquad (n\in\N);
\end{align}
recall the definition of $\|{\cdot}\|_{\M,\alpha, n}$
\db from (\ref{bound 2a})\ed.
Then the bound (\ref{bound F3M-v1}) implies
\begin{align}
\|\F_{3\M}\psi_t\|_{3,\alpha+|t|, n-1}
\le 
\constr{4-v1}\|\psi_t\|_{\M,\alpha+|t|, n}
\le\constr{4-v1}\|\psi\|_{\M,\alpha,n}<\infty,\qquad (n\in\N).
\end{align}
Using this, 
the classical Paley-Wiener Theorem \cite[Theorem IX.11]{reed_methods_1981}
implies that the inverse Fourier transform
\begin{align}
\R^3\ni\vec x\mapsto 
(2\pi)^{-3/2}\int_{\R^3} e^{i\vec p\vec x}\F_{3\M}\psi_t(\vec p)\,
d^3\vec p
\end{align}
is supported in the ball $\overline{B^3_{\alpha+|t|}(0)}$.
Taking $x=(t,\vec x)$ we compute
\begin{align}
&(2\pi)^{-3/2}\int_{\R^3} e^{i\vec p\vec x}\F_{3\M}\psi_t(\vec p)\, d^3\vec p
=
\frac{(2\pi)^{-3/2}}{m}\int_{\R^3} e^{i\vec p\vec x} \left[\psi_{t}(p_{+}(\vec
p))-\psi_{t}(p_{-}(\vec p))\right]\frac{m^{2} d^3\vec p}{E(\vec p)}\\ 
&=
\frac{(2\pi)^{-3/2}}{m}\int_{\R^3} \left[e^{-i p_{+}(\vec p) x}\psi(p_{+}(\vec
p))-e^{-i p_{-}(\vec p) x}\psi(p_{-}(\vec p))\right]\frac{m^{2} d^3\vec
p}{E(\vec p)}\\ 
&= \frac{(2\pi)^{-3/2}}{m}\int_\M e^{-i p
x}\psi(p)\,i_{p}(d^{4}p)\qquad\text{by (\ref{eq:ipd4p})},\\
&=
\F_{\sol\M}\psi(x).
\end{align}
This shows that $\F_{\sol\M}\psi$ is supported in $\{0\}\times
\overline{B^3_{\alpha}(0)}+\causal$.  Consequently, $\F_{\sol\M}$ maps $\CM$ to
$\Cs$.\\

\textsc{Finally, we consider $\F_{\Sigma\sol}$.} Let $\psi\in\Cs$ be supported
in $K+\causal$ with some compact set $K\subset\R^4$.  Because $\psi$ is smooth,
its restriction to $\Sigma$ is also smooth.  Moreover, $(K+\causal)\cap\Sigma$
is compact.  This shows that $\F_{\Sigma\sol}\psi\in\CSigma$.
\end{proof}

\begin{lemma} 
\label{lemma: cycle Sigma-s-M}
For $\psi\in\CSigma$,
the function $\F_{\sol\M} \F_{\M\Sigma}\psi$
is supported in $\operatorname{supp}\psi+\causal$.
Furthermore, the following identities hold:
\begin{align}
  \F_{\Sigma\sol}  \F_{\sol\M}  \F_{\M\Sigma}
  &=\mathrm{id}_{\CSigma},\label{eq:first_id}\\
  \F_{\M\Sigma}  \F_{\Sigma\sol}  \F_{\sol\M}
  &=\mathrm{id}_{\CM},\label{eq:second_id} \\
  \F_{\sol\M}  \F_{\M\Sigma} \F_{\Sigma\sol}
  &=\mathrm{id}_{\Cs}.\label{eq:third_id}
\end{align}
Finally, the maps $\F_{\M\Sigma}:\CSigma\to\CM$, 
$\F_{\Sigma\sol}:\Cs\to\CSigma$, 
and $\F_{\sol\M}:\CM\to\Cs$ are isometric isomorphisms.
\end{lemma}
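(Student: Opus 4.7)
The plan is to establish the three assertions in order---the support property, the three composition identities, and the isometric-isomorphism claim---by combining the Paley-Wiener estimates of Lemmas~\ref{lemma: FMSigma} and~\ref{lemma: FsolM} with a Stokes-closedness argument that transports information between Cauchy surfaces, using the time-zero hyperplane $\Sigma^0$ as a computational anchor.

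For the support claim, setting $K:=\supp\psi$, the estimate $\|\F_{\M\Sigma}\psi\|_{\M,\alpha,n}<\infty$ from Lemma~\ref{lemma: FMSigma} (valid for all $n$ and any $\alpha>\sqrt{2}\sup_{x\in K}|x|$) combined with Lemma~\ref{lemma: FsolM} immediately delivers $\F_{\sol\M}\F_{\M\Sigma}\psi\in\Cs$ supported in $\{0\}\times\overline{B^3_\alpha(0)}+\causal$. To tighten this ball-plus-cone bound to the causal hull $K+\causal$, I will apply a partition-of-unity refinement combined with translation covariance (Theorem~\ref{thm: compatibility Poincar'e}): decomposing $\psi=\sum_i\psi_i$ into finitely many pieces supported in small balls $B(y_i,r_i)\cap\Sigma$, translating each piece so its support is centered at the origin and reapplying the previous bound, then shifting back, yields $\supp\F_{\sol\M}\F_{\M\Sigma}\psi_i\subseteq y_i+\{0\}\times\overline{B^3_{O(r_i)}(0)}+\causal$. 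Refining the partition so that $\max_i r_i\to 0$ pins the total support down to $K+\causal$, realized as an intersection over $\epsilon>0$ of the closed sets $K+\{0\}\times\overline{B^3_\epsilon(0)}+\causal$.

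For the composition identities I first verify (\ref{eq:first_id}) on $\Sigma^0$ directly: since $i_\gamma(d^4x)=\gamma^0\,d^3\vec x$ and $px=-\vec p\cdot\vec x$ on the time-zero hyperplane, one has $\F_{\M\Sigma^0}\xi_0=\F_{\M 3}\widehat\xi_0$ with $\widehat{\cdot}$ the standard 3d Fourier transform; splitting the subsequent $\F_{\sol\M}$-integral into $\M_\pm$, using $\slashed p_+-\slashed p_-=2E(\vec p)\gamma^0$, the orientation sign of $i_p(d^4p)$ on $\M_-$, and $(\gamma^0)^2=1$, recovers $\xi_0$ at $(0,\vec x)$. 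To pass to a general $\Sigma$ I use a Stokes-type closedness: for any $\chi\in\Cs$ solving the free Dirac equation and any $p\in\M$, the 3-form $\tfrac{\slashed p+m}{2m}e^{ipy}i_\gamma(d^4y)\chi(y)$ on $\R^4$ is closed, the computation of its exterior derivative paralleling~(\ref{eq:stokes-argument}) and the residual factor $(\slashed p+m)(\slashed p-m)=p^2-m^2$ vanishing on $\M$. Interpolating through a family of Cauchy surfaces between $\Sigma$ and $\Sigma^0$, all of which meet $\supp\chi$ compactly so that the lateral boundary of the slab contributes nothing, Stokes' theorem yields $\F_{\M\Sigma}(\chi|_\Sigma)=\F_{\M\Sigma^0}(\chi|_{\Sigma^0})$ for every $\chi\in\Cs$.

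Applied to $\chi:=\F_{\sol\M}\F_{\M\Sigma}\psi\in\Cs$, whose $\Sigma^0$-restriction $\xi_0$ is compactly supported by the preceding support claim, and combined with the $\Sigma^0$-case of~(\ref{eq:first_id}) applied to $\widetilde\xi_0:=\F_{\sol\M}\F_{\M\Sigma^0}\xi_0$, the closedness forces $\widetilde\xi_0=\F_{\sol\M}\F_{\M\Sigma}\psi$ and hence $(\F_{\sol\M}\F_{\M\Sigma}\psi)|_\Sigma=\psi$, once injectivity of $\F_{\M\Sigma}$ on $\CSigma$ is in hand; the latter reduces via the same closedness argument to injectivity of $\F_{\M\Sigma^0}=\F_{\M 3}\circ\widehat{\cdot}$, which is immediate from Lemma~\ref{lem:FM3_F3M} and Plancherel. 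Identities~(\ref{eq:second_id}) and~(\ref{eq:third_id}) then follow by left-composing~(\ref{eq:first_id}) with $\F_{\M\Sigma}$ and $\F_{\sol\M}$ respectively. The cyclic identities exhibit each of the three maps as the inverse of the composition of the other two, yielding bijectivity on the dense subspaces; isometry on $\Sigma^0$ is inherited from the unitarity of $\F_{\M 3}$ via the factorization $\F_{\M\Sigma^0}=\F_{\M 3}\circ\widehat{\cdot}$, and on general $\Sigma$ by invoking closedness of the scalar-product form $\overline{\phi}\,i_\gamma(d^4x)\psi$ from~(\ref{eq:stokes-argument}) to transport norms from $\Sigma^0$ to $\Sigma$. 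I expect the principal difficulty to lie in the general-$\Sigma$ step of~(\ref{eq:first_id}): the Stokes interpolation is conceptually clean but demands a careful verification that the lateral slab boundaries lie outside $\supp\chi$, and disentangling the resulting chain of identifications to obtain $\widetilde\psi|_\Sigma=\psi$ must be done in an order that avoids circularity with the very injectivity statement one wants to deduce.
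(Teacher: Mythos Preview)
Your treatment of the support claim matches the paper's: Paley--Wiener bounds plus Lemma~\ref{lemma: FsolM} give a ball-plus-cone, then translation covariance and a partition of unity refine it to $\supp\psi+\causal$. Your Stokes observation---that for $\chi\in\Cs$ and $p\in\M$ the $3$-form $\tfrac{\slashed p+m}{2m}e^{ipy}\,i_\gamma(d^4y)\,\chi(y)$ is closed, whence $\F_{\M\Sigma}\F_{\Sigma\sol}$ is $\Sigma$-independent---is correct and elegant; combined with the $\Sigma^0$ computation it yields \eqref{eq:second_id} and (via bijectivity of $\F_{\sol\M}$ on the $\Sigma^0$ level) also \eqref{eq:third_id} for every $\Sigma$. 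This is genuinely different from the paper, which proves \eqref{eq:first_id} first and deduces the other two.

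The gap is in your passage from \eqref{eq:second_id}, \eqref{eq:third_id} to \eqref{eq:first_id}. You write that injectivity of $\F_{\M\Sigma}$ ``reduces via the same closedness argument to injectivity of $\F_{\M\Sigma^0}$'', but the closedness identity only relates $\F_{\M\Sigma}(\chi|_\Sigma)$ to $\F_{\M\Sigma^0}(\chi|_{\Sigma^0})$ for $\chi\in\Cs$; it says nothing about an arbitrary $\psi\in\CSigma$ unless $\psi$ is already known to be the restriction of some free solution. Equivalently, \eqref{eq:second_id} exhibits $\F_{\M\Sigma}$ as a coisometry ($\F_{\M\Sigma}\F_{\M\Sigma}^*=\id_{\CM}$ with $\F_{\M\Sigma}^*=\F_{\Sigma\sol}\F_{\sol\M}$), but a coisometry need not be injective; the missing piece is precisely that $\F_{\Sigma\sol}[\Cs]=\CSigma$, i.e.\ the existence half of \eqref{eq:first_id} itself. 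So the circularity you flag in your last sentence is real, and your proposed escape does not work.

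The paper breaks the circle by a direct pointwise proof of \eqref{eq:first_id}: for each $y\in\Sigma$ it translates and Lorentz-boosts so that $y\mapsto 0$ and $T_0\Sigma=\{0\}\times\R^3$, then introduces a mollifier $\phi_\epsilon$ and a cutoff $\chi_\epsilon$ on $\Sigma$, uses the already-established support property to show $\phi_\epsilon\cdot\F_{\sol\Sigma}\psi=\phi_\epsilon\cdot\F_{\sol\Sigma}(\chi_\epsilon\psi)$, and computes the limit $\epsilon\downarrow 0$ by dominated convergence to recover $\psi(0)$. This argument never invokes injectivity of $\F_{\M\Sigma}$; once \eqref{eq:first_id} is in hand, isometry of $\F_{\M\Sigma}$ follows from $\|\psi\|=\|\F_{\Sigma\sol}\F_{\sol\M}\F_{\M\Sigma}\psi\|=\|\F_{\M\Sigma}\psi\|$, and then \eqref{eq:second_id}, \eqref{eq:third_id} drop out via the standard ``two isometries with $g\circ f=\id$'' lemma. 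Your Stokes route could be salvaged if you supply an independent proof that $\F_{\Sigma\sol}[\Cs]$ is dense in $\HSigma$ (so that $\ker\F_{\M\Sigma}\perp\operatorname{range}\F_{\M\Sigma}^*$ forces the kernel to vanish), but that is itself a nontrivial existence statement for the free Cauchy problem on a curved $\Sigma$ and is essentially what the lemma is meant to deliver.
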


\begin{proof}  
We abbreviate
\begin{align}
\F_{\sol\Sigma}:=
\F_{\sol\M} \F_{\M\Sigma}.
\end{align}
\textsc{We examine first 
the support of $\F_{\sol\Sigma}$ and of
$\F_{\Sigma\sol}  \F_{\sol\Sigma}\psi$
for $\psi\in\CSigma$.}
We claim: If the support of $\psi$ is contained
in $\overline{B^4_r(y)}$ for some $y\in\Sigma$, then the support of
$\F_{\sol\Sigma}\psi$
is contained in 
$y+\{0\}\times\overline{B^3_{\sqrt{2}r}(0)}+\causal$.
We prove this first for the special case $y=0\in\Sigma$.
Under the above assumptions, Lemma~\ref{lemma: FMSigma} yields 
$\|\F_{\M\Sigma}\psi\|_{\M,\sqrt{2}\alpha, n}<\infty$ 
for all $\alpha>r$ and all $n\in\N$. 
Using  Lemma~\ref{lemma: FsolM}, it follows that the support of
$\F_{\sol\Sigma}(\psi)$
is contained in $\{0\}\times \overline{B^3_{\sqrt{2}\alpha}(y)}+\causal$.
Because $\alpha>r$ is arbitrary, this implies that
$\operatorname{supp}(\F_{\sol\Sigma}\psi)
\subseteq \{0\}\times \overline{B^3_{\sqrt{2}r}(y)}+\causal$.
Next, we reduce the general case $y\in\Sigma$
to the special case $y=0$, using the translation maps
$T_\Sigma^{-y}:\CSigma\to\cC_{\Sigma-y}$ and
$T_{\sol}^{-y}:\Cs\to\Cs$
from Definition \ref{def: Poincar'e transformations}.
By equations (\ref{eq:translation-M-Sigma}) and (\ref{eq:translation-s-M}) 
in Lemma~\ref{thm: compatibility Poincar'e}, these maps fulfill
\begin{align}
T_{\sol}^{-y} \F_{\sol\Sigma}
=\F_{\sol,\Sigma-y}  T_{\Sigma}^{-y}.
\end{align}
Given that $\psi$ is supported in 
a subset of $\overline{B^4_r(y)}$, it follows that
$T_{\Sigma}^{-y}\psi$ 
is supported in 
a subset of $\overline{B^4_r(0)}$. Using the special case from above,
it follows that 
$T_{\sol}^{-y} \F_{\sol\Sigma}\psi
=\F_{\sol,\Sigma-y}  T_{\Sigma}^{-y}\psi$
is supported in 
$\{0\}\times\overline{B^3_{\sqrt{2}r}(0)}+\causal$.
But then $\F_{\sol\Sigma}\psi$
is supported in
$y+\{0\}\times\overline{B^3_{\sqrt{2}r}(0)}+\causal$.\\

\textsc{We prove now the first claim of the lemma.} 
Let $r>0$. Using a compactness
argument and a partition of unity, we can take finitely many points 
$y_1,\ldots, y_k\in\operatorname{supp}\psi$
and $\psi_1,\ldots,\psi_k\in\CSigma$ with $\sum_{j=1}^k\psi_j=\psi$,
such that for $j=1,\ldots,k$, we have $\supp\psi_j\subseteq B^4_r(y_j)$.
But then 
\begin{align}
\supp(\F_{\sol\Sigma}\psi_j)
\subseteq
y_j+\{0\}\times\overline{B^3_{\sqrt{2}r}(0)}+\causal,
\quad
j=1,\ldots,k.
\end{align}
We conclude
\begin{align}
\supp(\F_{\sol\Sigma}\psi)
\subseteq
\bigcup_{j=1}^k\left(
y_j+\{0\}\times\overline{B^3_{\sqrt{2}r}(0)}+\causal
\right)
\subseteq
\supp\psi +\{0\}\times\overline{B^3_{\sqrt{2}r}(0)}+\causal.
\end{align}
Because $r>0$ is arbitrary, this proves the claim
\begin{align}
\supp(\F_{\sol\Sigma}\psi)
\subseteq
\bigcap_{r>0}
\left(
\supp\psi +\{0\}\times\overline{B^3_{\sqrt{2}r}(0)}+\causal
\right)
=
\supp\psi +\causal.
\end{align}
We get for any $\psi\in\CSigma$,
using that $\supp\psi\subseteq \Sigma$ is space-like:
\begin{equation}
\label{eq: support stabil}
\supp(\F_{\Sigma\sol}
 \F_{\sol\Sigma}\psi)
\subseteq 
\Sigma\cap(\supp\psi +\causal)
=\supp\psi
.
\end{equation}
\textsc{Next, we prove equation~(\ref{eq:first_id}):}
Given $\psi\in\CSigma$ and $y\in\Sigma$,
we need to show
\begin{equation}
\label{eq:claim-initial-cond-fulfilled}
\F_{\Sigma\sol}
 \F_{\sol\Sigma}\psi(y)=\psi(y).
\end{equation}
We prove this first in the special case that $y=0\in\Sigma$
and that the tangent space of $\Sigma$ in $0$ equals
$T_0\Sigma=\{0\}\times\R^3$,
and then reduce the general case to the special case, 
using a translation and a Lorentz transformation.

Let 
$\R^3\ni\vec x\mapsto (t_\Sigma(\vec x),\vec x)\in\Sigma$
be the representation of $\Sigma$ as a graph as in 
(\ref{eq:parametrize Sigma}).
In particular our assumption means $t_\Sigma(0)=0$ and $\nabla t_\Sigma(0)=0$.
We set $\eta=\sup_{\vec x\in\R^3}|\nabla t_\Sigma(\vec x)|<1$;
recall condition (c) in the definition of Cauchy surfaces
(Def. \ref{def:cauchy-surface}).
Then for every $x=(x^0,\vec x)\in\Sigma\setminus\{0\}$ and every 
$y=(0,\vec y)\in\{0\}\times\R^3$ with $|\vec y|<(1-\eta)|\vec x|$,
the vector $x-y$ is space-like.
Indeed, $|x^0-y^0|=|x^0|\le \eta|\vec x|<|\vec x|-|\vec y|\le |\vec x-\vec y|$.
Let $\phi:\R^3\to\R_0^+$ be a smooth function supported in
the open ball $B_{1-\eta}^3(0)$ with 
\begin{equation}
\label{eq: integral 1}
\int_{\R^3} \phi(\vec x)\,d^3\vec x=1,
\end{equation}
and let $\chi:\R^3\to[0,1]$ be another smooth, compactly supported function
which equals $1$ in $\overline{B_1^3(0)}$.
For every $\epsilon>0$, we introduce $\chi_\epsilon:\Sigma\to[0,1]$,
$\chi_\epsilon(x^0,\vec x)=\chi(\vec x/\epsilon)$
and $\phi_\epsilon:\R^3\to\R_0^+$, 
$\phi_\epsilon(\vec x)=\epsilon^{-3}\phi(\vec x/\epsilon)$.
Note that $\phi_\epsilon$ fulfills 
\begin{equation}
\label{eq:integral-phi-skaliert}
\int_{\R^3} \phi_\epsilon(\vec x)\,d^3\vec x=1.
\end{equation}
Furthermore, for every $x=(x^0,\vec x)\in \supp((1-\chi_\epsilon)\psi)$, we have
$|\vec x|\ge \epsilon$, and every $y=(0,\vec y)\in 
\{0\}\times \supp\phi_\epsilon$
fulfills $|\vec y|<(1-\eta)\epsilon$. Hence
$x-y$ is space-like. It follows
that the sets $\supp((1-\chi_\epsilon)\psi)+\causal$
and $\{0\}\times \supp\phi_\epsilon$ are disjoint.
Using 
\begin{align}
\supp(\F_{\sol\M} \F_{\M\Sigma}
((1-\chi_\epsilon)\psi)
\subseteq 
\supp((1-\chi_\epsilon)\psi)+\causal,
\end{align}
we conclude for all $\vec x\in\R^3$, $x=(0,\vec x)$ and $\epsilon>0$
that $\phi_\epsilon(\vec x)=0$ or $\F_{\sol\Sigma}((1-\chi_\epsilon)\psi)(x)=0$
holds, i.e.,
\begin{align}
\label{eq:cutoff-chi-ok}
\phi_\epsilon(\vec x)\,\F_{\sol\Sigma}\psi(x)=
\phi_\epsilon(\vec x)\,\F_{\sol\Sigma}
(\chi_\epsilon\psi)(x).
\end{align}
Integrating the left hand side 
over $\vec x$ and taking the limit as $\epsilon\downarrow 0$ we
get on the one hand, using continuity of 
the function $\F_{\sol\Sigma}\psi$:
\begin{align}\label{eq:lim-cutoff-chi-ok}
\lim_{\epsilon\downarrow 0}\int_{\R^3}
\phi_\epsilon(\vec x)\cdot\F_{\sol\Sigma}\psi(0,\vec x)
\,d^3 \vec x=\F_{\sol\Sigma}\psi(0).
\end{align}
On the other hand, we integrate also
the right hand side of (\ref{eq:cutoff-chi-ok}) and rewrite it as
\begin{align}
\nonumber
&\int\limits_{\R^3}\phi_\epsilon(\vec x)
[\F_{\sol\Sigma}(\chi_\epsilon\psi)](0,\vec x)\,d^3\vec x
\\
\nonumber
&=
\frac{(2\pi)^{-3/2}}{m}\epsilon^{-3}
\int\limits_{\vec x\in\R^3}\phi(\vec x/\epsilon)
\int\limits_{p=(p^0,\vec p)\in\M}e^{i\vec p\vec x} 
{\F}_{\M\Sigma}(\chi_\epsilon\psi)(p)\,i_p(d^4p)\,d^3\vec x
\\
\nonumber
&=
\frac{(2\pi)^{-3/2}}{m}\epsilon^{-3}
\int\limits_\M
\int\limits_{\R^3}\phi(\vec x/\epsilon)
e^{i\vec p\vec x} \,d^3\vec x\,
{\F}_{\M\Sigma}(\chi_\epsilon\psi)(p)\,i_p(d^4p)
\\&=
m^{-1}
\int\limits_\M
\hat{\phi}(\epsilon\vec p)
{\F}_{\M\Sigma}
(\chi_\epsilon\psi)(p)\,i_p(d^4p)
\label{eq:change-order-integration}
\end{align}
with the Fourier integral \db being \ed
\begin{align}
\hat{\phi}(\vec q):=
(2\pi)^{-3/2}
\int\limits_{\R^3}\phi(\vec x)
e^{i\vec q\vec x} \,d^3\vec x,\quad \vec q\in\R^3.
\end{align}
Note that changing the order of integration
in (\ref{eq:change-order-integration}) is justified because
$\phi$ is compactly supported and because
${\F}_{\M\Sigma}
(\chi_\epsilon\psi)(p)$ decays faster than any power of $|p|$
as $|p|\to\infty$, $p\in\M$;
here we use that ${\F}_{\M\Sigma}
(\chi_\epsilon\psi)\in\CM$ by Lemma~\ref{lemma: FMSigma}.
Using the definition~(\ref{def FMSigma}) of ${\F}_{\M\Sigma}$,
formulas (\ref{eq:Pp})--(\ref{eq: P+P-}) and (\ref{eq:ipd4p}), 
and the representation (\ref{eq: repr igammad4x}) of $i_\gamma(d^4x)$,
the quantity in (\ref{eq:change-order-integration})
equals
\begin{align}
\nonumber
&\frac{(2\pi)^{-3/2}}{m}
\int_{p\in\M}
\hat{\phi}(\epsilon\vec p)
\frac{\slashed{p}+m}{2m}
\int_{x\in\Sigma}e^{ipx}\,
i_\gamma(d^4x)\,
\chi_\epsilon(x)\psi(x)\,i_p(d^4p)
\\\nonumber
=&
(2\pi)^{-3/2}\int_{p\in\M}
\hat{\phi}(\epsilon\vec p)
P(p)
\int_{x\in\Sigma}e^{ipx}\,
\gamma^0 i_\gamma(d^4x)\,
\chi_\epsilon(x)\psi(x)\,d^3p
\\\nonumber
=&
(2\pi)^{-3/2}\int_{\vec p\in\R^3}
\hat{\phi}(\epsilon\vec p)
\int_{\vec x\in\R^3}
\left(
P_+(\vec p)e^{iE(\vec p) t_\Sigma(\vec x)}
+P_-(\vec p)e^{-iE(\vec p) t_\Sigma(\vec x)}
\right)
e^{-i\vec p\vec x} \times
\\\nonumber
&\times
\left(1+\sum_{\mu=1}^3\gamma^0\gamma^\mu 
{\partial_\mu t_\Sigma(\vec x)}\right)
\chi(\vec x/\epsilon)\psi(t_\Sigma(\vec x),\vec x)\,d^3\vec x\,d^3\vec p
\\\nonumber
=&
(2\pi)^{-3/2}\int_{\vec q\in\R^3}
\hat{\phi}(\vec q)
\int_{\vec y\in\R^3}
\left(
P_+(\vec q/\epsilon)e^{iE(\vec q/\epsilon) t_\Sigma(\epsilon\vec y)}
+P_-(\vec q/\epsilon)e^{-iE(\vec q/\epsilon) t_\Sigma(\epsilon\vec y)}
\right)
e^{-i\vec q\vec y}\times\\
&\times\left(1+\sum_{\mu=1}^3\gamma^0\gamma^\mu 
{\partial_\mu t_\Sigma(\epsilon\vec y)}\right)
\chi(\vec y)\psi(t_\Sigma(\epsilon\vec y),\epsilon\vec y)
\,d^3\vec y\,d^3\vec q \label{eq:ini-val-epsilon}
\end{align}
We now take the limit as $\epsilon$ tends to zero using dominated convergence
and \db exploit \ed the following ingredients:
\begin{enumerate}[(a)]
  \item To find a dominating function for the integrand we employ:
    \begin{enumerate}[(i)]
      \item $P_{+}$ and $P_{-}$ take values in the set of orthogonal projectors and therefore are bounded;
      \item $\nabla t_\Sigma$ and $\psi$ are bounded;
      \item $\hat \phi(\vec q)$ is bounded and decays faster than any power of $|\vec q|$ for $|\vec q|\to\infty$;
      \item $\chi$ is bounded and compactly supported;
    \end{enumerate}
  \item For the point-wise convergence as of the integrand $\epsilon\to 0$ we use for any $\vec q,\vec x$:
    \begin{enumerate}[(i)]
      \item $P_{\pm}(\vec q/\epsilon)$ converge to orthogonal projectors and $P_{+}(\vec q/\epsilon)+P_{-}(\vec q/\epsilon)$ equals the identity;
      \item $t_\Sigma(\epsilon\vec y)/\epsilon\to 0$ and $E(\vec q/\epsilon)\epsilon$ is bounded for sufficiently small $\epsilon$;
      \item $\nabla t_\Sigma(\epsilon \vec y)\to 0$;
      \item $\psi(t_\Sigma(\epsilon\vec y),\epsilon\vec y)\to \psi(0)$.
    \end{enumerate}
\end{enumerate}
This implies that the limit of (\ref{eq:ini-val-epsilon}) 
as $\epsilon\to 0$ can be expressed as
\begin{align}
&  \psi(0)(2\pi)^{-3/2}\int_{\vec q\in\R^3}
\hat{\phi}(\vec q)
\int_{\vec y\in\R^3}
e^{-i\vec q\vec y}
\chi(\vec y)\,d^3\vec y\,d^3\vec q
\cr
=&
  \psi(0)\int_{\vec y\in\R^3}
\phi(\vec y)
\chi(\vec y)\,d^3\vec y
=
  \psi(0)\int_{\vec y\in\R^3}
\phi(\vec y)
\,d^3\vec y
\cr
=&
  \psi(0)
\end{align}
because by the choice of $\phi$ and $\chi$ we have $\phi\chi=\phi$;
recall also formula (\ref{eq: integral 1}). 
Let us summarize. Together with (\ref{eq:cutoff-chi-ok}), 
(\ref{eq:lim-cutoff-chi-ok}) we have shown that
\begin{align}
  \F_{\sol\Sigma}\psi(0)=\psi(0)
\end{align}
which implies
\begin{align}
  \F_{\Sigma\sol}  \F_{\sol\Sigma}\psi(0)=\psi(0).
\end{align}
Next, we treat the case of general $y\in\Sigma$ with a general
tangent space $T_y\Sigma$, using a translation by $-y$
and a Lorentz transformation encoded by some $(S,\Lambda)$
that maps the space-like hyperplane $T_0(\Sigma-y)$ 
to the time-0-hyperplane $\{0\}\times\R^3$.
Using Theorem~\ref{thm: compatibility Poincar'e}
together with the special case just considered, we get 
\begin{align}
  &\F_{\Sigma\sol}\F_{\sol\Sigma}\psi(y)=\F_{\sol\Sigma}\psi(y)=
T_{\sol}^{-y} \F_{\sol\Sigma}\psi(0)=
\F_{\sol,\Sigma-y}T_\Sigma^{-y}\psi(0)
\\&
=
(L_{\sol}^{(S,\Lambda)})^{-1}\F_{\sol,\Lambda(\Sigma-y)} 
L_{\Sigma-y}^{(S,\Lambda)}T_\Sigma^{-y}\psi(0)
=
S^{-1}\cdot L_{\Sigma-y}^{(S,\Lambda)}T_\Sigma^{-y}\psi(0)
=T_\Sigma^{-y}\psi(0)
=\psi(y).
\end{align}
This proves equation~(\ref{eq:first_id}).

By Definition (\ref{eq:scalar-product-sol}) of the scalar product in $\Cs$,
the map $\F_{\Sigma\sol}:\Cs\to\CSigma$ is an isometry.
Using equation~(\ref{eq:first_id}), i.e.,
$\F_{\Sigma\sol}  \F_{\sol\Sigma}=\operatorname{id}_{\CSigma}$,
it follows that $\F_{\sol\Sigma}$ is also an isometry.\\

\textsc{\db Now we prove \ed that $\F_{\sol\M}:\CM\to\Cs$ is
an isometry.}
We consider the time-zero-hyperplane $\Sigma^0=\{0\}\times \R^3$
and set \db $\F_{\Sigma^0 3}:=
\F_{\Sigma^0 \sol}  \F_{\sol\M}  \F_{\M 3}$.\ed
It is just the standard inverse Fourier transform,
as the following calculation shows.
For $\psi\in\CThree$ and $\vec x\in\R^3$, combining
(\ref{def FSigmas}), (\ref{def FsM}), and (\ref{def FM3})
with
(\ref{eq:Pp}), (\ref{eq: P+-}), (\ref{eq: P+P-}), and (\ref{eq:ipd4p}),
we obtain
\begin{align}
\db (\F_{\Sigma^0 3}\ed\psi)(0,\vec x)
=(2\pi)^{-3/2}\int_{\R^3}e^{i\vec p\vec x}(P_+(\vec p)+P_-(\vec p))
\psi(\vec p)\,d^3\vec p
 =(2\pi)^{-3/2}\int_{\R^3}e^{i\vec p\vec x}\psi(\vec p)\,d^3\vec p
\end{align}
As a consequence, \db $\F_{\Sigma^03}\psi:\CThree\to\cC_{\Sigma^0}$ \ed is isometric. Since
\db $\F_{\Sigma^0\sol}$ \ed and $\F_{\M 3}$ are isometries
and $\F_{\M 3}[\CThree]=\CM$ by Lemma~\ref{lem:FM3_F3M}, it follows that 
$\F_{\sol\M}:\CM\to\Cs$ is also an isometry. 

As $\F_{\Sigma\sol}:\Cs\to\CSigma$ and $\F_{\sol\M}:\CM\to\Cs$ are isometric,
formula (\ref{eq:first_id}) implies that $\F_{\M\Sigma}:\CSigma\to\CM$ is also
isometric.\\
  
\textsc{Finally, we prove equations (\ref{eq:second_id}) 
and (\ref{eq:third_id}).} 
We use the following well-known fact. 
Assume that isometries $f:C\to C'$ and $g:C'\to C$ between 
pre-Hilbert spaces $C,C'$ are given. Further assume that 
$g\circ f=\mathrm{id}_{C}$ holds. 
Then $f$ and $g$ are isometric isomorphisms and inverse to each other. 
We apply this fact to $g=\F_{\Sigma\sol}  \F_{\sol\M}$ and 
$f=\F_{\M\Sigma}$ on the one hand to get 
equation~(\ref{eq:second_id}) from equation~(\ref{eq:first_id}), 
and to $g=\F_{\Sigma\sol}$ and 
$f=\F_{\sol\M}  \F_{\M\Sigma}$ on the other hand 
to get equation~(\ref{eq:third_id}) also from equation~(\ref{eq:first_id}).
The three equations  (\ref{eq:first_id})--(\ref{eq:third_id})
show also that the three maps $\F_{\M\Sigma}: \CSigma \to \CM$, 
$\F_{\Sigma\sol}: \Cs \to \CSigma$, and $\F_{\sol\M}: \CM \to \Cs$
are isomorphisms. 
\end{proof}

As a consequence we get the following corollary.

\begin{corollary}\label{cor:FMSigma_properties}
The maps $\F_{\M\Sigma}: \CSigma \to \CM$, $\F_{\Sigma\sol}: \Cs \to \CSigma$, and $\F_{\sol\M}: \CM \to \Cs$ extend to unitary maps
\begin{align}\F_{\M\Sigma}: \HSigma \to \HM,\qquad \F_{\Sigma\sol}: \Hs \to \HSigma,\qquad \F_{\sol\M}: \HM \to \Hs.
\end{align} Furthermore, they fulfill
\begin{align}
  \F_{\Sigma\sol}  \F_{\sol\M}  \F_{\M\Sigma}&=\mathrm{id}_{\HSigma},\\
  \F_{\M\Sigma}  \F_{\Sigma\sol}  \F_{\sol\M}&=\mathrm{id}_{\HM}, \\
  \F_{\sol\M}  \F_{\M\Sigma} \F_{\Sigma\sol}&=\mathrm{id}_{\Hs}.
\end{align}
\end{corollary}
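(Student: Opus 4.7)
The corollary follows from Lemma~\ref{lemma: cycle Sigma-s-M} by a standard extension-by-continuity argument. The plan has two parts.

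First, I would invoke the bounded linear transformation theorem to extend each of the three maps. The ingredients are: (i) $\CSigma$ is dense in $\HSigma$ by the remark immediately following Definition~\ref{def:HSigma}; (ii) $\CM$ is dense in $\HM$ by Corollary~\ref{cor:CM_dense}; (iii) $\Cs$ is dense in $\Hs$ by the very definition of $\Hs$ as the completion of $\Cs$. Since Lemma~\ref{lemma: cycle Sigma-s-M} tells us the three maps $\F_{\M\Sigma}$, $\F_{\Sigma\sol}$, $\F_{\sol\M}$ are isometric between these dense subspaces, each admits a unique extension to a bounded isometry between the corresponding Hilbert spaces. To upgrade from isometry to unitarity, I would note that the image of such an extension is a closed subspace (because isometric images of complete spaces are closed) which contains the corresponding dense subspace $\CSigma$, $\CM$, or $\Cs$ of the target (since the original maps were isomorphisms onto these subspaces by Lemma~\ref{lemma: cycle Sigma-s-M}); hence the image is the whole target, and the extension is unitary.

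Second, the three composition identities transfer to the Hilbert space level by continuity. Each composition such as $\F_{\Sigma\sol}  \F_{\sol\M}  \F_{\M\Sigma}$ is, on the Hilbert space level, a product of bounded (in fact unitary) operators and therefore continuous. By Lemma~\ref{lemma: cycle Sigma-s-M} it agrees with the identity on the dense subspace $\CSigma$, and two continuous maps agreeing on a dense subset agree everywhere. The same reasoning gives the other two identities.

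No step is a genuine obstacle; the only mildly delicate point is confirming that the isometric extensions are actually surjective, but this is handled by the closed-range argument above together with the fact that Lemma~\ref{lemma: cycle Sigma-s-M} already supplies surjectivity on the dense subspaces.
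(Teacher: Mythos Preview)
Your proposal is correct and follows the same approach as the paper, which simply states that the corollary ``follows immediately from Lemma~\ref{lemma: cycle Sigma-s-M}, because $\CSigma$, $\Cs$, and $\CM$ are dense in $\HSigma$, $\Hs$, and $\HM$, respectively.'' You have merely spelled out the standard density-plus-continuity argument that the paper leaves implicit, including the closed-range step for surjectivity, so there is no substantive difference.
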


\begin{proof}
This follows immediately from Lemma~\ref{lemma: cycle Sigma-s-M},
because
$\CSigma$, $\Cs$, and $\CM$ are dense in $\HSigma$, $\Hs$, and $\HM$,
respectively.
\end{proof}

\subsubsection{Proof of Theorem
  \ref{thm:F_maps}}\label{sec:generalized-fourier}

\begin{proof}[Proof of Theorem~\ref{thm:F_maps}]
(a) For any placeholders $I,J,K$ among 
 the symbols $3,\M,\sol$ or any Cauchy surface $\Sigma$
such that $\F_{IJ}$ and $\F_{JK}$
are already defined, but $\F_{IK}$ is not yet defined,
we define $\F_{IK}:=\F_{IJ} \F_{JK}$.
This is repeated recursively until all maps $\F_{IK}$ are defined.
As a consequence of Lemmas \ref{lem:FM3_F3M} and 
\ref{lemma: cycle Sigma-s-M} and Corollary
\ref{cor:FMSigma_properties}, these recursive 
definitions do not contradict each
other. All claims of Theorem~\ref{thm:F_maps} (a) follow now immediately.
Claim (b) is already proven in Lemma~\ref{lemma: cycle Sigma-s-M},
and claim (c) is just composed of Lemma~\ref{lem:CThree_dense}, 
Corollary \ref{cor:CM_dense}, Definition~\ref{def: CSigma}, 
and equation~(\ref{eq:Cs_dense}).
\end{proof}

%
%

\subsection{Existence, Uniqueness, and Causal Structure}
\label{sec:dirac-et-overview}

In this section the Theorems~\ref{dirac-existence-uniqueness}, \ref{thm: Hilbert space
time evolution}, and \ref{lemma: equiv} are proven.
The strategy of proof is the following:
\begin{enumerate}[(1)]
    \item Proof of existence and uniqueness of solutions in the interaction
        picture introduced in Section~\ref{sec:interaction};
        see Lemma~\ref{lem:interaction-existence-uniqueness}.
    \item Proof of regularity and support properties of solutions in the
        interaction picture; see Theorem~\ref{lemma: equiv}.
    \item Use (2) to prove the equivalence of the Schr\"odinger and the interaction
        picture in Section~\ref{subsect:Regularity and support properties}.
    \item Use (1), (2), and (3) to prove existence, uniqueness, regularity, and causal
        structure of solutions for the Dirac equation in
        Sections \ref{sec:main-result-1} and \ref{sec:evolution-hilbert}. 
\end{enumerate}

\subsubsection{\db Existence and Uniqueness in the Interaction Picture\ed}
\label{subsect:Existence and Uniqueness in the interaction picture}

Preliminarily we check general properties of the operator $L_t$ defined in 
(\ref{eq:interaction-evolution-Cs}).

\begin{lemma}\label{lem:reg-Lt}
For any $t\in\R$ and $n\in\N_0$, the operator $L_t:\Cs\selfmaps$
introduced in (\ref{eq:interaction-evolution-Cs})
extends to a bounded linear map 
$L_t:(\cH_{\sol,n},\|{\cdot}\|_{\sol,n})\selfmaps$, 
denoted by the same symbol $L_t$.
For any $n,l\in\N_0$ with $n\geq l$ the function
\begin{equation}
    L_{(\cdot)}:\R\to\mathcal{B}(\cH_{\sol,n},\|{\cdot}\|_{\sol,n})\subseteq
\mathcal{B}((\cH_{\sol,n},\|{\cdot}\|_{\sol,n}),
(\cH_{\sol,n-l},\|{\cdot}\|_{\sol,n-l})),\quad t\mapsto L_t
\end{equation}
is $l$ times continuously differentiable with respect to
the strong operator topology on\\
$\mathcal{B}((\cH_{\sol,n},\|{\cdot}\|_{\sol,n}),
(\cH_{\sol,n-l},\|{\cdot}\|_{\sol,n-l}))$, where ``0 times continuously
differentiable'' means ``continuous''. 
In particular, for all $n\in\N_0$ the operator norm of $L_t$ is locally bounded
in $t\in\R$, i.e., $\sup_{t\in[t_0,t_1]}\Vert
L_t\Vert_{\cH_{\sol,n}\to\cH_{\sol,n}}<\infty$ for all reals $t_0<t_1$.
\end{lemma}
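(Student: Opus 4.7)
My plan is to prove boundedness via an inductive commutator argument that transports all spatial regularity to the fixed reference picture, and to establish differentiability by tracking how each $t$-derivative trades one order of Sobolev regularity. First, I would factor $L_t = \F_{\sol\Sigma_t} \circ M_t \circ \F_{\Sigma_t\sol}$ on $\Cs$, where $M_t$ denotes pointwise multiplication by the smooth, compactly supported matrix-valued function $v_t\slashed n_t\slashed A$ on $\Sigma_t$. Since $\F_{\Sigma_t\sol}$ and $\F_{\sol\Sigma_t}$ are unitary by Theorem~\ref{thm:F_maps} and $M_t$ is a bounded multiplication on $L^2(\Sigma_t,\C^4)$ with norm controlled by $\sup|v_t\slashed n_t\slashed A|$, the case $n=0$ is immediate, the local uniformity in $t$ being inherited from the smoothness of $v$, $n$, $A$ and the compactness of $\supp A$.

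For $n\geq 1$ I would proceed by induction. On $\Cs$ each $\partial_\mu$ is the ordinary spacetime partial derivative and maps $\Cs$ into itself. Writing $\partial_\mu L_t\psi$ on $\Sigma_t$ via the Leibniz rule, one term has the derivative falling on the multiplier (yielding a bounded operator of the same structural form, controlled on $\cH_{\sol,n-1}$ by the induction hypothesis), and another is $M_t\partial_\mu\psi$, which is directly bounded by $\|\psi\|_{\sol,n}$ for tangential $\partial_\mu$. For the normal direction one invokes the free Dirac equation $i\slashed\partial\psi=m\psi$ to express the normal derivative on $\Sigma_t$ in terms of tangential ones plus a zeroth-order term. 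This gives $\|L_t\psi\|_{\sol,n} \leq C(t)\|\psi\|_{\sol,n}$ on $\Cs$, and by density of $\Cs$ in $\cH_{\sol,n}$ (Theorem~\ref{thm:F_maps}~(c)) $L_t$ extends by continuity. For the $l$-times continuous differentiability, the only $t$-dependence sits in $v_t$, $n_t$, and the implicit $\Sigma_t$ inside $\F_{\sol\Sigma_t}\circ\F_{\Sigma_t\sol}$. Differentiating the restriction of a free solution $\phi\in\Cs$ to $\Sigma_t$ at $x\in\Sigma_t$ with respect to $t$ yields the normal derivative $v_t(x)n_t^\mu(x)\partial_\mu\phi(x)$, costing exactly one order of spatial regularity. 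Formal differentiation therefore produces $\frac{d}{dt}L_t$ as an operator of the same multiplicative structure but with one extra $\partial_\mu$ inserted, hence bounded from $\cH_{\sol,n}$ into $\cH_{\sol,n-1}$; iterating $l$ times yields the claim, and continuity in the strong operator topology follows from dominated convergence applied to the smooth, compactly supported integrands.

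The main technical obstacle will be handling the moving Cauchy surfaces $\Sigma_t$ rigorously, since the ambient Hilbert spaces $\cH_{\Sigma_t}$ themselves depend on $t$ and a naive commutator argument does not apply directly. I would circumvent this by always transporting back to the fixed space $\cH_{\sol,n}$ via the unitary $\F_{\sol\Sigma_t}$ before estimating, and by systematically using the Dirac equation to trade normal derivatives on $\Sigma_t$ for tangential ones plus zeroth-order terms. A secondary, more bookkeeping-style obstacle is verifying that differentiation in $t$ truly costs at most one order of regularity per derivative, which once again reduces to the observation that time-differentiation along the foliation corresponds, pointwise on $\Sigma_t$, to a normal spacetime derivative controllable by the Dirac equation.
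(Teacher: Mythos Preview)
Your boundedness argument is essentially the paper's: the paper packages your inductive Leibniz/commutator computation as a separate auxiliary result (Lemma~\ref{lem:multiplication-bounded}), showing that multiplication by a compactly supported smooth matrix function is bounded on $(\cH_{\Sigma,n},\|\cdot\|_{\Sigma,n})$, and then conjugates by the unitaries $\F_{\sol\Sigma_t}$, $\F_{\Sigma_t\sol}$ exactly as you do. Your reduction of the normal derivative to tangential ones via the free Dirac equation is precisely the content of formulas (\ref{eq:partial-k})--(\ref{eq:D-k}) and the proof of Lemma~\ref{lem:multiplication-bounded}.

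For the $t$-differentiability, however, the paper takes a genuinely different route. Rather than differentiating in position space and tracking how $\F_{\sol\Sigma_t}$ and $\F_{\Sigma_t\sol}$ move with $t$, the paper passes to the mass-shell picture and writes, for $\psi\in\Cs$,
\[
\F_{\M\sol}L_t\psi(p)=\frac{\slashed p+m}{2m}(2\pi)^{-3/2}\int_{\Sigma_t} e^{ipx}\,i_\gamma(d^4x)\,v_t(x)\slashed n_t(x)\slashed A(x)\psi(x),
\]
and differentiates this in $t$ via the Lie derivative $\mathcal L_{v_tn_t}$ of the integrand. Expanding $\mathcal L_{v_tn_t}^l$ produces a finite sum of terms $p^\beta\,\F_{\M\Sigma_t}(\zeta_{l,\alpha,\beta}\,\partial^\alpha\psi)$ with $|\alpha+\beta|\le l$, and the Paley-Wiener bounds of Lemma~\ref{lemma: FMSigma} then give uniform-in-$t$ control in $\|\cdot\|_{\M,\gamma,k}$ for all $k$, from which dominated convergence yields continuity and the claimed $C^l$ regularity in $(\cH_{\M,n-l},\|\cdot\|_{\M,n-l})$. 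The advantage of the paper's approach is that all $t$-dependence is concentrated in a single integral over a moving domain, so the Lie-derivative formalism handles the moving $\Sigma_t$, the moving $\F$'s, and the varying coefficients in one stroke, and the Paley-Wiener machinery already developed does the analytic work. Your approach avoids the Fourier side entirely and is conceptually more elementary, but the price is that you must separately justify differentiating $\F_{\sol\Sigma_t}$ in $t$ as a map into a fixed space; this is the step you flag as the ``main technical obstacle,'' and making it rigorous without the integral representation would require some additional argument (e.g.\ a flow-based trivialization of the family $\cH_{\Sigma_t}$) that the paper sidesteps.
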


\begin{proof}
    Let $t\in\R$ and $n\in\N_0$ (not to be confused with the normal vector
    field $n=n(x)$ having the same name). 
    By applying Lemma~\ref{lem:multiplication-bounded} from the appendix
    to the function $Z(\vec x) = (v_t\slashed n_t\slashed A)(t_{\Sigma_t}(\vec
    x),\vec x)$, $\vec
    x\in\R^3$, it
    follows that the multiplication operator $v_t\slashed n_t\slashed
    A:(\cH_{\Sigma_t,n},\Vert{\cdot}\Vert_{\Sigma_t,n})\selfmaps$ is bounded.
    Consequently, $L_t = \F_{\sol\Sigma_t}(v_t\slashed n_t\slashed
    A)\F_{\Sigma_t\sol}:(\cH_{\sol,n},\Vert{\cdot}\Vert_{\sol,n})\selfmaps$ is
    bounded as well. \\

    Let $n,l\in\N_0$ such that $n\geq l$. It suffices to check that the
    maps $t\mapsto L_{t}\psi\in\cH_{0,n-l}$ have the regularity $\cC^l$
    w.r.t.\ the norm $\Vert{\cdot}\Vert_{0,n-l}$
    for all $\psi$ in the dense subset $\Cs$ of $\cH_{0,n}$.
    Furthermore, it even suffices to check the $\cC^l$ regularity of
    \begin{align}
        t\mapsto  \F_{\M\sol}L_t\psi 
        =
        \F_{\M\Sigma_t}(v_t\slashed n_t\slashed A)\F_{\Sigma_t,0}\psi
        \in
        (\cH_{\M,n-l},\Vert{\cdot}\Vert_{\M,n-l}).
    \end{align}
    Using Theorem~\ref{thm:F_maps} we find for $p\in\M$
    \begin{align}
        \F_{\M\sol}L_t\psi(p)
        =
        \frac{\slashed p+m}{2m}(2\pi)^{-3/2}
        \int_{\Sigma_t} e^{ipx}\,i_\gamma(d^4x)
        \,v_t(x)\slashed n_t(x)\slashed A(x)\psi(x)    
    \end{align}
    We recall that the motion of $\Sigma_t$ can be seen as driven by the vector field
    $v(x)\,n(x)$, that $A$ is compactly supported, and that
    $v(x),n(x),A(x),\psi(x)$ are smooth. Therefore, the
    following derivatives w.r.t.\ $t$, point-wise in $p\in\M$, exist and are given by
    \begin{align}
        \frac{\partial^l}{\partial t^l}
        \F_{\M\sol}L_t\psi(p)
        =
        \frac{\slashed p+m}{2m}(2\pi)^{-3/2}
        \int_{\Sigma_t} 
        \cal L_{v_t n_t}^l \left(
            e^{ipx}\,i_\gamma(d^4x)
            \,v_t(x)\slashed n_t(x)\slashed A(x)\psi(x)
        \right),
    \end{align}
    where $\cal L_{v_t n_t}=i_{v_t n_t}\circ d+d\circ i_{v_t n_t}$ 
    denotes the Lie derivative.
    Expanding the iterated Lie derivative, the integrand
    takes the form
    \begin{align}
        \cal L_{v_t n_t}^l \left(
            e^{ipx}\,i_\gamma(d^4x)
            \,v_t(x)\slashed n_t(x)\slashed A(x)\psi(x)
        \right)
        = \sum_{
            \substack{\alpha,\beta\in\N^4_0\\
            |\alpha+\beta|\leq l}}
        p^\beta e^{ipx} i_\gamma(d^4x) \zeta_{l,\alpha,\beta}(x) \partial^\alpha
        \psi(x)
    \end{align}
    for appropriate $\zeta_{l,\alpha,\beta} \in \cC_c^\infty(\R^4,\C^{4\times
    4})$ being independent of
    $p$ and $\psi$. It follows
    \begin{align}
        \frac{\partial^l}{\partial t^l}
        \F_{\M\sol}L_t\psi(p)
        =
        \sum_{
            \substack{\alpha,\beta\in\N^4_0\\
            |\alpha+\beta|\leq l}} p^\beta \F_{\M \Sigma_t}
        \left.\left(\zeta_{l,\alpha,\beta} \partial^\alpha
        \psi\right)\right|_{\Sigma_t}(p),
        \qquad\text{for }p\in\M.
    \end{align}
Using Lemma~\ref{lemma: FMSigma} we observe for our given $\psi\in\Cs$ 
that for all bounded intervals
$[t_0,t_1]$ there exists $\gamma>0$ such that for all
$k\in\N$
\begin{align}
    \sup_{\substack{t\in[t_0,t_1]\\\alpha,\beta\in\N^4_0,\,|\alpha+\beta|\leq l}}
    \left\Vert \F_{\M \Sigma_t}
   \left.\left(\zeta_{l,\alpha,\beta} \partial^\alpha
   \psi\right)\right|_{\Sigma_t} 
   \right\Vert_{\M,\gamma,k} 
   < \infty
\end{align}
holds. Using dominated convergence we infer that
\begin{align}
    [t_0,t_1]\ni t 
    \mapsto
    \sum_{
            \substack{\alpha,\beta\in\N^4_0\\
            |\alpha+\beta|\leq l}} p^\beta \F_{\M \Sigma_t}
        \left.\left(\zeta_{l,\alpha,\beta} \partial^\alpha
        \psi\right)\right|_{\Sigma_t}
    \in (\cH_{\M,j},\|{\cdot}\|_{\M,j})
    \label{eq:cont-c0}
\end{align}
is continuous for all $j\in\N_0$ and equals 
$t\mapsto \frac{\partial^l}{\partial t^l}
        \F_{\M\sol}L_t\psi$, where the derivatives are taken in
        $(\cH_{\M,j},\|{\cdot}\|_{\M,j})$.
By Lemma~\ref{lem:multiplication-bounded},
Definition~\ref{def:sobolev}, and the fact that $\cC_\sol$ is dense in
$\cH_{\sol,n}$, we conclude that the maps 
\begin{align}
    (\cH_{\sol,n},\|{\cdot}\|_{\sol,n}) \ni \phi
    \mapsto
    \sum_{
            \substack{\alpha,\beta\in\N^4_0\\
            |\alpha+\beta|\leq l}} p^\beta \F_{\M \Sigma_t}
        \left.\left(\zeta_{l,\alpha,\beta} \partial^\alpha
        \phi\right)\right|_{\Sigma_t}
        \in (\cH_{\M,n-l},\|{\cdot}\|_{\M,n-l})
\end{align}
are bounded uniformly in $t\in[t_0,t_1]$. Using this, the continuity of the map
in 
(\ref{eq:cont-c0}), and again
the denseness argument, we note that the
continuity claimed in (\ref{eq:cont-c0}) holds also for any
$\psi\in\cH_{\sol,n}$ and $j=n-l$.
By induction in $l=0,1,\ldots, n$ we find that
for any $\phi\in\cH_{\sol,n}$
\begin{align}
    \frac{\partial^l}{\partial t^l}
    \F_{\M\sol}L_t\phi
    =
    \sum_{
        \substack{\alpha,\beta\in\N^4_0\\
        |\alpha+\beta|\leq l}} p^\beta \F_{\M \Sigma_t}
    \left.\left(\zeta_{l,\alpha,\beta} \partial^\alpha
    \phi\right)\right|_{\Sigma_t},
\end{align}
where the derivative in the induction step is taken in
$(\cH_{\M,n-l},\|{\cdot}\|_{\M,n-l})$.
\end{proof}

\begin{lemma}[Existence and Uniqueness in the Interaction Picture]
  \label{lem:interaction-existence-uniqueness}
Let $n\in\N_0$.
For any $\chi\in \cH_{\sol,n}$, the initial value problem
over $(\cH_{\sol,n},\|{\cdot}\|_{\sol,n})$
\begin{align}
\label{eq:ivp}
i\frac{d}{dt}\phi_t=L_t\phi_t,\quad\phi_0=\chi
\end{align}
has a unique solution
$\phi_{(\cdot)}:\R\to\cH_{\sol,n}$ which is continuously
differentiable w.r.t.\ the norm $\|{\cdot}\|_{\sol,n}$.
\end{lemma}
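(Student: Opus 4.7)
The plan is to convert the initial value problem (\ref{eq:ivp}) into the integral equation
$$\phi_t = \chi - i\int_0^t L_s \phi_s\, ds$$
over the Banach space $(\cH_{\sol,n},\|{\cdot}\|_{\sol,n})$, and to solve it by Picard iteration on each bounded interval $[-T,T]\subset\R$. The essential inputs come from Lemma~\ref{lem:reg-Lt}: each $L_t$ is a bounded operator on $\cH_{\sol,n}$, the quantity $M_T:=\sup_{t\in[-T,T]}\|L_t\|_{\cH_{\sol,n}\to\cH_{\sol,n}}$ is finite, and $t\mapsto L_t$ is strongly continuous.

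Concretely, I would define Picard iterates $\phi_t^{(0)}:=\chi$ and $\phi_t^{(k+1)}:= \chi - i\int_0^t L_s\phi_s^{(k)}\,ds$, and check inductively that each iterate is continuous in $t$ and obeys $\|\phi_t^{(k+1)}-\phi_t^{(k)}\|_{\sol,n}\le \|\chi\|_{\sol,n}(M_T|t|)^{k+1}/(k+1)!$ on $[-T,T]$. The iterates then converge uniformly in the $\|{\cdot}\|_{\sol,n}$-norm to a continuous function $\phi:[-T,T]\to\cH_{\sol,n}$ that satisfies the integral equation. Differentiating yields (\ref{eq:ivp}) with $t\mapsto L_t\phi_t$ continuous, so $\phi$ is $C^1$. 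The one technical point that needs care is the continuity of $s\mapsto L_s\phi_s$, required both to make sense of the Bochner integrals in each Picard step and to identify the derivative; this follows from the decomposition
$$L_s\phi_s-L_t\phi_t = L_s(\phi_s-\phi_t)+(L_s-L_t)\phi_t,$$
the uniform bound $\|L_s\|\le M_T$, continuity of $s\mapsto\phi_s$, and strong continuity of $L_s$ at $t$ applied to the fixed vector $\phi_t$.

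For uniqueness, two $C^1$-solutions $\phi,\widetilde\phi$ with the same initial data yield $\eta_t:=\phi_t-\widetilde\phi_t$ satisfying $\eta_t=-i\int_0^t L_s\eta_s\,ds$, whence $\|\eta_t\|_{\sol,n}\le M_T\int_0^{|t|}\|\eta_s\|_{\sol,n}\,ds$; Gronwall's inequality forces $\eta\equiv 0$ on $[-T,T]$, and since $T$ is arbitrary the local pieces combine into a unique global $C^1$-solution $\phi_{(\cdot)}:\R\to\cH_{\sol,n}$. I do not anticipate a serious obstacle here: this is a textbook application of Banach-space Picard-Lindel\"of for a linear time-dependent ODE, with all required ingredients supplied by Lemma~\ref{lem:reg-Lt}.
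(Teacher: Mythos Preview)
Your proposal is correct and follows essentially the same approach as the paper: both convert (\ref{eq:ivp}) to the Volterra integral equation and invoke Picard--Lindel\"of, using the strong continuity and local boundedness of $t\mapsto L_t$ supplied by Lemma~\ref{lem:reg-Lt}. You simply spell out the iteration, the continuity of $s\mapsto L_s\phi_s$, and the Gronwall-based uniqueness in more detail than the paper does.
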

Note that any solution of the initial value problem
(\ref{eq:ivp}) over $(\cH_{\sol,n},\|{\cdot}\|_{\sol,n})$
is also a solution over
$(\cH_{\sol,n'},\|{\cdot}\|_{\sol,n'})$ for any $n'\in\N_0$ with $n'\le n$.
In particular, for initial data $\chi\in\bigcap_{n\in\N_0}\cH_{\sol,n}$ the
corresponding solution $\phi_t$ lies in the same intersection of spaces.


%


\begin{proof}
Lemma \ref{lem:reg-Lt} ensures that the map $\R\ni t\mapsto L_t\in
\cal B(\cH_{\sol,n},\|{\cdot}\|_{\sol,n})$ is continuous with respect to the strong
operator topology. Consequently, by the Picard-Lindel\"of theorem, it follows that
the Volterra integral equation
associated to (\ref{eq:ivp})
\begin{equation}
\label{eq:Volterra}
\phi_t=\chi-i\int_0^t L_s\phi_s\,ds
\end{equation}
has a unique continuous 
solution $\phi_{(\cdot)}:\R\to(\cH_{\sol,n},\|{\cdot}\|_{\sol,n})$
for any initial value $\chi\in\cH_{\sol,n}$ and any $n\in\N_0$.
Furthermore, the fundamental theorem of calculus guarantees that 
it is continuously differentiable with respect to
the norm $\|{\cdot}\|_{\sol,n}$ with the derivative given in (\ref{eq:ivp}).
\end{proof}

\subsubsection{\db Regularity and Support Properties\ed}
\label{subsect:Regularity and support properties}
\begin{lemma}[Regularity of Solutions]
\label{lemma: regularity}
Let $n,l\in\N_0$ such that $n\geq l$. 
\begin{enumerate}
\item
\label{claim: b}
For any initial value $\chi\in\cH_{\sol,n}$,
the solution $\R\ni t\mapsto \phi_t$ of the initial value problem
(\ref{eq:ivp}) is $l$ times continuously differentiable
w.r.t.\ the norm $\|{\cdot}\|_{\sol,n-l}$.
\item
\label{claim: c}
If in addition $n\ge l+2$, then the map
\begin{equation}
\phi:\R^4\times\R\to\C^4, \quad
\phi(x,t)=\phi_t(x)
\end{equation}
is well-defined and $l$ times continuously differentiable.
In particular, the function $\phi$ is smooth
for initial values $\chi\in\cC_\sol$ and solves the
initial value problem 
(\ref{eq:initial cond interaction picture}), 
(\ref{eq:interaction-evolution-Cs})
in the classical sense.
\end{enumerate}
\end{lemma}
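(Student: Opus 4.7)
The plan is to prove (a) by induction on $l$ using the ODE $i\phi_t' = L_t \phi_t$, then to deduce (b) by combining (a) with the pointwise evaluation bound from Lemma~\ref{lemma: Pointwise evaluation} and the Fourier-multiplier description of $\partial_j$ from Definition~\ref{def:sobolev}. The base case $l=0$ of (a) is contained in Lemma~\ref{lem:interaction-existence-uniqueness}. For the inductive step from $l$ to $l+1$ under the hypothesis $n \geq l+1$, I would differentiate the ODE $l$ times via the Leibniz rule to obtain
\begin{equation*}
i \phi_t^{(l+1)} = \sum_{j=0}^l \binom{l}{j} L_t^{(j)} \phi_t^{(l-j)}.
\end{equation*}
Applying Lemma~\ref{lem:reg-Lt} with the parameter pair $(n-l+j,\, j+1)$, admissible since $n \geq l+1$ gives $n-l+j \geq j+1$, shows that each $L_t^{(j)} : \cH_{\sol,n-l+j} \to \cH_{\sol,n-l-1}$ is strongly continuous with locally bounded operator norm. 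Together with $\phi_t^{(l-j)} \in \cC^0(\R,\cH_{\sol,n-l+j})$ from the inductive hypothesis, each summand is then continuous in $t$ into $\cH_{\sol,n-l-1}$, which is the desired regularity of $\phi_t^{(l+1)}$.

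For (b), fix $(\alpha,k)$ with $|\alpha|+k \leq l$ and assume $n \geq l+2$. By part (a), $\phi^{(k)}$ is continuous into $\cH_{\sol,n-k}$. Since $\partial^\alpha : \cH_{\sol,n-k} \to \cH_{\sol,n-k-|\alpha|}$ is bounded and, being a $t$-independent Fourier multiplier, commutes with time differentiation, the map $t \mapsto \partial^\alpha \phi_t^{(k)}$ is continuous into $\cH_{\sol,n-k-|\alpha|}$ with $n-k-|\alpha| \geq 2$. Lemma~\ref{lemma: Pointwise evaluation} then yields a pointwise bound uniform in $x$, and approximating by smooth elements of $\cC_\sol$ (which are classically continuous) upgrades this to joint continuity of $(x,t) \mapsto \partial^\alpha \phi_t^{(k)}(x)$; standard calculus identifies this with $\partial^\alpha \partial_t^k \phi(x,t)$, giving $\phi \in \cC^l(\R^4 \times \R, \C^4)$. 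For $\chi \in \cC_\sol$, the bound~(\ref{bound 2}) in the definition of $\CM$ forces $\F_{\M\sol}\chi \in \cH_{\M,n}$ for every $n$; uniqueness in Lemma~\ref{lem:interaction-existence-uniqueness} then keeps $\phi_t$ inside $\bigcap_n \cH_{\sol,n}$ for all $t$, so applying the argument for every $l$ produces smoothness and the classical solution status of $\phi$ for the system~(\ref{eq:initial cond interaction picture})--(\ref{eq:interaction-evolution-Cs}).

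The main technical care lies in the bookkeeping of Sobolev orders: each time derivative in the Leibniz expansion costs one Sobolev order via Lemma~\ref{lem:reg-Lt}, each spatial derivative $\partial^\alpha$ costs $|\alpha|$, and pointwise evaluation consumes a further two. The hypotheses $n \geq l$ in (a) and $n \geq l+2$ in (b) are exactly the thresholds at which all three losses can be absorbed with a nonnegative remaining Sobolev index. Rigorously validating the iterated Leibniz rule is routine: one handles it a single derivative at a time, descending the scale $\cH_{\sol,n} \supset \cH_{\sol,n-1} \supset \cdots$ and invoking the standard product rule for vector-valued $\cC^1$ maps with the minimal regularity of $L_t$ supplied by Lemma~\ref{lem:reg-Lt} at each stage.
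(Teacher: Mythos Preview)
Your argument for part (a) is essentially identical to the paper's: induction on $l$, differentiating the ODE and invoking Lemma~\ref{lem:reg-Lt} to control each term in the Leibniz expansion. Your explicit bookkeeping of Sobolev indices is a welcome elaboration of what the paper states in one sentence.

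For part (b) you take a different route. The paper packages the passage from Banach-space regularity to classical regularity into the auxiliary Lemma~\ref{lemma: Ableitung translationen}(b): one writes $\phi(x,t)=\delta_0\bigl(T_\sol^{-x}\phi_t\bigr)$ and uses that $x\mapsto T_\sol^{-x}$ is $\cC^l$ in the strong operator topology on $\mathcal B(\cH_{\sol,n},\cH_{\sol,n-l})$, so that $\phi$ is a composition of a bounded linear functional with a jointly $\cC^l$ Banach-valued map. You instead apply $\partial^\alpha$ directly and then evaluate at $x$. This is correct in spirit, but the step you label ``standard calculus'' --- identifying $(\partial^\alpha\phi_t^{(k)})(x)$ with the classical derivative $\partial_x^\alpha\partial_t^k\phi(x,t)$ --- is precisely the content of Lemma~\ref{lemma: Ableitung translationen}(a), and needs its own argument (e.g.\ via the Fourier integral representation and differentiation under the integral sign, or via the translation trick). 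The paper's approach buys you this identification for free once the translation lemma is in hand; your approach trades that lemma for a somewhat hand-wavy appeal to density and ``standard calculus'' at the end. Both work, but if you keep your version you should make that identification explicit.
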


In the proof of Lemma~\ref{lemma: regularity} we rely on the following lemma, which we prove first.

\begin{lemma}[Derivatives of Translation Maps and Pointwise Evaluation]
    \label{lemma: Ableitung translationen}
\mbox{}
\begin{enumerate}
\item
For any $n,l\in\N_0$ such that $n\geq l$, the family
of translation maps
\begin{align}
T^{-\cdot}_\sol:\R^4\to  \mathcal{B}((\cH_{\sol,n},\|{\cdot}\|_{\sol,n}),
(\cH_{\sol,n-l},\|{\cdot}\|_{\sol,n-l})),\quad y\mapsto T^{-y}_\sol
\end{align}
is $l$ times continuously differentiable
w.r.t.\ the strong operator topology
with the derivatives
\begin{equation}
\partial_y^\alpha T^{-y}_\sol=T^{-y}_\sol\partial^\alpha
\in \mathcal{B}((\cH_{\sol,n},\|{\cdot}\|_{\sol,n}),
(\cH_{\sol,n-l},\|{\cdot}\|_{\sol,n-l})),
\quad
(\partial_y^\alpha T^{-y}_\sol)\psi=\partial^\alpha(T^{-y}_\sol\psi)
\end{equation}
for every \db multi-index \ed $\alpha\in\N_0^4$ with $|\alpha|\le l$
and $\psi\in \cH_{\sol,n}$.
\item
Given $k,n\in\N$ with $n\ge k+2$,
let $\phi_{(\cdot)}:\R\to(\cH_{\sol,n},\|{\cdot}\|_{\sol,n})$, $t\mapsto\phi_t$,
be a $k$ times continuously differentiable map.
Then the function
$\phi:\R^4\times\R\to\C^4$, $\phi(x,t)=\phi_t(x)$, which is well-defined by
Lemma~\ref{lemma: Pointwise evaluation}, is $k$ times continuously
differentiable.
In particular, if $\phi_{(\cdot)}:\R\to\bigcap_{n\in\N_0}\cH_{\sol,n}$
is smooth w.r.t.\ all norms $\|{\cdot}\|_{\sol,n}$, $n\in\N_0$,
then the function
$\phi$ is also smooth.
\end{enumerate}
\end{lemma}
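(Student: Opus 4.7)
The plan for part (a) is to transport the claim to the mass-shell side via Fourier transform, where translation becomes multiplication by $e^{-ipy}$, and then verify the differentiability with dominated convergence. The plan for part (b) is to combine (a) with Lemma~\ref{lemma: Pointwise evaluation} and the fact that translation is an isometry on every Sobolev space $\cH_{\sol,m}$.

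\emph{Part (a).} By the Poincar\'e compatibility (\ref{eq:translation-s-M}) and Definition~\ref{def:sobolev},
\begin{align}
T^{-y}_\sol = \F_{\sol\M}\, M_{-y}\, \F_{\M\sol},\qquad
\partial^\alpha = \F_{\sol\M}\,(-ip)^\alpha\, \F_{\M\sol},
\end{align}
where $(M_{-y}\psi)(p) := e^{-ipy}\psi(p)$. Since $\F_{\sol\M}$ is an isometric isomorphism from $(\cH_{\M,m},\|{\cdot}\|_{\M,m})$ onto $(\cH_{\sol,m},\|{\cdot}\|_{\sol,m})$ for every $m\in\N_0$, it suffices to show that $y\mapsto M_{-y}$ is $l$ times strongly continuously differentiable as a map into $(\cH_{\M,n-l},\|{\cdot}\|_{\M,n-l})$, with derivatives $\partial_y^\alpha M_{-y} = (-ip)^\alpha M_{-y}$. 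For the first-order step I would apply dominated convergence to
\begin{align}
\Bigl\|\tfrac{M_{-(y+he_\mu)}\psi - M_{-y}\psi}{h} - (-ip_\mu)M_{-y}\psi\Bigr\|_{\M,n-1}^2
= \sum_{|\beta|\leq n-1}\int_\M |p^\beta|^2\,\Bigl|\tfrac{e^{-ip_\mu h}-1}{h}+ip_\mu\Bigr|^2|\psi(p)|^2\,\tfrac{i_p(d^4p)}{m},
\end{align}
where the integrand tends pointwise to zero and, for $|h|\leq 1$, is dominated by $4|p_\mu|^2|p^\beta|^2|\psi(p)|^2$ (using $|e^{it}-1|\leq |t|$), which is integrable thanks to $\psi\in\cH_{\M,n}$ and $|\beta|+1\leq n$. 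Strong continuity of $y\mapsto (-ip)^\alpha M_{-y}$ as a map into $\cH_{\M,n-|\alpha|}\hookrightarrow \cH_{\M,n-l}$ follows from an analogous dominated-convergence estimate (with $|e^{it}-1|\leq 2$ furnishing a uniform dominating function), and higher-order derivatives are obtained by induction on $l$.

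\emph{Part (b).} The key identity, valid once $\phi_t\in\cH_{\sol,n}$ with $n\geq 2$, is
\begin{align}
\phi(x,t) = (T^{-x}_\sol\phi_t)(0) = \delta_0(T^{-x}_\sol\phi_t).
\end{align}
Because $M_{-x}$ has modulus one and commutes with multiplication by $p^\beta$, it is unitary on every $(\cH_{\M,m},\|{\cdot}\|_{\M,m})$; hence $T^{-x}_\sol$ is an isometry on every $(\cH_{\sol,m},\|{\cdot}\|_{\sol,m})$. For multi-indices $\alpha\in\N_0^4$ and $j\in\N_0$ with $|\alpha|+j\leq k$, differentiating first $j$ times in $t$ using the $C^k$-hypothesis and then $|\alpha|$ times in $x$ using part (a) yields
\begin{align}
\partial_x^\alpha\partial_t^j\phi(x,t) = \delta_0\bigl(T^{-x}_\sol\,\partial^\alpha\phi_t^{(j)}\bigr) = (\partial^\alpha\phi_t^{(j)})(x),\qquad \phi_t^{(j)}:=\partial_t^j\phi_t,
\end{align}
and $\delta_0$ is bounded on $\cH_{\sol,n-|\alpha|-j}$ by Lemma~\ref{lemma: Pointwise evaluation} since $n-|\alpha|-j\geq n-k\geq 2$. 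Joint continuity of the right-hand side in $(x,t)$ then follows by splitting
\begin{align}
&\|T^{-x'}_\sol\partial^\alpha\phi_{t'}^{(j)} - T^{-x}_\sol\partial^\alpha\phi_t^{(j)}\|_{\sol,n-|\alpha|-j}\\
&\qquad\leq \|\partial^\alpha(\phi_{t'}^{(j)}-\phi_t^{(j)})\|_{\sol,n-|\alpha|-j} + \|(T^{-x'}_\sol-T^{-x}_\sol)\partial^\alpha\phi_t^{(j)}\|_{\sol,n-|\alpha|-j},
\end{align}
both terms tending to zero by norm continuity of $t\mapsto\phi_t^{(j)}$ and strong continuity of translation from part (a); composing with the bounded operator $\delta_0$ delivers continuity of $\partial_x^\alpha\partial_t^j\phi$.

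\emph{Main obstacle.} The dominated-convergence calculations in part (a) are routine. The real subtlety lies in part (b): one must verify inductively that strong differentiability of the Banach-space valued map $x\mapsto T^{-x}_\sol\partial^\alpha\phi_t^{(j)}\in\cH_{\sol,n-|\alpha|-j}$, composed with the bounded point evaluation $\delta_0$, produces \emph{classical} mixed partial derivatives of the scalar function $\phi(x,t)$. This is handled by commuting $\delta_0$ with difference quotients and passing to the limit via boundedness of $\delta_0$, but the induction must be carried out carefully so that all intermediate regularity indices remain $\geq 2$, which is precisely what the hypothesis $n\geq k+2$ guarantees.
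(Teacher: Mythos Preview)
Your proposal is correct and follows essentially the same route as the paper: for part (a) you conjugate by $\F_{\sol\M}$ to reduce to multiplication by $e^{-ipy}$ on $\cH_{\M,n}$ and invoke dominated convergence, and for part (b) you write $\phi(x,t)=\delta_0(T^{-x}_\sol\phi_t)$, differentiate first in $t$ (using the $C^k$ hypothesis and unitarity of $T^{-x}_\sol$) and then in $x$ (using part (a)), landing in $\cH_{\sol,n-k}\subseteq\cH_{\sol,2}$ where $\delta_0$ is bounded. Your treatment is in fact slightly more explicit than the paper's, which simply states that the difference-quotient convergence follows from dominated convergence and that $\phi=\delta_0\circ\Phi$ with $\Phi(x,t)=T^{-x}_\sol\phi_t$ inherits the required regularity.
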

\begin{proof}
(a)  Using (\ref{eq: transl M}), we write $T^{-y}_\sol=\F_{\sol\M}T^{-y}_\M\F_{\M\sol}
=\F_{\sol\M}e^{-ipy}\F_{\M\sol}$ and
$\partial^\alpha=(-i)^{|\alpha|}\F_{\sol\M}p^\alpha\F_{\M\sol}$.
Because the operators $\F_{\sol\M}:
(\cH_{\M,n},\|{\cdot}\|_{\M,n})\to(\cH_{\sol,n},\|{\cdot}\|_{\sol,n})$
and
$\F_{\sol\M}:
(\cH_{\sol,n-l},\|{\cdot}\|_{\sol,n-l})\to(\cH_{\M,n-l},\|{\cdot}\|_{\M,n-l})$
are unitary, the claim is equivalent to showing that
\begin{align}
T^{-\cdot}_\M:\R^4\to  \mathcal{B}((\cH_{\M,n},\|{\cdot}\|_{\M,n}),
(\cH_{\M,n-l},\|{\cdot}\|_{\M,n-l})),\quad y\mapsto T^{-y}_\M
\end{align}
is $l$ times continuously differentiable
w.r.t.\ the strong operator topology
with the derivatives
\begin{equation}
\partial_y^\alpha T^{-y}_\M=(-i)^{|\alpha|}T^{-y}_\M p^\alpha.
\end{equation}
Convergence of the corresponding difference quotients
in the strong operator topology is a consequence of the dominated 
convergence theorem, and the claim follows.

(b) The map $\Phi:\R^4\times\R\to
(\cH_{\sol,k+2},\|{\cdot}\|_{\sol,k+2})$,
$\Phi(x,t)= T_\sol^{-x}\phi_t$ 
is $k$ times continuously partially differentiable w.r.t.\ the argument
$t$. Furthermore, viewing $\Phi$ as a map
$\Phi:\R^4\times\R\to
(\cH_{\sol,2},\|{\cdot}\|_{\sol,2})$,
part (a) implies that all
partial derivatives $\frac{\partial^l}{\partial t^l}\Phi(x,t)$,
$l=0,\ldots,k$, are $k$ times continuously differentiable w.r.t.\ to
the argument $x$. Finally, using Lemma~\ref{lemma: Pointwise evaluation}
and the fact
$\phi=\delta_0 \circ \Phi$
concludes the proof.
\end{proof} 

\begin{proof}[Proof of Lemma~\ref{lemma: regularity}]
%
Claim (a) follows by induction over $l$,
using Lemma \ref{lem:reg-Lt} and taking derivatives
w.r.t.\ $t$ of the right hand side of
the differential equation~(\ref{eq:ivp}).
Claim (b) follows directly from 
part (a) using Lemma~\ref{lemma: Ableitung translationen} (b).
\end{proof}

\subsubsection{Proof of Theorem~\ref{lemma: equiv}}
\label{sect: class sol}
\begin{proof}[Proof of Theorem~\ref{lemma: equiv}]
The key to the claimed equivalence between the 
Schr\"odinger picture and the interaction picture
is contained in the following calculation:
Let $\phi:\R^4\times\R\to\C^4$ 
be a smooth function that solves the free Dirac
equation in the first
argument: $i\slashed\partial_x\phi(x,t) = m\phi(x,t)$.
Let $\psi:\R^4\to\C^{4}$ be the smooth function given by 
$\psi(x)=\phi(x,\tau(x))$.
Then we have the following equivalences for any $x\in\R^4$:
  \begin{align}
&    i\partial_t \phi(x,t) = v_t(x) \slashed n_t(x) \slashed A(x) \phi(x,t)
    \text{ at $t=\tau(x)$ }
    \cr
    \Leftrightarrow \qquad &
i\slashed n_t(x) v_t(x)^{-1}\partial_t \phi(x,t) = \slashed A(x) \phi(x,t)
\text{ at } t=\tau(x)
\cr
\qquad \Leftrightarrow \qquad & i\slashed\partial_y \phi(x,\tau(y))
    =
    i\slashed\partial_y \tau(y) \partial_t \phi(x,t) 
    =
    \slashed A(y) \phi(x,\tau(y))
    \text{ at } y=x, \; t=\tau(y)
    \cr
    \Leftrightarrow \qquad &
    \left( i\slashed\partial_y - \slashed A(y) - m \right) \phi(x,\tau(y)) = -m \phi(x,\tau(y))
    \text{ at } y=x, \; t=\tau(y)
    \cr
    \Leftrightarrow \qquad &
    \left( i\slashed\partial_x + i\slashed\partial_y - \slashed A(y) - m \right) 
    \phi(x,\tau(y)) = \left( i\slashed\partial_x - m\right) \phi(x,\tau(y)) = 0
    \text{ at } y=x, \; t=\tau(y)
    \cr
    \Leftrightarrow \qquad &
    \left( i\slashed\partial_x - \slashed A(x) - m \right) 
    \psi(x) 
    =
    \left( i\slashed\partial_x - \slashed A(x) - m \right) 
    \phi(x,\tau(x))
    = 0
\cr
    \Leftrightarrow \qquad &
\text{$\psi$ solves the Dirac equation (\ref{eq:dirac-equation}) at $x$
with potential $A$.} 
\label{eq: Aequivalenzkette} 
\end{align}
Here we used $\slashed n^2=1$,
Definition (\ref{eq:partial-tau}), and
$\left(i\slashed\partial_x-m\right) \phi(x,t)=0$ 
because $\phi_t\in\Cs$.\\

\db\textsc{First, we prove part ({\rm a}) of the theorem.} \ed
Let $\psi\in\CA$ and
$\phi:\R^4\times\R$ be as in the hypothesis of the theorem.
In particular, using smoothness and the support property
of $\psi$, the function
$\R\ni t\mapsto T^{-te_0}_{\Sigma_t}(\psi|_{\Sigma_t})$
takes values in $\CSigma\subseteq \cH_{\Sigma,n}$ 
for any $n\in\N_0$ and is smooth with respect to
the norm $\|{\cdot}\|_{\Sigma,n}$.
Because $\F_{\sol\Sigma}:(\cH_{\Sigma,n},\|{\cdot}\|_{\Sigma,n})\to
(\cH_{\sol,n},\|{\cdot}\|_{\sol,n})$ is unitary, the function
$\R\ni t\mapsto \F_{\sol\Sigma}T^{-te_0}_{\Sigma_t}\psi|_{\Sigma_t} 
=T^{-te_0}_\sol \db \F_{\sol\Sigma_t} \ed \psi|_{\Sigma_t} =T^{-te_0}_\sol\phi_t\in
 \Cs\subseteq 
\cH_{\sol,n}$ 
is smooth w.r.t.\ the norm  $\|{\cdot}\|_{\sol,n}$ as well.
\db As \ed this holds for all $n\in\N_0$,  
Lemma~\ref{lemma: Ableitung translationen}~(a) implies
that $t\mapsto \phi_t= T^{te_0}_\sol T^{-te_0}_\sol\phi_t\in 
\Cs\subseteq \cH_{\sol,n}$
is also smooth w.r.t.\ $\|{\cdot}\|_{\sol,n}$ for any $n\in\N_0$.
Now Lemma~\ref{lemma: Ableitung translationen}~(b)
shows that $\phi:\R^4\times\R\to\C^4$ is smooth.
By definition, $\phi(x,t)$ solves the free Dirac equation in 
$x$-argument, 
fulfills the initial condition
(\ref{eq:initial cond interaction picture}),
and $\phi(x,\tau(x))=\psi(x)$ holds for all $x\in\R^4$.
The direction ``$\Leftarrow$'' in the sequence
(\ref{eq: Aequivalenzkette}) of equivalences 
and the assumption $\psi\in\CA$ \db implies \ed
\begin{equation}
\label{eq: partial-t-phi-auf-Sigma-t}
i(\partial_t \phi_t)|_{\Sigma_t}=
v_t\slashed n_t \slashed A \db \F_{\Sigma_t\sol} \ed \phi_t.
\end{equation}
Next we  examine the support properties of all $\phi_t$
for $t$ in any given compact interval $I$.
The set $K_I:=\supp \psi\cap 
\bigcup_{t\in I}\Sigma_t$ is compact, and for any $t\in I$,
the function $\phi_t=\db \F_{\sol \Sigma_t}\ed \psi|_{\Sigma_t}$
is supported in $\supp(\psi|_{\Sigma_t})+\causal\subseteq K_I+\causal$ by Lemma
\ref{lemma: cycle Sigma-s-M}.
Because of $\phi_t\in\Cs$ for any $t$ and because
the Dirac operator $i\slashed\partial_x-m$ in the $x$-argument
commutes with the derivative $\partial_t$, this implies
$\partial_t\phi(\cdot,t)\in\Cs$ for any $t\in\R$.
Thus, we can rewrite (\ref{eq: partial-t-phi-auf-Sigma-t}) in the form
$
i\db \F_{\Sigma_t \sol} \ed \partial_t \phi_t=
v_t\slashed n_t \slashed A\F_{\Sigma_t\sol}\phi_t.
$
Because $\db \F_{\Sigma_t \sol} \ed :\Cs\to\cC_{\Sigma_t}$ and
$\db \F_{\sol\Sigma_t} \ed :\cC_{\Sigma_t}\to\Cs$ are inverse
to each other by Theorem~\ref{thm:F_maps},
we conclude that formula (\ref{eq:interaction-evolution-Cs}) holds:
\begin{equation}
i\partial_t \phi_t=\db \F_{\sol \Sigma_t}\ed
v_t\slashed n_t \slashed A\db \F_{\Sigma_t \sol}\ed\phi_t=L_t\phi_t.
\end{equation}
To finish the proof of part (a) of the theorem,
we examine the support of $\psi_t$ uniformly in $t\in\R$.
First, because the vector potential $A$ is compactly supported,
we have $L_t=0$ for all $t\in\R$ with $|t|$ large enough.
This shows that $\phi_t$ does not depend on $t$ for $t\ge t_0$ 
for some large enough $t_1>0$. 
The same holds for $t\le t_1$ for some $t_0<0$ small enough.
Using the compact interval
$I=[t_0,t_1]$ we reconsider the compact set $K:=K_I$ from above.
It follows $\supp\phi_t\subseteq K+\causal$ for all $t\in\R$,
not only for $t\in I$. 
Thus, part (a) of the lemma is proven.\\

\textsc{Next, we prove part }(b). By the assumption $\supp\phi\subseteq(K+\causal)\times\R$,
it follows $\supp\psi\subseteq K+\causal$.
Because $\phi$ fulfills
the evolution equation
(\ref{eq:interaction-evolution-Cs}),
the direction ``$\Rightarrow$'' in the sequence
(\ref{eq: Aequivalenzkette}) of equivalences implies
$\psi\in\CA$.
The remaining claims
$\psi|_\Sigma=\chi_\Sigma$ and
$\phi_t=\db \F_{0 \Sigma_t}\ed\psi|_{\Sigma_t}$ follow immediately from the
definitions.
\end{proof}

\subsubsection{Proof of Theorem~\ref{dirac-existence-uniqueness}}
\label{sec:main-result-1}

\begin{proof}[Proof of Theorem~\ref{dirac-existence-uniqueness}]
  We take a fixed future-directed foliation of space-time $\mathbf{\Sigma}$.
  Define $\chi:=\F_{0\Sigma}\chi_\Sigma$. By 
  Lemma~\ref{lem:interaction-existence-uniqueness} there is a solution
  $\phi_{(\cdot)}$ of the initial value problem (\ref{eq:ivp}).\\

  \textsc{First, we prove that $\supp\phi_t \cap \Sigma_t\subseteq \supp\chi_\Sigma + \causal$ for all
  $t\in\R$.}
  Let us assume $t\geq 0$.  
  We define $\causal_+:=\{x\in\causal|\;x^0\geq 0\}$ as well as
  $\Cs(\chi_\Sigma,t):=\{\eta\in\Cs
    \;|\; \supp\F_{\Sigma_t\sol}\eta \subseteq \supp \chi_\Sigma + \causal_+
  \}$ and its closure $\cH_{\sol,n}(\chi_\Sigma,t)$ in $(\cH_{\sol,n},
  \Vert{\cdot}\Vert_{\sol,n})$, $n\in\N_0$. Furthermore, for all $T\geq 0$ we
  define
  \begin{align}
    X_{T,n} := \left\{
        \varphi_{(\cdot)} \in \cC([0,T],(\cH_{\sol,n},\|{\cdot}\|_{\sol,n})) 
      \;|\; 
      \forall\;0\leq s\leq T:\; \varphi_s \in \cH_{\sol,n}(\chi_\Sigma,s)
    \right\}
  \end{align}
  which is a Banach space w.r.t.\ the norm
  $\|\varphi_{(\cdot)}\|_{X_{T,n}}:=\sup_{t\in[0,T]}\|\varphi_t\|_{0,n}$.
  As seen in the proof of Lemma~\ref{lem:interaction-existence-uniqueness}, 
  for any given $T\geq 0$ and $n\in\N_0$ 
  the Volterra integral
  equation (\ref{eq:Volterra}) gives rise to a self-map $S:
  \cC([0,T],(\cH_{\sol,n},\|{\cdot}\|_{\sol,n}))\selfmaps$
  \begin{align}
      S:\varphi_{(\cdot)} \mapsto \left(t\mapsto S_t(\varphi_{(\cdot)}):=
    \F_{\sol\Sigma}\chi_\Sigma-i\int_0^t
    L_s\varphi_s\,ds\right), \qquad \text{for }\varphi_{(\cdot)}\in
    \cC([0,T],(\cH_{\sol,n},\|{\cdot}\|_{\sol,n})).
  \end{align}
  We further claim that $S:X_{T,n}\selfmaps$. To see this we need to show
  $S_t(\varphi_{(\cdot)})\in\cH_{\sol,n}(\chi_\Sigma,t)$
  for all $\varphi_{(\cdot)}\in X_{T,n}$ and $0\leq t\leq T$. 
  First, thanks to Theorem~\ref{thm:F_maps} (b), the fact that
  $\mathbf{\Sigma}$ is future oriented, and $\causal_++\causal_+=\causal_+$,
  one has
  $\Cs(\chi_\Sigma,s)\subseteq
  \Cs(\chi_\Sigma,t)$
  for  all $0\leq s\leq t$. This implies
  that $\F_{\sol\Sigma}\chi_\Sigma\in \Cs(\chi_\Sigma,t)$ for all
  $0\leq t\leq T$. Second, with the help of the representation of $L_s$ given in
  (\ref{eq:interaction-evolution-Cs}) and using
  Theorem~\ref{thm:F_maps} (a), we observe that
  $L_s \eta\in \Cs(\chi_\Sigma,s)$ for any $\eta\in \Cs(\chi_\Sigma,s)$ and
  $0\leq s\leq T$. Using that $L_t:(\cH_{\sol,n},\Vert{\cdot}\Vert_{\sol,n})
  \selfmaps$ is bounded, which was proven in Lemma~\ref{lem:reg-Lt}, and that $\Cs(\chi_\Sigma,s)$ is
  dense in $\cH_{\sol,n}(\chi_\Sigma,s)$ w.r.t.\
  $\Vert{\cdot}\Vert_{\sol,n}$,
  we conclude $L_s\varphi_s\in \cH_{\sol,n}(\chi_\Sigma,s)\subseteq
  \cH_{\sol,n}(\chi_\Sigma,t)$
  for any
  $\varphi_{(\cdot)}\in\cH_{\sol,n}(\chi_\Sigma,s)$ and $0\leq s\leq t\leq T$.
  This proves $S:X_{T,n}\selfmaps$. 
 
  In consequence, the corresponding unique fixed-point $\phi_{(\cdot)}$ found in
  Lemma~\ref{lem:interaction-existence-uniqueness} fulfills $\phi_{(\cdot)}\in
  X_{T,n}$. In particular, all $\phi_t|_{\Sigma_t}$, $t\geq 0$, are supported in
  $(\supp\chi_\Sigma+\causal_+)\cap\Sigma_t$. This together with an 
  analogous argument for $t\leq 0$ implies
  $\supp\phi_t|_{\Sigma_t}\subseteq\supp\chi_\Sigma+\causal$ for $t\in\R$.\\

  \textsc{Part (}i\textsc{).}
  From part (b) of Lemma~\ref{lemma: regularity} we know that $\phi_{(\cdot)}$
  is smooth. Hence, part (b) of Theorem~\ref{lemma: equiv} implies that the
  function $\psi$ given by
  $\psi(x):=\phi(x,\tau(x))$ is in $\CA$ with $\psi|_\Sigma=\chi_\Sigma$.
  Furthermore, we have
  $\supp\psi\subseteq \supp\chi_\Sigma+\causal$.\\
  
  \textsc{Part (}ii\textsc{).}
  Let $\widetilde\psi\in\cC^\infty(\R^4,\C^4)$ solve
  the Dirac equation (\ref{eq:dirac-equation}) for
  $\widetilde\psi|_\Sigma=\chi_\Sigma$.
  Suppose $\widetilde\psi\in\CA$. Then, by part (a) 
  of Theorem~\ref{lemma: equiv} together with the uniqueness statement of
  Lemma~\ref{lem:interaction-existence-uniqueness} we get
  $\widetilde\psi=\psi$.
  Finally, we show that
  $\widetilde\psi\in\CA$.
 For this we use a duality argument. 
  For any
  Cauchy surface $\Sigma'$ and $\phi\in\CA$ the pairing
  $\left\langle\phi,\widetilde\psi\right\rangle_{\Sigma'} := \int_{\Sigma'} \overline{\phi(x)}
    i_\gamma(d^4x)\widetilde\psi(x)$, cf. (\ref{eq:scalar-product-sol}),
  is well-defined because the integrand is smooth and has compact support, just
  as $\phi|_{\Sigma'}$. Recalling (\ref{eq:stokes-argument}),
  we find 
  $d[\overline{\phi(x)}i_\gamma(d^4x)\widetilde\psi(x)]=0$ so that for all Cauchy
surfaces $\Sigma'$ we have
$\langle\phi,\widetilde\psi\rangle_\Sigma=\langle\phi,\widetilde\psi\rangle_{\Sigma'}$. 
For any $\varphi_{\Sigma'}\in\cC_{\Sigma'}$ with $\supp\varphi_{\Sigma'} \cap
(\supp\chi_\Sigma + \causal) = \emptyset$ one has $\left( \supp\varphi_{\Sigma'}
+ \causal \right) \cap
\supp\chi_\Sigma = \emptyset$.
Consider $\phi\in\CA$ with $\phi|_{\Sigma'}=\varphi_{\Sigma'}$ the existence of
which is ensured by part (i). Then by part (i)
we know that $\supp \phi \subseteq \left( \supp \varphi_{\Sigma'} + \causal
\right)$, and hence, $\supp \phi \cap \supp \chi_\Sigma = \emptyset$.
We \db conclude \ed
\begin{align}
    \int_{\Sigma'} \overline{\varphi_{\Sigma'}(x)}
    i_\gamma(d^4x)\widetilde\psi(x)
    =\langle\phi,\widetilde\psi\rangle_{\Sigma'}
    =\langle\phi,\widetilde\psi\rangle_{\Sigma}
    =\int_{\Sigma} \overline{\phi(x)} i_\gamma(d^4x)\chi_\Sigma(x)=0
\end{align}
  as $\widetilde\psi|_\Sigma=\chi_\Sigma$. Since $\widetilde\psi$ is continuous we conclude that
  $\supp \widetilde\psi \subseteq \supp\chi_\Sigma + \causal$.
  Therefore, $\widetilde\psi\in\CA$.
\end{proof}

\subsubsection{\db Proof of Theorem~\ref{thm: Hilbert space time evolution}\ed}
\label{sec:evolution-hilbert}
\begin{proof}[Proof of Theorem~\ref{thm: Hilbert space time evolution}]
Theorem~\ref{dirac-existence-uniqueness} implies:
For any Cauchy surface $\Sigma$ and any vector potential
$A\in C^\infty_c(\R^4,\R^4)$, the restriction map
$\F_{\Sigma A}:\CA\to\CSigma$,  $\F_{\Sigma A}\psi=\psi|_{\Sigma}$
is a bijection. Let $\F_{A\Sigma}:\CSigma\to\CA$ denote its inverse.
Moreover, by the definition of the scalar product
on $\CA$ given in (\ref{eq:scalar-product-sol}), see also 
the argument~(\ref{eq:stokes-argument}), the restriction map
$\F_{\Sigma A}:\CA\to\CSigma$ is isometric. 
Taking the closure of this map, it has a unitary extension
$\F_{\Sigma A}:\cH_A\to\cH_\Sigma$ with a unitary inverse 
$\F_{A\Sigma}:\cH_\Sigma\to\cH_A$. The operator
$\F^A_{\Sigma'\Sigma}:=\F_{\Sigma' A}\F_{A\Sigma}:
\cH_\Sigma\to\cH_{\Sigma'}$ is then the unique unitary extension
of the isometric bijection
$\F^A_{\Sigma'\Sigma}=\F_{\Sigma' A}\F_{A\Sigma}:
\cC_\Sigma\to\cC_{\Sigma'}$.
 \end{proof}

\appendix

\section{Auxiliary Results}
\label{sec:appendix}
In this appendix, we prove some of the technical \db lemmas \ed used in the
rest of the paper. The first lemma deals with the inequalities
\db controlling \ed the geometry of the complexified mass shell.
\begin{lemma}[Geometric Properties of $\M_\C$]
\label{lemma: mass shell geometry}
For $p=(p^0,\vec p)\in\M_\C$,
one has the inequalities
\begin{align}
\label{p0 bound 2}
&|\im p^0|\le |\im \vec p|,
\\&
\label{p0 bound 3}
|\im \vec p|\le|\im p|\le \sqrt{2}|\im \vec p|, 
\\&
\label{p0 bound 5}
m\le |p|\le \sqrt{3}(m\vee|\vec p|),
\\&
\label{p0 bound 6}
m\vee
|\vec p|\le \constl{aux2}(\tfrac{m}{12}\vee|p^0|) e^{\epsilon |\im \vec  p|}
\quad\text{for } \epsilon>0,
\end{align}
with a constant $\constr{aux2}=\constr{aux2}(\epsilon m)>0$.
\end{lemma}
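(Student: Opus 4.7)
The four inequalities rest on decomposing any $p=(p^0,\vec p)\in\M_\C$ into real and imaginary parts: write $p^0=a+ib$ with $a,b\in\R$ and $\vec p=\vec u+i\vec v$ with $\vec u,\vec v\in\R^3$, so that $|\im p^0|=|b|$ and $|\im\vec p|=|\vec v|$. Splitting the defining identity $(p^0)^2-\vec p^{\,2}=m^2$ into real and imaginary parts yields
\begin{align}
\label{eq:plan-real}
a^2-b^2-\vec u^{\,2}+\vec v^{\,2}&=m^2,\\
\label{eq:plan-imag}
ab&=\vec u\cdot\vec v.
\end{align}
Everything that follows is elementary algebra on top of these two identities together with the Cauchy--Schwarz inequality.

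For (\ref{p0 bound 2}) I will argue by contradiction. The case $b=0$ is trivial; otherwise assume $\Delta:=b^2-\vec v^{\,2}>0$. Squaring (\ref{eq:plan-imag}) and applying Cauchy--Schwarz gives $a^2b^2\le\vec u^{\,2}\vec v^{\,2}$. Substituting $a^2=m^2+\Delta+\vec u^{\,2}$, which is (\ref{eq:plan-real}) rewritten, and simplifying reduces this to $(m^2+\Delta)b^2+\vec u^{\,2}\Delta\le 0$, whose left-hand side is strictly positive: contradiction. Inequality (\ref{p0 bound 3}) is then immediate since $|\im p|^2=b^2+\vec v^{\,2}$ lies between $\vec v^{\,2}$ and $2\vec v^{\,2}$ by (\ref{p0 bound 2}).

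For (\ref{p0 bound 5}), the lower bound follows from $|p|^2\ge a^2+\vec v^{\,2}=m^2+b^2+\vec u^{\,2}\ge m^2$, which is just (\ref{eq:plan-real}) rearranged. The upper bound uses (\ref{p0 bound 2}) to estimate $|p^0|^2=a^2+b^2=m^2+2b^2+\vec u^{\,2}-\vec v^{\,2}\le m^2+|\vec p|^{\,2}$, from which $|p|^2=|p^0|^2+|\vec p|^{\,2}\le m^2+2|\vec p|^{\,2}\le 3(m\vee|\vec p|)^2$.

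The main work is (\ref{p0 bound 6}), which I will handle by splitting on the size of $|p^0|$. From (\ref{eq:plan-real}) one computes $|\vec p|^{\,2}=\vec u^{\,2}+\vec v^{\,2}=a^2-b^2-m^2+2\vec v^{\,2}\le|p^0|^{\,2}+2|\im\vec p|^{\,2}$. In the regime $|p^0|\ge m/12$, this bound combined with the elementary inequality $1+C_1(\epsilon,m)t^2\le C_2(\epsilon,m)e^{2\epsilon t}$ (valid for all $t\ge 0$) delivers the claim, with $m$ absorbed into $12|p^0|$. In the regime $|p^0|<m/12$, one has $a^2+b^2<m^2/144$, and (\ref{eq:plan-real}) then forces $\vec v^{\,2}\ge\vec u^{\,2}+\tfrac{143}{144}m^2$; in particular $|\im\vec p|\ge\tfrac{\sqrt{143}}{12}m$ and $|\vec p|\le\sqrt 2\,|\im\vec p|$. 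Boundedness of $t\mapsto te^{-\epsilon t}$ on $[cm,\infty)$ then converts $\sqrt 2\,|\im\vec p|$ and $m$ into constant multiples of $\tfrac{m}{12}e^{\epsilon|\im\vec p|}$. The main technical nuisance will be tracking the constant $\constr{aux2}(\epsilon,m)$ uniformly across these two regimes.
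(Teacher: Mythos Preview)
Your plan is correct and complete: every step checks out, including the case split in (\ref{p0 bound 6}). The approach, however, differs from the paper's in two places worth noting.

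For (\ref{p0 bound 2}) the paper avoids contradiction and Cauchy--Schwarz altogether by exploiting the identity $2|\im z|^2=|z|^2-\re(z^2)$ for a complex number $z$ (and its componentwise analogue $2|\im\vec p|^2=|\vec p|^2-\re(\vec p^{\,2})$ for $\vec p\in\C^3$). Applying this to $z=p^0$ and using $(p^0)^2=\vec p^{\,2}+m^2$ gives
\[
2|\im p^0|^2=|\vec p^{\,2}+m^2|-\re(\vec p^{\,2})-m^2\le|\vec p^{\,2}|-\re(\vec p^{\,2})\le|\vec p|^2-\re(\vec p^{\,2})=2|\im\vec p|^2
\]
in one line. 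Your contradiction argument is fine but longer; the paper's trick is worth knowing because it recurs when one needs analogous bounds on $\re((p^0)^2)$ (which is exactly what drives (\ref{p0 bound 6})).

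For (\ref{p0 bound 6}) the paper dispenses with the case split. Keeping the $m^2$ you dropped, one has the slightly sharper $|\vec p|^2+m^2\le|p^0|^2+2|\im\vec p|^2$, whence $m\vee|\vec p|\le\sqrt{3}\,(|p^0|\vee|\im\vec p|)$. The right-hand side is then bounded uniformly by $\constr{aux2}(\tfrac{m}{12}\vee|p^0|)(1+\epsilon|\im\vec p|)$ using only $\tfrac{m}{12}\vee|p^0|\ge\tfrac{m}{12}$ and $\sup_{t\ge 0}t/(1+\epsilon t)=1/\epsilon$; the exponential then replaces $1+\epsilon t$. This is shorter than your two-regime argument, and it makes transparent that the constant depends only on the product $\epsilon m$, as the statement asserts (your constants also end up depending only on $\epsilon m$, but you should say so explicitly when you write it up).
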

\begin{proof}
    For the first claim (\ref{p0 bound 2}), we calculate
with the notation $\vec p^2=(p^1)^2+(p^2)^2+(p^3)^2$:
\begin{align}
&2|\im p^0|^2=|p^0|^2-\re((p^0)^2)
=|\vec p^2+m^2|-\re(\vec p^2+m^2)
\nonumber\\&
\le
|\vec p^2|+m^2-\re(\vec p^2)-m^2
=|\vec p^2|-\re(\vec p^2)
\le |\vec p|^2-\re(\vec p^2)=2|\im \vec p|^2.
\end{align}
This also yields claim (\ref{p0 bound 3}):
\begin{align}
|\im \vec p|^2\le|\im p|^2=|\im p^0|^2+|\im \vec p|^2\le 2|\im \vec p|^2.
\end{align}
The inequality on the right in Claim (\ref{p0 bound 5}) 
follows directly from 
\begin{align}
|p|^2=|p^0|^2+|\vec p|^2=|\vec p^2+m^2|+|\vec p|^2\le 2|\vec p|^2+m^2
\le 3(|\vec p|^2\vee m^2).
\end{align}
The bound $m\le |p|$ is a consequence of $m^2=|p^2|\le|p|^2$.
To \db finally \ed prove  (\ref{p0 bound 6}), 
we observe that $\vec p^2+m^2=(p^0)^2$ implies
\begin{align}
|\vec p|^2-2|\im \vec p|^2+m^2=\re(\vec p^2)+m^2=\re((p^0)^2)\le |p^0|^2,
\end{align}
and hence,
$|\vec p|^2+m^2\le|p^0|^2+2|\im \vec p|^2$.
We get $m^2\vee |\vec p|^2\le |\vec p|^2+m^2\le 3(|p^0|^2\vee |\im \vec p|^2)$
and thus
\begin{equation}
m\vee |\vec p|\le \sqrt{3}(|p^0|\vee|\im \vec  p|)
\le  \constr{aux2}(\tfrac{m}{12}\vee|p^0|)(1+\epsilon |\im \vec  p|)
\le \constr{aux2}(\tfrac{m}{12}\vee|p^0|) e^{\epsilon |\im \vec  p|}
\end{equation}
with a constant
$\constr{aux2}=\constr{aux2}(\epsilon m)>0$.
\end{proof}

\begin{proof}[Proof of Lemma~\ref {lemma: p0 nahe null}]
Heuristically, the idea is to choose $k(t)=(k^0(t),\vec k(t))$ 
by a Euler substitution
as a rational function of $t$ such that $\vec k'(0)$ is small
whenever $p^0$ is close to $0$. More precisely, we proceed as follows.
We abbreviate $\delta:=1/12$, $\epsilon:=1/3$, and
$\gamma=1/6$.
The only facts that we need about these positive constants
are the following:
\begin{align}
\label{def H}
H:=2(\sqrt{1-\delta^2}-\epsilon-\delta)>0,
\\
\label{sqrt delta epsilon}
\gamma\ge \frac{\epsilon(\epsilon+2\delta)}{H},
\\
\label{eps gamma delta}
\epsilon-2\delta-\gamma\ge 0.
\end{align}
Given  $p=(p^0,\vec p)\in\M_{\C}$ with 
\begin{equation}
\label{p0delta}
|p^0|\le \delta m,
\end{equation}
we observe 
\begin{equation}
\label{lower bound vec p} 
|\vec p|\ge m\sqrt{1-\delta^2}>0
\end{equation}
from 
\begin{equation}
|\vec{p}|^2\ge |\vec p^2|=|(p^0)^2-m^2|
\ge m^2-|p^0|^2\ge m^2(1-\delta^2).
\end{equation}
We set 
$\vec q=(q^1,q^2,q^3):=\vec p^*/|\vec p|$,
where $\vec p^*$ 
denotes the complex conjugate of $\vec p$, and take $r\in\C$ with
$r^2=\vec q^2$.
It fulfills 
$|r|\le 1$
because of $|r|^2=|\vec{q}^2|\le|\vec q|^2=1$.
We set for $t\in \C$:
\begin{equation}
h(t):=2(\vec p\vec q-\epsilon mrt-p^0r)=2(|\vec p|-\epsilon mrt-p^0r),
\end{equation}
where we have abbreviated $\vec p\vec q:=\sum_{j=1}^3 p^j q^j$.
Using  (\ref{lower bound vec p}), $|r|\le 1$,
(\ref{p0delta}), and (\ref{def H}),
we get the following for $t\in\bar\Delta$.
\begin{equation}
|h(t)|\ge 2(|\vec p|-\epsilon m-|p^0|)\ge mH>0
\end{equation}
In particular,  
\begin{equation}
g:\bar\Delta\to\C,\quad
g(t):=\frac{(\epsilon m t)^2+2p^0\epsilon m t}{h(t)}
\end{equation}
is well-defined and extends to a 
holomorphic function in a neighborhood of $\bar \Delta$.
Using (\ref{sqrt delta epsilon}), it
fulfills the following bound for $t\in\bar \Delta$.
\begin{equation}
\label{bound g}
|g(t)|\le m \frac{\epsilon(\epsilon+2\delta)}{H}\le \gamma m
\end{equation}
We introduce $k(t)=(k^0(t),\vec k(t))$ for $t\in\bar\Delta$
by
\begin{align}
k^0(t)&:=p^0+\epsilon m t+g(t)r,
\\
\vec k(t)&:=\vec p+g(t)\vec q
\end{align}
 Note that $k$ \db also \ed extends to a holomorphic function in a neighborhood of
$\bar\Delta$.  Expanding the squares and using $r^2=\vec q^2$
and $(p^0)^2-\vec p^2=m^2$
\db we observe that $k(t)\in\M_{\C}$ as \ed
\begin{align}
k^0(t)^2-\vec k(t)^2= (p^0)^2-\vec p^2+
g(t)^2r^2-g(t)^2\vec q^2
+(\epsilon m t)^2+2p^0\epsilon m t
-g(t)h(t)=m^2.
\end{align}
The claim $k(0)=p$ follows from $g(0)=0$.
To obtain  claim (\ref{claim ktp}), we estimate for $t\in\bar\Delta$:
\begin{align}
|\vec k(t)-\vec p|=|g(t)\vec q| =|g(t)|\le\gamma m.
\end{align}
Finally, claim (\ref{claim k0p}) for
$t\in\C$ with $|t|=1$ is obtained from (\ref{p0delta}),
(\ref{bound g}), $|r|\le 1$, and (\ref{eps gamma delta})
as follows. 
\begin{align}
|k^0(t)|\ge|\epsilon m t|-|p^0|-|g(t)r|
\ge \epsilon m-\delta m-\gamma m\ge \delta m.
\end{align}
\end{proof}

\begin{proof}[Proof of Lemma~\ref{lemma: Poincar'e transformations}]
It is obvious that $T_\Sigma^{-y}$ maps
$\CSigma$ to $\cC_{\Sigma-y}$, $T_{A}^{-y}$ maps
$\CA$ to $\cC_{A(\cdot+y)}$, and
$L_\Sigma^{(S,\Lambda)}$ maps $\CSigma$ to $\cC_{\Lambda\Sigma}$.

To see that $T_{\mathcal M}^{-y}$ maps $\CM$ to $\CM$,
we consider $\psi\in \CM$, its holomorphic extension $\Psi:\M_\C\to \C^4$
and $\alpha>0$ with $\|\psi\|_{\M,\alpha, n}<\infty$ for all $n\in\N$.
Using inequality (\ref{p0 bound 3}), 
we conclude for any $n\in \N$
\begin{align}
\|T_{\mathcal M}^{-y}\psi\|_{\M,\alpha+\sqrt{2}|y|, n}
=
\sup_{p\in\M_{\C}} |p|^{n-1}e^{-(\alpha+\sqrt{2}|y|) |\im \vec p|}
e^{y\im p}|\Psi(p)|
\le 
\|\psi\|_{\M,\alpha, n}<\infty.
\end{align}
This proves $T_{\mathcal M}^{-y}\psi\in\CM$.

To show that $L_{A}^{(S,\Lambda)}$ maps $\CA$ to $\cC_{
    \Lambda A(\Lambda^{-1}\cdot)}$,
we take any $\psi\in\CA$ and a compact set $K\subset\R^4$
with $\supp\psi\subseteq K+\causal$.
Because of $\Lambda\causal=\causal$, we infer
$\supp(L_{A}^{(S,\Lambda)}\psi)\subseteq \Lambda K+\causal$,
and $\Lambda K$ is compact. Furthermore, 
$\psi':=L_{A}^{(S,\Lambda)}\psi$ is smooth
and solves the Dirac equation subject to the transformed external potential.
To see this, using the notation $x'=\Lambda x$, $\partial'_\mu=\partial/\partial x'^\mu$,
and $\partial_\nu=\partial/\partial x^\nu$, we note
\begin{align}
        m\psi'(x')
    =
        S m \psi(x)
    =
        S \gamma^\nu
        \left(
            i\partial_\nu - A_\nu(\Lambda^{-1}x')
        \right) 
        \psi(x).
    \label{eq:cursed-identity}
\end{align}
Formula (\ref{relation Lambda S}) and the identity ${\delta^\mu}_\nu =
{\Lambda_\sigma}^\mu {\Lambda^\sigma}_\nu$ implies
    $S \gamma^\nu
    =
    {\Lambda_\mu}^\nu \gamma^\mu S,$
and therefore
\begin{align}
    (\ref{eq:cursed-identity})
    & =
        {\Lambda_\mu}^\nu \gamma^\mu S
        \left(
            i\partial_\nu - A_\nu(\Lambda^{-1}x')
        \right) 
        \psi(x)
    \\
    & =
        \gamma^\mu
        \left(
            i\partial_\mu' - {\Lambda_\mu}^\nu A_\nu(\Lambda^{-1}x')
        \right) 
        \psi'(x'),
\end{align}
where we have use $\partial_\mu' = {\Lambda_\mu}^\nu \partial_\nu$.
This shows $\psi'\in\cC_{\Lambda A(\Lambda^{-1}\cdot)}$.

Finally, to see that 
$L_{\mathcal M}^{(S,\Lambda)}$ maps $\CM$ to $\CM$, 
we take $\phi\in\CM$, its holomorphic extension
$\Phi:\M_\C\to\C^4$, any vector $p'=(p'^0,\vec p')\in\M_\C$, 
$\alpha>0$ with $\|\psi\|_{\M,\alpha, n}<\infty$
for all $n\in\N$,
and set 
$\alpha'=\sqrt{2}\|\Lambda^{-1}\|\alpha$ and 
$p=(p^0,\vec p)=\Lambda^{-1}p'\in\M_\C$.
We get the following with inequality (\ref{p0 bound 3}): 
$\alpha' |\im \vec p'|\ge \alpha' |\im p'|/\sqrt{2}
\ge \alpha' \|\Lambda^{-1}\|^{-1}|\im p|/\sqrt{2}
\ge \alpha|\im \vec p|$,
and hence, for all $n\in\N$:
\begin{align}
|p'|^{n-1}e^{-\alpha'|\im \vec p'|}|S\Phi(\Lambda^{-1}p')|
\le \|S\|\|\Lambda\|^{n-1}|p|^{n-1}
e^{-\alpha|\im p|}|\Phi(p)|\le
\|S\|\|\Lambda\|^{n-1}\|\phi\|_{\M,\alpha, n},
\end{align}
using Definition~(\ref{bound 2a}).
This proves $\|L_{\mathcal M}^{(S,\Lambda)}\phi\|_{\M,\alpha', n}<\infty$
and therefore $L_{\mathcal M}^{(S,\Lambda)}\phi\in\CM$.
It is obvious that the six maps
(\ref{eq: transl Sigma})--
(\ref{eq: Lorentz M})
are invertible with inverses 
$T_\Sigma^y$, 
$T_{A}^y$,
$T_{\mathcal M}^y$,
$L_\Sigma^{(S^{-1},\Lambda^{-1})}$,
$L_{A}^{(S^{-1},\Lambda^{-1})}$, and
$L_{\mathcal M}^{(S^{-1},\Lambda^{-1})}$, respectively.
Furthermore, they are isometric. This is obvious in the case of the three translation maps
(\ref{eq: transl Sigma})--
(\ref{eq: transl M}).
We consider now Lorentz transformations 
(\ref{eq: Lorentz Sigma})--(\ref{eq: Lorentz M}). 
For $\phi,\psi\in\CSigma$, $\phi'=L_\Sigma^{(S^{-1},\Lambda^{-1})}\phi$,
and  $\psi'=L_\Sigma^{(S^{-1},\Lambda^{-1})}\psi$
we get by (\ref{eq:scalar-product}),
(\ref{lorentz1}), and
the invariance relation (\ref{eq:invarianz-i_gamma}):
\begin{align}
\sk{\phi',\psi'}&
=\int_{\Lambda\Sigma}\overline{\phi'(x')}\,i_\gamma(d^4x')\,\psi'(x')
=
\int_{\Lambda\Sigma}\overline{S\phi(\Lambda^{-1}x')}
\,i_\gamma(d^4x')\,S\psi(\Lambda^{-1}x')
\\&
=
\int_{\Lambda\Sigma}\overline{\phi(\Lambda^{-1}x')}
\gamma^0 S^*\gamma^0 \,i_\gamma(d^4x')\,S\psi(\Lambda^{-1}x')
=
\int_{\Lambda\Sigma}\overline{\phi(\Lambda^{-1}x')}
S^{-1} \,i_\gamma(d^4x')\,S\psi(\Lambda^{-1}x')
\\&
=
\int_{\Sigma}\overline{\phi(x)}
\,i_\gamma(d^4x)\,\psi(x)
=\sk{\phi,\psi}.
\end{align}
Using (\ref{eq:scalar-product-sol}), the same calculation is valid for
$\phi,\psi\in\CA$, $\phi'=L_{A}^{(S^{-1},\Lambda^{-1})}\phi$,
and  $\psi'=L_{A}^{(S^{-1},\Lambda^{-1})}\psi$.

For $\phi,\psi\in\CM$, $\phi'=L_\M^{(S^{-1},\Lambda^{-1})}\phi$,
and  $\psi'=L_\M^{(S^{-1},\Lambda^{-1})}\psi$,
the fact $m_\Lambda\M=\M$, equations
(\ref{eq:sk_M}), (\ref{lorentz1}), and the invariance relation
(\ref{eq: invarianz-ipd4p})
yield
\begin{align}
&\sk{\phi',\psi'}
=
\int_\M\overline{S\phi(\Lambda^{-1}p')}
S\psi(\Lambda^{-1} p')\,\frac{i_{p'}(d^4 p')}{m}
=
\int_\M\overline{\phi(\Lambda^{-1}p')}
\psi(\Lambda^{-1} p')\,\frac{i_{p'}(d^4 p')}{m}
\\&
=
\int_\M\overline{\phi(p)}
\psi(p)\,\frac{i_p(d^4 p)}{m}
=
\sk{\phi,\psi}
.
\end{align}
Since the six maps (\ref{eq: transl Sigma})-(\ref{eq: Lorentz M}) are isometric bijections, it
follows that they extend to unitary maps on the respective Hilbert spaces.
\end{proof}

\begin{proof}[Proof of Lemma~\ref{lemma:gauge}]
    First, we show that for a given $\lambda\in\cal C^\infty_c(\R^4,\R)$ the
    multiplication operator $\Gamma_\lambda$ maps $\CA$ to
    $\cC_{A+\partial\lambda}$.
    To show this, we take a $\psi\in\CA$ and define $\psi'(x):=\Gamma_\lambda
    \psi(x)=e^{-i\lambda(x)}\psi(x)$ for $x\in\R^4$. Clearly, $\supp\psi'=\supp\psi$ and $\psi'\in\cal
    C^\infty(\R^4,\C^4)$. The wave function $\psi'$ fulfills
    the Dirac equation subject to the transformed potential
    $A':=A+\partial\lambda$:
    \begin{align}
        \left(
            i\slashed \partial - \slashed A'(x)
        \right)
        \psi'(x)
        &=
        \left(
            i\slashed \partial- \slashed A'(x)
        \right)
        e^{-i\lambda(x)}\psi(x)
        \\
        &=
        e^{-i\lambda(x)}(i\slashed \partial - \slashed A(x))\psi(x)
        \\
        &=
        e^{-i\lambda(x)}m\psi(x) = m \psi'(x).
    \end{align}
    It is obvious that the map (\ref{eq:gauge}) is invertible and isometric.
    Therefore, it extends uniquely to a unitary map
    $\Gamma_\lambda:\HA\to\cH_{A+\partial\lambda}$.
\end{proof}

\begin{proof}[Proof of Theorem~\ref{thm: compatibility Poincar'e}]
Equations (\ref{eq:translation-M-Sigma}) and (\ref{eq:translation-s-M})
are obvious.
To prove (\ref{eq:Lorentz-M-Sigma}),
let $\psi\in\CSigma$, $\psi'=L_\Sigma^{(S,\Lambda)}\psi$,
and $p'=\Lambda p\in\M$.
Using the invariance relation (\ref{eq:invarianz-i_gamma})
and the consequence $\slashed p'=S\slashed pS^{-1}$
of (\ref{lorentz1}) and (\ref{relation Lambda S}), we obtain
\begin{align}
[\F_{\M, \Lambda\Sigma}L_\Sigma^{(S,\Lambda)}\psi](p')
&=\frac{\slashed p'+m}{2m}(2\pi)^{-3/2}\int_{\Lambda\Sigma} 
e^{ip'x'}\,i_\gamma(d^4x')\,\psi'(x')
\\&
=
S\frac{\slashed p+m}{2m}(2\pi)^{-3/2}\int_{\Lambda\Sigma} 
e^{ip'x'}S^{-1}\,i_\gamma(d^4x')\,S\psi(\Lambda^{-1}x')
\\&
=
S\frac{\slashed p+m}{2m}(2\pi)^{-3/2}\int_{\Sigma} 
e^{ipx}\,i_\gamma(d^4x)\psi(x)
\\&
=
[L_{\mathcal M}^{(S,\Lambda)}\F_{\M\Sigma}\psi](p').
\end{align}
Finally, equation~(\ref{eq:Lorentz-s-M})
is an immediate consequence of the invariance relation
(\ref{eq: invarianz-ipd4p}).
\end{proof}

The next auxiliary lemma deals with multiplication operators
in the Sobolev space $\cH_{\Sigma,n}$.
Let $\Sigma$ be a Cauchy surface and $K$ an open and relatively compact
subset of $\R^3$. Given $n\in\N_0$ we endow
$\cC^n_c(K,\C^{4\times4})$ with the norm
\begin{align}
  \Vert Z \Vert_{K,n,\infty} 
  :=
  \sum_{\substack{\beta\in\N_0^3\\|\beta|\leq n}}
  \sup_{\vec x\in K} 
  \left| 
    D^\beta Z(t_\Sigma(\vec x),\vec x) 
  \right|,
\quad \text{ where }\quad D^\beta=D_1^{\beta_1}D_2^{\beta_2}D_3^{\beta_3}.
\end{align} 

\begin{lemma}
   \label{lem:multiplication-bounded}
   There is a well-defined bounded linear map
   \begin{align}
     (\cC^n_c(K,\C^{4\times4}),\|{\cdot}\|_{K,n,\infty}) 
     \to 
     ({\cal B}(\cH_{\Sigma,n},\Vert{\cdot}\Vert_{\Sigma,n}),
     \Vert{\cdot}\Vert_{\cH_{\Sigma,n}\to\cH_{\Sigma,n}}),
     \qquad
     Z \mapsto (\psi \mapsto Z\psi).
   \end{align}
\end{lemma}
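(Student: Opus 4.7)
The plan is to show $\|Z\psi\|_{\Sigma,n}\leq C\|Z\|_{K,n,\infty}\|\psi\|_{\Sigma,n}$ for some constant $C=C(\Sigma,K,n)$ on the dense subspace $\cC_\Sigma\subseteq \cH_{\Sigma,n}$, and then extend by density. Since $Z\psi\in\cC_\Sigma$ and $\|\phi\|_{\Sigma,n}^2=\sum_{|\beta|\le n}\|\partial^\beta\phi\|_{\HSigma}^2$ on $\cC_\Sigma$ (by unitarity of $\F_{\M\Sigma}$ together with the definition of $\partial_j$), it suffices to bound $\|\partial^\beta(Z\psi)\|_{\HSigma}$ for each $|\beta|\leq n$.

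The crucial input is formula (\ref{eq:partial-k}): on $\cC_\Sigma$, the abstractly defined $\partial_j$ agrees with the concrete first order differential operator $-i(\sum_k\alpha^\Sigma_{jk}D_k+\beta^\Sigma_j)$ whose coefficients are smooth matrix-valued functions depending only on $\Sigma$. Iterating, I write $\partial^\beta = \sum_{|\gamma|\leq|\beta|} c^\Sigma_{\beta,\gamma}(x)D^\gamma$ on $\cC_\Sigma$ with smooth matrix-valued coefficients $c^\Sigma_{\beta,\gamma}$; since each $D_k$ is local and $Z\psi$ is supported in $K\cap\Sigma$, only the values of the $c^\Sigma_{\beta,\gamma}$ on the compact set $K\cap\Sigma$ enter the estimate, and these are uniformly bounded. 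Because by (\ref{eq:D-k}) each $D_k$ is just the ordinary partial derivative of the pullback $\vec x\mapsto\psi(t_\Sigma(\vec x),\vec x)$, the usual Leibniz rule applies and gives
\[
D^\gamma(Z\psi)=\sum_{\gamma'+\gamma''=\gamma}\binom{\gamma}{\gamma'}(D^{\gamma'}Z)(D^{\gamma''}\psi),
\]
and, using $|D^{\gamma'}Z|\leq \|Z\|_{K,n,\infty}$ on $K\cap\Sigma$, I obtain the pointwise bound
\[
|\partial^\beta(Z\psi)(x)|\leq C_1\|Z\|_{K,n,\infty}\,\mathbf{1}_{K\cap\Sigma}(x)\sum_{|\gamma''|\leq n}|D^{\gamma''}\psi(x)|.
\]

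Squaring and integrating then reduces matters to bounding $\|D^{\gamma''}\psi\|_{L^2(K\cap\Sigma)}^2$ by $\|\psi\|_{\Sigma,n}^2$. Since $D_k=\partial_k+(\partial_k t_\Sigma)\partial_0$ and $t_\Sigma$ together with all its derivatives is bounded on the compact set $K$, iterating the product rule expresses $D^{\gamma''}=\sum_{|\alpha|\leq|\gamma''|}d^\Sigma_{\gamma'',\alpha}(x)\partial^\alpha$ with coefficients uniformly bounded on $K$, which yields precisely this bound. Combining everything proves the inequality on $\cC_\Sigma$, and density of $\cC_\Sigma$ in $\cH_{\Sigma,n}$ then provides the unique bounded extension. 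I expect the main obstacle to be purely bookkeeping: carefully tracking through the three iterated expansions, namely $\partial^\beta$ into $D$-derivatives, the Leibniz expansion of $D^\gamma(Z\psi)$, and $D^{\gamma''}$ back into $\partial^\alpha$'s, in order to verify that all emerging constants depend only on $\Sigma$, $K$, and $n$ (and not on $\psi$ or $Z$ beyond the stated norms).
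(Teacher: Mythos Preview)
Your proposal is correct and follows essentially the same approach as the paper: both rely on formula~(\ref{eq:partial-k}) to express $\partial_j$ as a first-order differential operator in the $D_k$, apply the Leibniz rule to split derivatives between $Z$ and $\psi$, and then convert back to $\partial^\alpha\psi$ via $D_k=\partial_k+(\partial_k t_\Sigma)\partial_0$. The only cosmetic difference is that the paper performs all three steps at once for a single $\partial_j$ and then iterates the resulting identity $i\partial_j(Z\psi)=\sum_{|\alpha|\le 1}M_{j,\alpha}\partial^\alpha\psi$, whereas you iterate each of the three conversions separately; the bookkeeping is organized differently but the content is identical.
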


\begin{proof}
  For any $j=0,1,2,3$ and using (\ref{eq:partial-k}) and (\ref{eq:D-k}) we compute
  \begin{align}
    &i \partial_j Z \psi 
    =
    \sum_{k=1}^3 \alpha^\Sigma_{jk} D_k ( Z\psi )
    + \beta^\Sigma_j Z\psi
    \cr
    &= \sum_{k=1}^3 \alpha^\Sigma_{jk}Z \partial_k\psi+
    \sum_{k=1}^3 \alpha^\Sigma_{jk}\frac{\partial t_\Sigma}{\partial x^k} Z
       \partial_0\psi+
\left[\sum_{k=1}^3 \alpha^\Sigma_{jk} (D_kZ)+ \beta^\Sigma_jZ\right]\psi
    =: \sum_{\substack{\alpha\in\N_0^4\\|\alpha|\le 1}} 
     M_{\beta,\alpha}\partial^\alpha\psi
\end{align}
with $\beta=(\delta_{ij})_{i=0,1,2,3}$.
Iterating this formula for a general multi-index $\beta$, 
with $Z$ replaced by $M_{\beta,\alpha}$ and with
$\psi$ replaced by $\partial^\alpha\psi$ in the induction step,
yields
\begin{align}
i^{|\beta|}\partial^\beta Z \psi =\sum_{\substack{\alpha\in\N_0^4\\|\alpha|\le n}} 
     M_{\beta,\alpha}\partial^\alpha\psi,\quad
\quad
\|\partial^\beta Z \psi\| \le\sum_{\substack{\alpha\in\N_0^4\\|\alpha|\le n}} 
     \|M_{\beta,\alpha}\|_\infty\|\partial^\alpha\psi\|,\quad
\end{align}
 for $\beta\in\N_0^4$, $|\beta|\le n$,
with some $\C^{4\times 4}$-valued continuous functions $M_{\beta,\alpha}$ 
compactly supported in $K$ and depending
linearly on $D^\gamma Z$, $|\gamma|\le n$. Taking
the square and summing over $|\beta|\le n$  yields the claim.
\end{proof}


\end{document}